\numberwithin{equation}{section}
\def\rr{{\Bbb R}}
\def\f12{\frac 1 2}
\newcommand{\divv}{\mbox{$\div \mkern-16mu /$\,\,}}
\def\f12{\frac 1 2}
\newcommand{\cala}{\mathcal{A}}
\newcommand{\si}{\Sigma}
\newcommand{\rrr}{\mathcal{R}}
\newcommand{\meg}{\geqslant}
\newcommand{\mik}{\leqslant}
\def\div{\text{div}}
\newcommand{\nabb}{\mbox{$\nabla \mkern-13mu /$\,}}
\newcommand{\lapp}{\mbox{$\triangle \mkern-13mu /$\,}}
\newtheorem{remark}{Remark}[section]
\newtheorem{theorem}{Theorem}[section]
\newtheorem{proposition}{Proposition}[subsection]
\begin{document}

\title[The trapping effect on degenerate horizons]{The trapping effect on degenerate horizons
\bigskip\\ \footnotesize{Y. Angelopoulos, S. Aretakis, D. Gajic}}

\date{December 29, 2015}

\begin{abstract}

We show that \textit{degenerate horizons} exhibit a new \textit{trapping effect}. Specifically, we obtain a non-degenerate Morawetz estimate for the wave equation in the domain of outer communications of extremal Reissner--Nordstr\"{o}m up to and including the future event horizon. We show that such an estimate requires 1) a higher degree of regularity for the initial data, reminiscent of the regularity loss in the high-frequency trapping estimates on the photon sphere, and 2) the vanishing of an explicit quantity that depends on the restriction of the initial data on the horizon.  The latter condition demonstrates that degenerate horizons exhibit a new $L^{2}$ concentration phenomenon (namely, a \textit{global} trapping effect, in the sense that this effect is not due to individual underlying null geodesics as in the case of the photon sphere). We moreover uncover a new \textit{stable} higher-order trapping effect; we show that higher-order estimates do not hold regardless of the degree of regularity and the support of the initial data. We connect our findings  to  the spectrum of the stability operator in the theory of marginally outer trapped surfaces (MOTS). Our methods and results play a crucial role in our upcoming works on linear and non-linear wave equations on extremal black hole backgrounds.

\end{abstract}

\maketitle

\tableofcontents

\section{Introduction}
\label{sec:Introduction}

\subsection{Overview}
\label{sec:Introduction1}

\textit{Black holes} are one of the most celebrated predictions of general relativity and as such their stability properties are of fundamental importance. The first step in resolving the non-linear stability problem for black holes is to establish quantitative decay estimates for the wave equation
\begin{equation}
\Box_{g}\psi=0
\label{we1}
\end{equation}
on fixed black hole backgrounds. One of the main difficulties in the analysis of the wave equation on such backgrounds is the so-called \textit{trapping effect} due to the existence of a family of \textit{trapped null geodesics} in the domain of outer communications whose limit point is the future timelike infinity. In the well-known \textit{Kerr--Newman} family of black holes, there is a family of  trapped null geodesics with constant Boyer-Lindquist coordinate $r$. In the special subfamily consisting of Schwarzschild backgrounds these trapped null geodesics span the hypersurface $r=3M$ known as the \textit{photon sphere}. Here $M$ is the mass parameter. From every other point in the Schwarzschild exterior region there is a codimension-one subset of future-directed null directions whose corresponding geodesics approach the photon sphere, and all other null geodesics either cross the event horizon $\mathcal{H}=\left\{r=r_{hor}\right\}$ or terminate at null infinity. 

The trapped null geodesics pose a well-known \textit{high-frequency obstruction}, known as \textit{the trapping effect},  for the existence of a \textit{non-degenerate  Morawetz estimate} for the wave equation of the form
\begin{equation}
\int_{0}^{\tau}\int_{\Sigma_{t}\cap\left\{r_{hor}<R_{1}\leq r\leq R_{2}\right\}}|\partial\psi|^{2} \, dg_{\Sigma_{t}}\, dt\leq \, C\int_{\Sigma_{0}}|\partial\psi|^{2}\, dg_{\Sigma_{0}},
\label{nottrue}
\end{equation}
where the trapped null geodesics exist in the region $\left\{r_{hor}<R_{1}\leq r\leq R_{2}\right\}$. The obstruction for \eqref{nottrue} originates from the existence of  high-frequency solutions to the wave equation with finite initial energy which are supported in a neighborhood of trapped null geodesics for arbitrarily long times.  This phenomenon has long been studied in the context of the obstacle problem for the wave equation in Minkowski space, where the analogue of trapped null geodesics are null lines which reflect off the obstacle's boundary. Recently, Sbierski \cite{janpaper}, building on previous work of Ralston \cite{ralston1} on the Gaussian beam approximation, showed that, on general Lorentzian manifolds, the energy at time $t$ of the localized high frequency solutions is comparable to the energy of the underlying null geodesic at time $t$. The fact that the energy of the trapped null geodesics in the Kerr--Newman family is constant immediately contradicts estimate \eqref{nottrue}. 
On the other hand, it can be shown (see, for instance, \cite{lecturesMD,part3,semyon2,jaredw} and references there-in) that on sub-extremal backgrounds estimate \eqref{nottrue} holds if the right hand side loses derivatives (i.e~includes higher order energies).

Having introduced the trapping effect on the photon sphere we next consider the event horizon. The event horizon is a null hypersurface ruled by null geodesics, known as the \textit{null generators}. In the \textit{sub-extremal} case, however, one  can show a local integrated decay Morawetz estimate in a neighborhood of the event horizon without any loss of differentiation (see \cite{lecturesMD}). This is possible because the energy of the null generators \textit{decays exponentially} in time $t$, in view of the so-called \textit{redshift effect} which in turn is based on the positivity of the \textit{surface gravity} (see Section \ref{background} for an introduction to these notions). Hence, no trapping takes place on the event horizon of sub-extremal black holes, even though the latter is ruled by null geodesics.

Nonethelesss, the situation is drastically different for \textit{degenerate horizons}, which are null hypersurfaces with vanishing surface gravity. This is the case for the event horizon of \textit{extremal} black holes. In this case the null generators have energy that is  constant in time. Hence, Sbierski's result implies that loss of regularity is a \textit{necessary} condition for the existence of a non-degenerate Morawetz estimate up to and including the event horizon.  However, until now there had not been any works providing \textit{sufficient} conditions. In fact, it was not known if generic solutions satisfy a Morawetz estimate.

   In a series of papers \cite{SA10, aretakis1,aretakis2} the second author initiated the mathematical study of the wave equation on extremal black holes and obtained a \textit{mixture of stability and instability results}. Specifically, it was shown that solutions to the wave equation decay in time towards the future, first-order derivatives remain bounded but \textit{do not decay} along the event horizon whereas higher order derivatives asymptotically \textit{blow up} along the event horizon. The derivatives are here taken with respect to the Eddington--Finkelstein coordinate system $(v,r,\theta,\phi)$; indeed, it is the $\partial_{r}\psi$ derivative of solutions $\phi$ to the wave equation that does not decay along the horizon. Note that this is a natural derivative to consider since in the extremal case $v$ is an affine parameter along null geodesics (this is in contrast to the situation in subextremal black holes, where the corresponding parameter is logarithmically related to the geodesic parameter). Hence, the previous instability results hold in Gaussian (geodesic) coordinates and are not merely an effect of imposing coordinates based on the Killing vector and observers at null infinity.

	We remark that these stability and instability results do \textbf{not} suffice to determine whether a non-degenerate Morawetz estimate up to and including the event horizon holds for general smooth solutions to the wave equation on extremal black holes (see  Section \ref{sec:PreviousWorks} for more details). It is important to emphasize that such a Morawetz estimate would play a crucial role in rigorously testing the validity of the aforementioned stability and instability features in non-linear settings. 

	\subsection{Summary of results and techniques}
	\label{sec:OverviewOfResultsAndTechniques}

In this paper we derive necessary and sufficient conditions for the existence of a non-degenerate Morawetz estimate in the domain of outer communications of extremal Reissner--Nordstr\"{o}m backgrounds up to and including the event horizon. Such estimates play a fundamental role in the analysis of non-linear wave equations and hence  necessary and sufficient conditions for their existence are relevant for the study of the black hole stability problem.

\medskip

\textbf{Summary of results}

\medskip

We obtain a complete characterization of the trapping effect on the event horizon of extremal Reissner--Nordstr\"{o}m. Specifically, we obtain a non-degenerate Morawetz estimate (see Theorem \ref{theo1} in Section \ref{sec:TheMainTheorems}) in the domain of outer communications of extremal Reissner--Nordstr\"{o}m \textit{up to and including the future event horizon}. We show that such an estimate requires
\begin{enumerate}
	\item \textit{a higher degree of regularity for the initial data,}
	\item \textit{the vanishing of the quantity $H[\psi]$ given by \eqref{defh}. We remark that $H[\psi]$ depends only on the restriction of the initial data on the horizon.}
\end{enumerate}
Note that the first condition is reminiscent of the regularity loss in the high-frequency trapping estimates on the photon sphere. In fact, we will show a result in the converse direction (see Theorem \ref{theo3} in Section \ref{sec:TheMainTheorems}):
\begin{itemize}
	\item \textit{if a weighted higher-order norm of the initial data is infinite then no non-degenerate Morawetz estimate holds.}
\end{itemize}
This implies that the loss of regularity of our result is \textit{optimal}. We also prove that the vanishing of the quantity $H[\psi]$ is \textit{necessary} in the following sense (see Theorem \ref{theo2} in Section \ref{sec:TheMainTheorems}):
 \begin{itemize}
	\item \textit{if the quantity $H[\psi]$ given by \eqref{defh} is initially non-zero then no non-degenerate Morawetz estimate holds, regardless of the degree of regularity of the initial data.}
\end{itemize}
The above result demonstrates that degenerate horizons exhibit a new \textit{global trapping effect}, in the sense that this effect is not due to individual underlying null geodesics as in the case of the photon sphere. This global trapping effect is closely related to the spectrum of the \textit{stability operator} for the sections of the event horizon (see Section \ref{sec:RelationWithTheoryOfMOTS}).

Furthermore, a new \textit{stable higher order trapping effect} is uncovered (see Theorem \ref{theo4} in Section \ref{sec:TheMainTheorems}), in the sense that 
\begin{itemize}
	\item \textit{higher order estimates up to and including the event horizon generically do \textbf{not} hold regardless of the degree of regularity and the support of the initial data.}
\end{itemize}

	\medskip
	
\textbf{	Summary of techniques}
	
	\medskip

We decompose $\psi$ as follows (see Section \ref{sec:PoincarEInequality})
\[\psi=\Big(\int_{\mathbb{S}^{2}}\psi\Big)+\Big(\psi-\int_{\mathbb{S}^{2}}\psi\Big) \]
 and for each componenet we use novel physical space\footnote{No decomposition in time frequencies is needed.} vector field multipliers. 
	
 For the spherical mean we use the singular vector field 
\[S=-\frac{1}{r-r_{hor}}\cdot Y\]
as a \textit{multiplier} vector field (see Section \ref{flux} for an introduction to the vector field method). Here $r_{hor}$ is the radius of the event horizon and $Y$ is a regular translation-invariant \textit{transversal} to the event horizon  vector field (see Section \ref{background}).  Clearly, $S$ is singular on the event horizon. The choice of this singular multiplier is motivated by the spectral properties of the stability operator for MOTS (see Section \ref{sec:RelationWithTheoryOfMOTS}). It seems that it is the first time that properties of the stability operators are used to yield estimates for hyperbolic equations. 

  We apply the divergence identity in regions $\mathcal{A}_{r_{0}}$ (see Section \ref{background}, figure \ref{figure2}) which do not include the event horizon if $r_{0}>r_{hor}$ and study the limiting behavior of the resulting equations are $r_{0}\rightarrow r_{hor}$. We uncover a special structure of the geometry of degenerate horizons to show that all the resulting singular terms can be estimated by singular norms of the initial data. The boundedness of these singular norms (which is an assumption on the initial data only) implies a non-degenerate Morawetz estimate. 

One of the most critical terms is the integral $\int_{\mathcal{A}_{r_{0}}}\frac{1}{r-r_{hor}}T\psi\cdot Y\psi$, where $T$ is the stationary Killing field (see \eqref{i4}). Clearly the integral of the integrand quantity $\frac{1}{r-r_{hor}}T\psi\cdot Y\psi$ over a spacial slice $\Sigma_{\tau}$ is infinite. However, we were able to show that if we first integrate $\frac{1}{r-r_{hor}}T\psi\cdot Y\psi$ over time and then over space then the resulting expression has a finite limit as $r\rightarrow r_{hor}$. In Section \ref{sec:Discussion} it is shown that in the sub-extremal case the corresponding limit is infinite demonstrating thus a new distinctive feature of degenerate horizons.  

For the component $\psi-\int_{\mathbb{S}^{2}}\psi$,  we use the vector field $Y$ as a \textit{commutator} vector field (see Section \ref{flux}) and the  \textit{degenerate} vector field $\bar{\partial}=-(r-r_{hor})\cdot Y$ as multiplier vector field. Our estimates rely on appropriate use of Hardy and Poincar\'{e} inequalities and a novel  structure of the wave equation in a neighborhood of the degenerate event horizon
which yields various cancellations of the most dangerous terms. 	It is worth contrasting this with the trapping estimate at the photon sphere where one is required to commute with either the Killing field $T$ or the standard Killing fields $\Omega_{i}, i=1,2,3$ of the sphere. All these vector fields are tangential to the photon sphere.

We note that our methods and results play a crucial role in our upcoming works on linear and non-linear wave equations on extremal black hole backgrounds.

\subsection{Previous results}
\label{sec:PreviousWorks}

To put our results into context, we briefly summarize previous work on the wave equation on black hole backgrounds. The study of the wave equation (\ref{we1}) on black hole backgrounds has a long history, starting in 1957 with the pioneering work of Regge--Wheeler \cite{regge} on the mode stability of (\ref{we1}) on Schwarzschild ($a=0$). The first quantitative result in Schwarzschild was obtained in 1987 by Kay--Wald \cite{Kay1987}, who proved uniform boundedness of solutions to the wave equation. In the last two decades there have been many (partial) results on the asymptotic behaviour of linear waves in the domain of outer communication of sub-extremal Kerr backgrounds, for which $|a|<M$, culminating in the proof of polynomial decay in time for solutions to (\ref{we1}) on the full sub-extremal range $|a|<M$ of Kerr backgrounds by Dafermos--Rodnianski--Shlapentokh-Rothman \cite{part3}; see also \cite{part3, lecturesMD,blukerr,tataru2} for a comprehensive list of references to earlier works. We also refer the reader to the inverse logarithmic decay results for solutions to the wave equation on spacetimes exhibiting \textit{stable} trapping  \cite{gusmu1,molog,keir}. 

The rigorous study of mathematical properties of the wave equation on extremal black holes was initiated in a series of papers \cite{aretakis1,aretakis2,aretakis3,aretakis4} where a mixture of stability and instability results was presented. Subsequent works of  Reall, Murata, Lucietti et al  \cite{harvey2013,murata2012,hm2012,hj2012} studied in a series of papers these instability properties on more general linear and non-linear settings.  The authors of \cite{harvey2013} numerically investigated spherically symmetric perturbations of extremal Reissner--Nordstr\"{o}m in the context of the Cauchy problem for the Einstein--Maxwell-scalar field system and discovered that, even for arbitrarilly
small initial perturbations, derivatives of the scalar field grow (in complete agreement with the work in the linear case). 
Ori \cite{ori2013} and Sela \cite{sela} have numerically investigated the relation of long time dynamics of scalar fields and the conservation laws.   Rigorous non-linear results have appeared in \cite{aretakis2013,yannis1}. Further applications and extensions have been presented in  \cite{gusmu1,iapwnes, dd2012,aretakis2012}. For work in the interior of extremal black holes we refer to \cite{gajic}.

\section{Background on the geometry of extremal R--N}
\label{background}

\subsection{The extremal R--N spacetime}
\label{sec:TheExtremalRNSpacetime}

We define the extremal Reissner--Nordstr\"om spacetime as the Lorentzian manifold $(\mathcal{M},g)$, where $\mathcal{M}=\mathbb{R}\times \mathbb{R}_{+}\times \mathbb{S}^2$. We introduce the \emph{ingoing Eddington--Finkelstein coordinate chart} $(v,r,\theta,\varphi)$, where $v\in \mathbb{R}$, $r\in \mathbb{R}_+$ and $(\theta,\varphi)$ are the standard polar coordinates on the round sphere $\mathbb{S}^2$. In these coordinates, the metric can be expressed as:
\begin{equation*}
g=-Ddv^2+2dvdr+r^2(d\theta^2+\sin^2\theta d\varphi^2),
\end{equation*}
where $D:= r^{-2}(M-r)^2$. This metric, along with the associated Faraday tensor
\[F=M/r^2 dr \wedge dv+ M \sin \theta d\theta \wedge d\phi,\]  is a solution to the Einstein--Maxwell equations. 

The null hypersurface $\mathcal{H}:=\{r=M\}$ is called the \emph{future event horizon}. The region \\$\mathcal{M}_{\rm ext}\doteq \mathcal{M}\cap \{r>M\}$ is known as the \emph{exterior region} or \emph{domain of outer communications} of extremal Reissner--Nordstr\"om. The \emph{interior region} $\mathcal{M}_{\rm int} \doteq \mathcal{M}\cap \{0<r<M\}$ will not feature in the remainder of this paper.

We will refer to the coordinate $v$ as the \emph{advanced null coordinate}. We can introduce a \emph{retarded null coordinate} $u$, given by $u=v-2r_*$, where $r_*$ is defined as a solution to $\frac{dr_*}{dr}=D^{-1}$ and is given explicitly by
\begin{equation*}
r_*(r)=\frac{M^2}{M-r}+2M\log(M-r)+r.
\end{equation*}

The tuple $(u,v,\theta,\varphi)$ constitutes a coordinate chart on $\mathcal{M}_{\rm ext}$, commonly referred to as \emph{Eddington--Finkelstein double-null coordinates}. In these coordinates, the metric is given by
\begin{equation*}
g=-Ddudv+r^2(d\theta^2+\sin^2\theta d\varphi^2).
\end{equation*}

\begin{figure}[h!]
\begin{center}
\includegraphics[width=2.in]{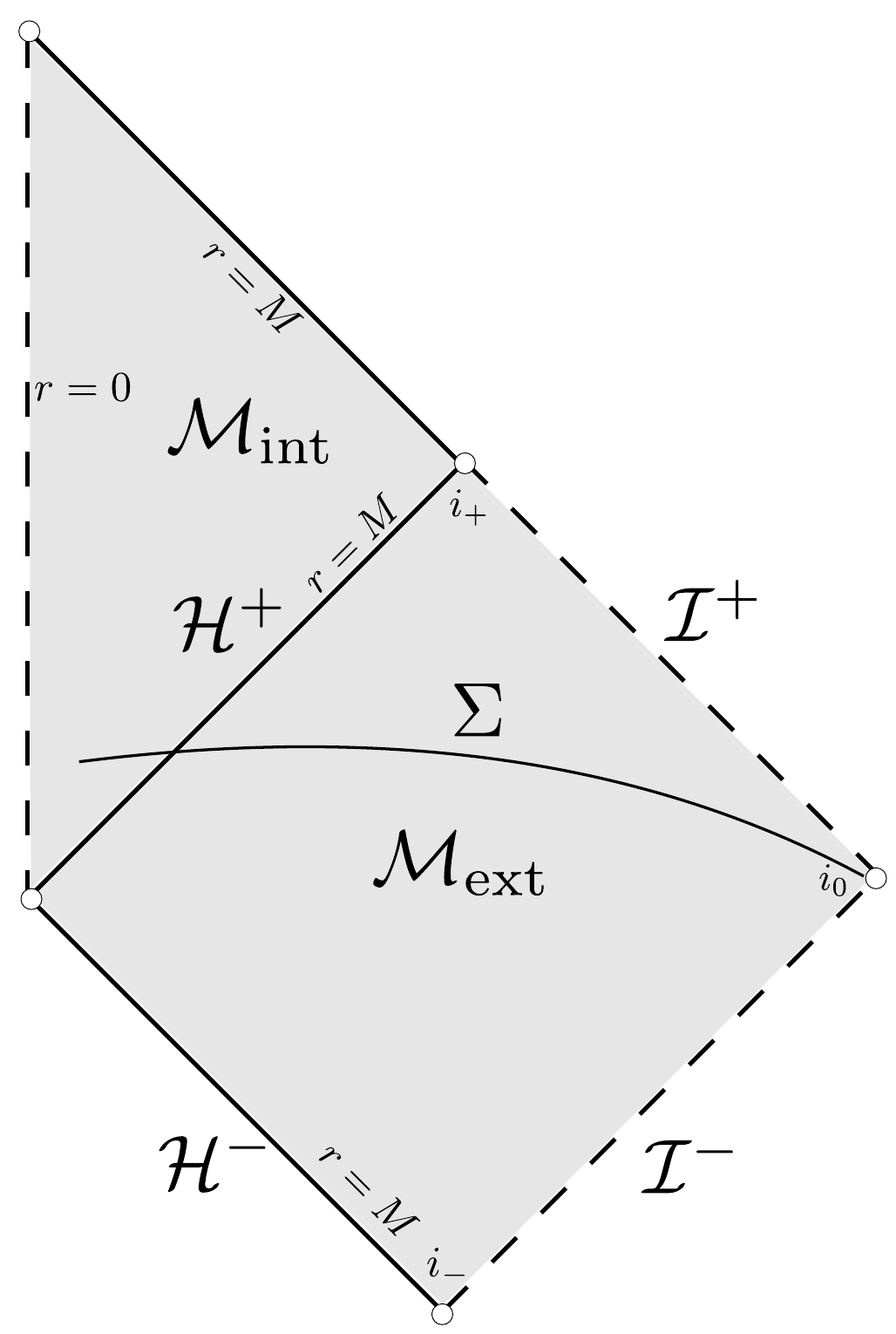}
\caption{\label{fig:PDextremalRN} The Penrose diagram of extremal Reissner--Nordstr\"om}
\end{center}
\end{figure}

Let $\Sigma$ be a spherically symmetric asymptotically flat spacelike hypersurface in $\mathcal{M}$ such that $\Sigma\cap \mathcal{H}\neq \emptyset$. Let $M<R<R_2$. We define $\Sigma_0$ as:
\begin{equation*}
\Sigma_0\doteq (\Sigma \cap \{R\leq r\leq R_2\})\cup \{v=v_{\Sigma}(R),\,M\leq r\leq R\}\cup \{u=u_{\Sigma}(R_2),\,v_{\Sigma}(R_2)\leq v<\infty\}.
\end{equation*}
Here, $v_{\Sigma}$ and $u_{\Sigma}$ denote the restrictions to $\Sigma$ of the functions $v$ and $u$ to, respectively. See Figure \ref{figure2}. In this paper, we will only consider the region $J^+(\Sigma_0)$, which is foliated by the hypersurfaces $\Sigma_{\tau}=\varphi^{\partial_{v}}_{\tau}(\Sigma_{0})$, where $\varphi^{\partial_{v}}_{\tau}$ is the flow generated by the vector field $\partial_{v}$, the latter being taken with respect to the ingoing E--F coordinates. We  next define several regions and hypersurfaces that will be very important in our analysis. 
\begin{figure}[h!]
\begin{center}
\includegraphics[width=3in]{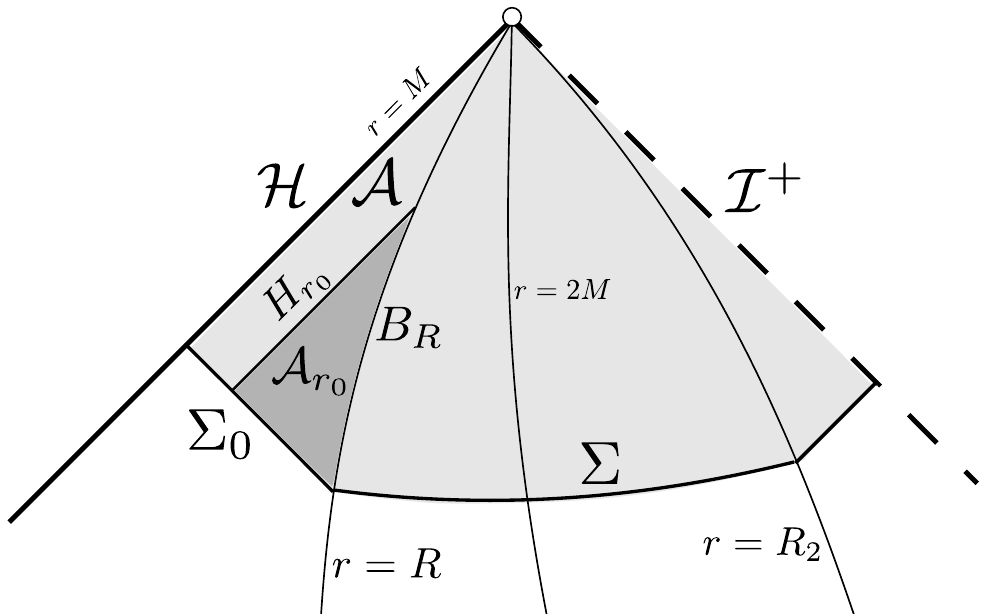}
\caption{Representation of spacetime regions}
\end{center}
\label{figure2}
\end{figure}
Consider the outgoing hypersurfaces $H_{r_0}$ which are defined as follows:
\begin{equation*}
H_{r_0}\doteq J^+(\Sigma_0)\cap\{u=u_{\Sigma_0}(r_0),\,r\leq R\}
\end{equation*}
and the following timelike hypersurface:
\begin{equation*}
B_R\doteq \{r=R\}\cap J^+(\Sigma_0).
\end{equation*}
We will frequently restrict to the region $\mathcal{A}_{r_0} \subset J^+(\Sigma_0)$, where
\begin{equation*}
\mathcal{A}_{r_0}\doteq J^+(\Sigma_0)\cap \left\{r\leq R \right\}\cap J^-(H_{r_0})
\end{equation*}
and we vary $r_0\in (M,R)$. Here, $u_{\Sigma_0}$ denotes the function $u$ restricted to $\Sigma_0$. Note that $H_{r_0}$ approaches $\mathcal{H}$ in the limit $r_0\downarrow M$. We define the corresponding limiting region $\mathcal{A}$ as follows:
\begin{equation*}
\mathcal{A}\doteq J^+(\Sigma_0)\cap \{M\leq r \leq R\}.
\end{equation*}
Let, finally, $\textbf{n}_{\Sigma}$ denote the (future) normal to the hypersurface $\Sigma$.

\subsection{Photon sphere}
\label{sec:geom_photon}
By the causal geometry of $\mathcal{M}_{\rm ext}$, one can easily infer the existence of geodesics that do not cross $\mathcal{M}$ or approach future null infinity $\mathcal{I}^+$. A class of these geodesics $\gamma:\rr \to \mathcal{M}_{\rm ext}$ can be parametrized as follows:
\begin{equation*}
\gamma(\tau)=(v(\tau),2M,\frac{\pi}{2},\varphi(\tau)),
\end{equation*}
where $v(\tau)$ and $\varphi(\tau)$ depend linearly on $\tau$. The timelike hypersurface $\{r=2M\}$ is called the \emph{photon sphere}. As was mentioned before, the existence of these geodesics gives rise to the trapping effect of the photon sphere. This is manifested in the loss of regularity to Morawetz estimates for the wave equation \eqref{we1} (see estimate \ref{dmora}).

\subsection{The red-shift effect}
\label{sec:geom_redshift}
The vector field $T\doteq \partial_v$ is a causal Killing vector field that is timelike in $\mathcal{M}_{\rm ext}$ and null on $\mathcal{H}$. Correspondingly, we can define the \emph{surface gravity of $\mathcal{H}$ with respect to $T$} as the function $\kappa: \mathcal{H}\to \mathbb{R}$, such that
\begin{equation*}
\nabla_T T|_{\mathcal{H}}= \kappa \cdot T|_{\mathcal{H}}.
\end{equation*}
In sub-extremal Reissner--Nordstr\"om spacetimes, the surface gravity, which is analogously defined, is strictly positive. However, in extremal Reissner--Nordstr\"om we have that $\kappa=0$. A horizon with a corresponding vanishing surface gravity is called a \emph{degenerate horizon}.

The relevance of the sign of $\kappa$ can be understood geometrically by considering the null geodesics $\gamma$ that generate $\mathcal{H}$. The energy of null geodesics in Reissner--Nordstr\"om with respect to the timelike $\partial_{v}-$invariant vector field $\partial_{v}-\partial_{r}$ decays exponentially in time with an exponent proportional to $\kappa$. The corresponding physical interpretation is that the frequency of a light signal sent by an observer crossing the event horizon to another observer crossing the event horizon at a later time is red-shifted. This \textit{redshift effect} in the context of decay results for the wave equation was first used by Dafermos and Rodnianski in \cite{redshift}. The degenaracy of the redshift effect in the extremal case introduces an additional difficulty in the analysis of the wave equation on such backgrounds.

\subsection{Energy currents and the vector field method}
\label{flux}
The energy-momentum tensor $\mathbf{T}$ corresponding to the wave equation (\ref{we1}) is a symmetric 2-tensor with components
\begin{equation*}
\mathbf{T}_{\alpha \beta}[f]=\partial_{\alpha}f\partial_{\beta}f-\frac{1}{2}g_{\alpha\beta}(g^{-1})^{\gamma \delta}\partial_{\gamma}f\partial_{\delta}f.
\end{equation*}
Moreover,
\begin{equation*}
\nabla^{\beta}\mathbf{T}_{\alpha \beta}[f]=\partial_{\alpha}f\square_gf,
\end{equation*}
so $\mathbf{T}[\psi]$ is divergence-free for solutions $\psi$ to (\ref{we1}).

Let $V$ be a vector field. Then we denote the energy current with respect to $V$ by $J^V$, where $J^V=\mathbf{T}(V,\cdot)$. Let $W$ be another vector field, then we denote
\begin{equation*}
J^V[f]\cdot W=\mathbf{T}[f](V,W).
\end{equation*}
An immediate calculation yields
\begin{equation*}
\textnormal{div}\,J^V[f]=K^V[f]+\mathcal{E}^V[f],
\end{equation*}
where
\begin{align*}
K^V[f]\doteq&\: \mathbf{T}^{\alpha\beta}\nabla_{\alpha}V_{\beta},\\
\mathcal{E}^V[f]\doteq&\: V^{\alpha}\nabla^{\beta}\mathbf{T}_{\alpha \beta}[f]=V(f)\square_gf.
\end{align*}
Note that $\mathcal{E}^V[\psi]=0$ for solutions $\psi$ to (\ref{we1}). Furthermore, $K^V[f]=0$ if $V$ is a Killing vector field. 
The \emph{vector field method} (see also \cite{muchT}) comprises a careful choice of the vector field $V$ (the \emph{vector field multiplier}) and $W$, where $f=W\psi$ (the \emph{commutation vector field}), so as to obtain suitable energy estimates after applying Stokes' theorem in appropriate spacetime regions. 

Consider the vector fields
\begin{equation}
T:= \partial_v, \ \ \ Y:= \partial_r,
\label{ty}
\end{equation}
with respect to the $(v,r,\theta,\phi)$ coordinate system.  The vector field $T$ is timelike in $\mathcal{M}_{\rm ext}$ and null along $\mathcal{H}$. The vector field $Y$ is regular and transversal to $\mathcal{H}$. 

We introduce the vector fields $P,N$ (see also \cite{aretakis1,aretakis2}), which \textit{close} to the event horizon satisfy the following
\begin{equation*}
\begin{split}
& P\sim T-(r-M)\cdot Y,\\
& N \sim T-Y,
\end{split}
\end{equation*}
 Observe that $N$ is future-directed and \emph{strictly} timelike in $J^+(\Sigma_0)$. On the other hand, the vector field $P$ is timelike in $\mathcal{M}_{\rm ext}$ and null at $\mathcal{H}^+$.

One can easily check that close to the event horizon it holds
\begin{equation}
 \begin{split}
J^T[f]\cdot N&\sim (Tf)^2+ (r-M)^{2}\cdot(Yf)^2+|\nabb f|^2,\\
J^P[f]\cdot N&\sim (Tf)^2+ (r-M)\cdot(Yf)^2+|\nabb f|^2,\\
J^N[f]\cdot N&\sim (Tf)^2+ (Yf)^2+|\nabb f|^2,
\end{split}
\label{fluxestpn}
\end{equation}
where $\nabb$ denotes the covariant derivative restricted to the spheres of constant $v$ and $r$.

\section{The main theorems}
\label{sec:TheMainTheorems}

For the definition of relevant notions and notation and, in particular, for the definition of specific regions, hypersurfaces, vector fields and their fluxes, see section \ref{background}.

We study the Cauchy problem for the linear wave equation \eqref{we1} on the exterior $\mathcal{M}$ of the extreme Reissner--Nordstr\"{o}m spacetime up to and including the event horizon $\mathcal{H}$ with initial data
\[\left.\psi\right|_{\Sigma_{0}}\in H^{s}_{\text{loc}}(\Sigma_{0}), \ \left. D_{\text{tran}}\psi\right|_{\Sigma_{0}}\in H^{s-1}_{\text{loc}}(\Sigma_{0}), \]
where $D_{\text{tran}}$ is a transversal to $\Sigma_{0}$ vector field. The hypersurface $\Sigma_{0}$ is as defined in Section \ref{background}. We assume that $s$ is sufficiently large so that all the weighted norms of our estimates are finite. 

\subsection{Summary of previous results}
\label{sec:SummaryOfPreviousResults}

For the reader's convenience we recall some of the key results of \cite{aretakis1, aretakis2} for solutions $\psi$ to \eqref{we1} on extremal Reissner--Nordstr\"{o}m:

\bigskip

\textbf{(1) Energy boundedness.} 
\begin{equation}\label{nfluxold}
\int_{\si_{\tau}} \left( J^V [\psi ] \cdot \textbf{n}_{\si_{\tau}} \right) dg_{\si_{\tau}} 
 \mik C \int_{\si_{0}} \left( J^V [\psi ] \cdot \textbf{n}_{\si_{0}} \right) dg_{\si_{0}} , 
\end{equation} 
where $V\in \left\{T,P,N\right\}$.

\bigskip

\textbf{(2) Integrated local energy decay.}

\begin{equation}\label{dmora}
\begin{split}
&\int_{\left\{M\leq r \leq r_{0}\right\}}  \Big( (T \psi )^2+ | \slashed{\nabla} \psi |^2  + (r-M)\cdot (Y \psi )^2  \Big) 
\\ &+ \int_{\left\{r_{0}\leq r\leq r_{1}\right\}}\Big( (\partial_{r^{*}}\psi)^{2}+(r-2M)^{2}\cdot\big((T \psi )^2  + | \slashed{\nabla} \psi |^2\big) \Big) 
 \\ \ \ & \mik C \int_{\si_0 } \left(  J^N [\psi ] \cdot \textbf{n}_{\si_0} \right) dg_{\si_0} ,
\end{split}  
\end{equation}
where $M<r_{0}<2M<r_{1}$. Note that this estimate degenerates both on the photon sphere and on the horizon. Specifically, the degeneracy applies to 
\begin{itemize}
	\item the \textit{tangential} derivatives to the photon sphere,
	\item the \textit{transversal} derivatives to the event horizon.
\end{itemize}
Removing the degeneracy at the photon sphere comes at the expense commuting with the Killing field $T$. Theorem \ref{theo1} below provides an estimate which does not degenerate at the horizon. 

\bigskip

\textbf{ (3) Energy decay }

\begin{equation}\label{tfluxdecayold}
\int_{\si_{\tau} } \left( J^T [\psi ] \cdot \textbf{n}_{\si_{\tau}} \right) dg_{\si_{\tau}} \mik \frac{C}{\tau^2}\cdot I[\psi] ,
\end{equation} 
where $I[\psi]$ is a suitable weighted norm of the initial data.

\bigskip

\textbf{ (4) Conserved quantities and instabilities on the horizon.} 

a) The quantity
\begin{equation}
\int_{\Sigma_{\tau}\cap\mathcal{H}}\left(Y\psi+\frac{1}{M}\psi\right)
\label{eqdfsfsd}
\end{equation}
is conserved along the event horizon (i.e.~independent of $\tau$). In fact, in \cite{aretakis2} a hierarchy of conservation laws was established for each angular frequency.

b) For generic solutions we have the following blow-up result
$$ \max_{\Sigma_{\tau}\cap\mathcal{H}}|Y^k T^m \psi | \rightarrow \infty , $$
asymptotically along $\mathcal{H}$ as $\tau\rightarrow\infty$, for $k \meg m+2$.

\subsection{The statements of the main theorems}
\label{sec:TheStatementsOfTheMainTheorems}
The main results of the present paper concern non-degenerate Morawetz estimates up to and including the event horizon $\mathcal{H}$. 

\begin{theorem}[\textbf{The trapping estimate}]

Consider the following weighted norm on the Cauchy hypersurface $\Sigma_{0}$
\begin{equation}
\begin{split}
D^{w}_{\Sigma_{0}}[\psi]: =& \ \int_{\Sigma_{0}\cap\left\{r\leq R\right\}}\left[\frac{1}{(r-M)}\cdot \left(Y\psi+\frac{1}{r}\psi\right)^{2}\right]dr\,d\omega\\&+\int_{\Sigma_{0}}\Big(J^{P}[\psi]\cdot \textbf{n}_{\Sigma_{0}}\Big)\, dg_{\Sigma_{0}}+\int_{\Sigma_{0}}\Big(J^{P}[T\psi]\cdot \textbf{n}_{\Sigma_{0}}\Big)\, dg_{\Sigma_{0}}\\& +\int_{\Sigma_{0}\cap\left\{r\leq R\right\}}\Big(J^{P}[Y\psi]\cdot \textbf{n}_{\Sigma_{0}}\Big)\, dg_{\Sigma_{0}},
\end{split}
\label{norminitial1}
\end{equation}
where the $J^{P}$ flux and the constant $R$ are as defined in Section \ref{background}.

 Then, there is a constant $C>0$ that depends only on the mass parameter $M$ such that for all solutions $\psi$ to the wave equation on extremal Reissner--Nordstr\"{o}m which arise from initial data with bounded $D^{w}_{\Sigma_{0}}$ norm the following non-degenerate Morawetz estimate holds
\begin{equation}
\begin{split}
\int_{\mathcal{A}}\Big[\psi^{2}+(Y\psi)^{2}+(T\psi)^{2}+\left|\nabb\psi\right|^{2}\Big]\, dg_{\mathcal{A}}\leq  & \ C \cdot D^{w}_{\Sigma_{0}}[\psi].  
\end{split}
\label{e1theo1}
\end{equation}
Here $\mathcal{A}$ is the spacetime region $\left\{M\leq r\leq R\right\}$ which, in particular, contains the future event horizon $\mathcal{H}=\left\{r=M\right\}$. 
\label{theo1}
\end{theorem}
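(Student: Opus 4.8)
The plan is to treat the spherical mean $\psi_0$ and the higher modes $\psi_{\geq 1}$ separately, exploiting that the orthogonal projection onto a fixed angular frequency commutes with $\Box_g$ (since $T$, $Y$ and the angular Laplacian $\lapp$ commute), so that each part is itself a solution of \eqref{we1}. All estimates are first derived on the regions $\mathcal{A}_{r_0}$ with $r_0>M$, where the relevant currents are regular, and then one passes to the limit $r_0\downarrow M$ to reach the horizon. Throughout, the boundary fluxes on $H_{r_0}$, $B_R$ and the final slice $\Sigma_{\tau}$ are absorbed using the energy boundedness estimate \eqref{nfluxold} for $V=P$ (applied to $\psi$, $T\psi$ and $Y\psi$), while the fluxes on $\Sigma_0$ reproduce the last three terms of $D^{w}_{\Sigma_0}[\psi]$.

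For $\psi_0$ I would use the singular multiplier $S=-\frac{1}{r-M}Y$. Writing the reduced radial equation $2\,\partial_v Y\psi_0+\frac{2}{r}\partial_v\psi_0+\frac{1}{r^2}\partial_r(r^2 D\, Y\psi_0)=0$ (with $D=r^{-2}(M-r)^2$), I would form $\textnormal{div}\,J^{S}[\psi_0]=K^{S}[\psi_0]$ and integrate over $\mathcal{A}_{r_0}$. The deformation tensor of $S$ is built from $\partial_r\big(\tfrac{1}{r-M}\big)$ and the Christoffel symbols, and the decisive point is that the factor $D$ multiplying $Y\psi_0$ in the equation carries a double zero $(M-r)^2$ at the horizon, which compensates the simple pole of the weight $\frac{1}{r-M}$. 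The coercive quantity produced in the bulk is $\frac{1}{r-M}$ times a nonnegative combination of $(Y\psi_0)^2$ and $(\partial_v\psi_0)^2$ together with a positive zeroth-order term; crucially, the combination that appears on $\Sigma_0$ is $Y\psi_0+\frac1r\psi_0=\frac1r Y(r\psi_0)$, which is exactly why the singular data norm $\int_{\Sigma_0\cap\{r\le R\}}\frac{1}{r-M}\big(Y\psi+\frac1r\psi\big)^2$ is the correct one (this is the same conserved-type combination as in \eqref{eqdfsfsd}).

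The delicate point for the spherical mean is the cross term $\int_{\mathcal{A}_{r_0}}\frac{1}{r-M}\,T\psi_0\, Y\psi_0$: its restriction to a fixed slice is non-integrable at $r=M$ because $Y\psi_0$ need not vanish there. I would handle it by integrating in $v$ before integrating in $r$. Integrating by parts in $v$ turns $\int \partial_v\psi_0\, Y\psi_0\, dv$ into a boundary contribution on $\Sigma_0$ and $\Sigma_{\tau}$ minus $\int \psi_0\, \partial_v Y\psi_0\, dv$, and substituting $\partial_v Y\psi_0$ from the reduced equation makes the most singular piece proportional to $\partial_r(r^2 D\, Y\psi_0)$, whose double zero in $D$ cancels the pole, while the remaining pieces reduce to boundary terms. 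The decay of $\psi_0$ in time (cf. \eqref{tfluxdecayold}) guarantees that the $\Sigma_{\tau}$ contribution has a finite limit, so the whole expression converges as $r_0\downarrow M$; by contrast, in the sub-extremal case the absence of this double zero makes the analogous limit diverge.

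For $\psi_{\geq 1}$ I would first commute \eqref{we1} with $Y$ to obtain an equation for $Y\psi_{\geq 1}$, and then use the degenerate multiplier $\bar{\partial}=-(r-M)Y$ as a multiplier for $\psi_{\geq 1}$. Here the Poincar\'e inequality on the spheres supplies, for frequencies $\ell\ge 1$, a coercive angular term $\frac{1}{r^2}|\nabb\psi_{\geq 1}|^2$ bounding $\frac{1}{r^2}\psi_{\geq 1}^2$ from below, and repeated Hardy inequalities in $r$ trade this and the $\partial_v$-controlled quantities for nondegenerate control of $(Y\psi_{\geq 1})^2$ up to $r=M$; the special near-horizon structure of the wave equation produces the cancellations of the borderline terms. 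Summing the $\ell=0$ and $\ell\ge 1$ estimates and using $L^2(\mathbb{S}^2)$-orthogonality recombines the left-hand sides into $\int_{\mathcal{A}}[\psi^2+(Y\psi)^2+(T\psi)^2+|\nabb\psi|^2]$, while the collected data terms are exactly $D^{w}_{\Sigma_0}[\psi]$. I expect the main obstacle to be the convergence of the cross term above: securing the $D$-cancellation together with the correct sign of $K^{S}[\psi_0]$, so that the bulk remains coercive while every error and boundary term stays bounded by the prescribed norm, is the technical heart of the argument.
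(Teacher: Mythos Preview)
Your outline follows the paper's proof closely: angular decomposition, the singular multiplier $S=-\frac{1}{r-M}\partial_r$ for $\psi_0$ with the critical cross term $\int\frac{1}{r-M}\,T\psi_0\,Y\psi_0$ handled by integrating first in $v$ and substituting from the reduced equation, and $Y$-commutation together with a degenerate $(r-M)$-weighted $P$-type multiplier plus Poincar\'e and Hardy inequalities for $\psi_{\geq 1}$. Two details to correct. First, the coercive bulk that actually emerges for $\psi_0$ is the \emph{non-degenerate} quantity $\big(\tfrac{3}{2}-\tfrac{r-M}{r}\big)(\partial_r\psi_0)^2$, not something carrying a $\tfrac{1}{r-M}$ weight; the singular weight cancels in the bulk and survives only on the $\Sigma_0$ boundary term, producing precisely the first integral of $D^w_{\Sigma_0}$ (and the numerical fact $\tfrac{3}{2}>0$, i.e.\ that the $I_1$ contribution $-\tfrac12$ is strictly dominated by the $I_3$ contribution $+2$, is essential). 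Second, for $\psi_0$ the region $\mathcal{A}_{r_0}$ is bounded by $\Sigma_0$, $H_{r_0}$ and $B_R$ (no final $\Sigma_\tau$ appears), and the $B_R$ flux is controlled not by energy boundedness but by applying the already-available degenerate Morawetz estimate on an annulus and using an averaging argument to select a good value of $R$.
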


\begin{remark}
Observe that the right hand side of \eqref{e1theo1} requires higher regularity than its left hand side. Furthermore, the boundedness of the $D^{w}_{\Sigma_{0}}$ norm forces the conserved charge
\[H[\psi]=\int_{\Sigma_{0}\cap\mathcal{H}}\left(Y\psi+\frac{1}{M}\psi\right)d\omega\]
to vanish. Note that if the data are in $C^{2}$ and the conserved charge $H[\psi]$ vanishes, then the integral 
\[\int_{\Sigma_{0}\cap\left\{r\leq R\right\}}\left[\frac{1}{(r-M)}\cdot \left(Y\psi+\frac{1}{r}\psi\right)^{2}\right]dr\,d\omega \]
is bounded  by the remaining three integrals in \eqref{norminitial1}. 

Hence not only Theorem \ref{theo1} requires high regularity for the initial data but also requires the vanishing of the conserved charge $H[\psi]$. The next two theorems show that this result is optimal by providing results in the converse direction.
\label{remarktheo1}
\end{remark}

\begin{theorem}[\textbf{Trapping and conserved charges}]
If $\psi$ is a solution to the wave equation on extremal Reissner--Nordstr\"{o}m with smooth compactly supported initial data for which the conserved charge
\begin{equation}
H[\psi]:=\int_{\Sigma_{0}\cap\mathcal{H}}\left(Y\psi+\frac{1}{M}\psi\right)d\omega\neq 0
\label{defh}
\end{equation}
then
\[\int_{\mathcal{A}}\left(Y\psi\right)^{2}\, dg_{\mathcal{A}}=\infty. \]
Hence, no non-degenerate Morawetz estimate holds for such solutions $\psi$ to the wave equation.
\label{theo2}
\end{theorem}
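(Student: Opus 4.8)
The plan is to reduce the statement to the spherically symmetric part of $\psi$ and then to extract the divergence from the conservation of $H[\psi]$ combined with decay in the exterior. Since $Y=\partial_r$ commutes with averaging over the spheres of constant $(v,r)$, Cauchy--Schwarz on $\mathbb{S}^2$ gives $\int_{\mathbb{S}^2}(Y\psi)^2\,d\omega\geq 4\pi\,(Y\psi_0)^2$, where $\psi_0$ is the spherical mean. Writing $dg_{\mathcal{A}}=r^2\,dv\,dr\,d\omega$ in ingoing coordinates, it suffices to prove $\int_{\{M\leq r\leq R\}}(Y\psi_0)^2\,r^2\,dv\,dr=\infty$. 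Projecting \eqref{we1} onto $\ell=0$ (so the term $\lapp\psi$ averages to zero) yields the reduced equation $2r^2\,\partial_v Y\psi_0+2r\,\partial_v\psi_0+\partial_r\big((M-r)^2 Y\psi_0\big)=0$, which I would put in divergence form
\begin{equation*}
\partial_v\big(2r^2\,Y\psi_0+2r\,\psi_0\big)+\partial_r\big((M-r)^2\,Y\psi_0\big)=0 .
\end{equation*}
Setting $J^v:=2r^2Y\psi_0+2r\psi_0$ and $J^r:=(M-r)^2Y\psi_0$, the two structural facts I would use are $\partial_vJ^v+\partial_rJ^r=0$ and that $J^r$ \emph{vanishes on} $\mathcal{H}$ — this is exactly the double zero of $D$ at $r=M$ forced by extremality. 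Evaluating on $\{r=M\}$ recovers $H[\psi]=4\pi\big(Y\psi_0+\tfrac1M\psi_0\big)|_{\mathcal H}$ and its conservation.

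Next I would integrate this conservation law over the rectangle $\{v_0\leq v<\infty,\ M\leq r'\leq r\}\subset J^+(\Sigma_0)$ for $r$ close to $M$. The flux through $\{r'=M\}$ vanishes since $J^r|_{\mathcal H}=0$, while by the exterior decay of $\psi$ (estimate \eqref{tfluxdecayold} and its pointwise consequences, which for fixed $r'>M$ control $Y\psi_0$ through the nondegenerate weight $(r'-M)^2$) one has $J^v(v,r')\to0$ as $v\to\infty$ for each fixed $r'>M$. This produces the clean identity
\begin{equation*}
\int_{v_0}^{\infty}Y\psi_0(v,r)\,dv=\frac{1}{(M-r)^2}\int_M^{r}J^v(v_0,r')\,dr' .
\end{equation*}
Since $J^v(v_0,M)=2M^2\big(Y\psi_0+\tfrac1M\psi_0\big)|_{\mathcal H}=\tfrac{M^2}{2\pi}H[\psi]$, the numerator equals $\tfrac{M^2}{2\pi}H[\psi]\,(r-M)+O\big((r-M)^2\big)$, so that
\begin{equation*}
\Big|\int_{v_0}^{\infty}Y\psi_0(v,r)\,dv\Big|\geq \frac{c}{r-M}\qquad (M<r\leq M+\delta),
\end{equation*}
with $c>0$ proportional to $|H[\psi]|$. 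Thus the time integral of the transversal derivative at radius $r$ diverges like $(r-M)^{-1}$ as $r\downarrow M$, precisely because $H[\psi]\neq0$.

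It remains to convert this $L^1_v$ lower bound into the $L^2$ divergence. The underlying mechanism is that $Y\psi_0$ stays bounded below near $\mathcal H$ over a $v$-window of length comparable to the near-horizon timescale $\sim (r-M)^{-1}$, after which redshift-type decay switches on. Quantitatively, if a definite fraction of the above integral is already accumulated by some $V$ with $V-v_0\sim (r-M)^{-1}$, then Cauchy--Schwarz gives
\begin{equation*}
\frac{c^2}{4(r-M)^2}\leq\Big(\int_{v_0}^{V}Y\psi_0\,dv\Big)^2\leq (V-v_0)\int_{v_0}^{V}(Y\psi_0)^2\,dv ,
\end{equation*}
whence $\int_{v_0}^{\infty}(Y\psi_0)^2\,dv\gtrsim (r-M)^{-1}$ and therefore $\int (Y\psi_0)^2\,r^2\,dv\,dr\gtrsim \int_M^{M+\delta}\frac{dr}{r-M}=\infty$, yielding (at least logarithmic) divergence and hence the theorem.

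The hard part will be this final step: justifying the near-horizon timescale, i.e.\ that a fixed proportion of $\int_{v_0}^\infty Y\psi_0\,dv$ is already realized by $v\sim v_0+(r-M)^{-1}$, rather than being spread over arbitrarily late times. Equivalently, one must control how fast the degeneracy-driven instability develops — bounding the growth of the second transversal derivative $Y^2\psi_0$, which by the blow-up result recalled in Section~\ref{sec:SummaryOfPreviousResults} grows along $\mathcal H$, and showing this growth is at most linear in $v$. It is exactly this at-most-linear rate that prevents the window on which $Y\psi_0$ is bounded below from shrinking faster than $(r-M)$, and hence keeps the $r$-integral logarithmically divergent rather than convergent. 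This near-horizon analysis, driven by the double zero of $D$ at $r=M$, is the technical heart of the argument.
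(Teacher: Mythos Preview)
Your approach is genuinely different from the paper's, and while the conservation-law identity you derive is correct and elegant, the final $L^1\to L^2$ step you flag as ``the hard part'' is indeed a real gap that would require substantial additional work to close.

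The paper's proof is far more direct: it simply reuses the exact identity \eqref{l0f2} obtained (in the course of proving Theorem~\ref{theo1}) by applying the singular multiplier $S=-\frac{1}{r-M}\partial_r$ in $\mathcal{A}_{r_0}$. Rewritten, that identity says
\begin{equation*}
\int_{\mathcal{A}_{r_0}}\Big[\big(\tfrac{3}{2}-\tfrac{r-M}{r}\big)(\partial_r\psi_0)^2+\tfrac{r-M}{r^3}\psi_0^2\Big]+\int_{H_{r_0}}\tfrac{1}{2r^2}\psi_0^2+\int_{B_R}K_R[\psi_0]=\int_{\Sigma_0\cap\{r_0\leq r\leq R\}}\tfrac{r^2}{r-M}\Big(\partial_r\psi_0+\tfrac{1}{r}\psi_0\Big)^2.
\end{equation*}
When $H[\psi]\neq 0$, the right-hand side diverges as $r_0\to M$. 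One then assumes for contradiction that $\int_{\mathcal{A}}(\partial_r\psi_0)^2<\infty$ and checks, via the Hardy inequalities \eqref{1hardy}, \eqref{2hardy} and the degenerate Morawetz estimate \eqref{dmora}, that every term on the left stays uniformly bounded in $r_0$. No dynamical timescale enters: the divergence lives entirely in the singular weight $\frac{1}{r-M}$ on the \emph{initial} slice $\Sigma_0$, and the spacetime $L^2$ norm of $\partial_r\psi_0$ appears directly in the identity with a good sign.

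By contrast, your route requires quantitative control of the near-horizon dynamics. The claim that a fixed fraction of $\int_{v_0}^\infty Y\psi_0(v,r)\,dv$ is already captured by $v\lesssim v_0+(r-M)^{-1}$ is the heuristic ``effective redshift'' picture, but turning it into a proof means establishing sharp upper bounds on the late-time tail of $Y\psi_0(v,r)$, \emph{uniformly} as $r\downarrow M$---essentially a precise decay statement in exactly the regime where decay degenerates. Your suggestion to bound the growth of $Y^2\psi_0$ by a linear rate in $v$ along $\mathcal{H}$ is correct, but you still have to propagate that into a statement at $r>M$ that is uniform in $r-M$; this is not automatic and is comparable in difficulty to what you are trying to prove. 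There is also a smaller gap earlier: your identity for $\int_{v_0}^\infty Y\psi_0\,dv$ assumes $\int_M^r J^v(V,r')\,dr'\to 0$ as $V\to\infty$, yet $J^v(V,M)=\frac{M^2}{2\pi}H[\psi]\neq 0$ for all $V$; you therefore need pointwise decay of $Y\psi_0(V,r')$ for each fixed $r'>M$ together with a dominated-convergence argument, which can be extracted from \cite{aretakis1,aretakis2} but should be invoked explicitly.
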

\begin{remark}
Theorem \ref{theo2} shows that trapping takes place on degenerate horizons even for smooth solutions as long as their conserved charge is non-vanishing.  It, in particular, implies that the trapping effect on degenerate horizon is not only due to a high frequency obstruction that requires loss of regularity but also due to global properties of the horizon which are independent of the degree of regularity of the initial data. This yields a new $L^{2}$-concentration phenomenon for degenerate horizon. Hence, the degenerate horizon $\mathcal{H}$ should be thought of as being trapped as a whole. This is in stark contrast with the trapping effect at the photon sphere where trapping is due to high frequency solutions which are supported for arbitrarily long times in small neighborhoods of individual null geodesics on the photon sphere. 
\label{remarktheo2}
\end{remark}

The next theorem shows that the regularity required in Theorem \ref{theo1} for the non-degenerate estimate \eqref{e1theo1} to hold is in fact \textit{optimal}. Specifically, we will show that 
 the non-degenerate spacetime integral $\int_{\mathcal{A}}|\partial\psi|^{2}$ is not bounded if we assume that the initial data are less regular than required for the boundedness of $D^{w}_{\Sigma_{0}}$. We have the following

\begin{theorem}[\textbf{Trapping and optimal loss of regularity}]
Consider initial data for the wave equation on extremal Reissner--Nordstr\"{o}m such that the conserved charge is vanishing
\begin{equation}
H[\psi]=0
\label{optimalre1}
\end{equation} 
and
\begin{equation}
\int_{\Sigma_{0}\cap\left\{r\leq R\right\}}\left[\frac{1}{(r-M)}\cdot \left(Y\psi+\frac{1}{r}\psi\right)^{2}\right]dr\,d\omega
\label{optimalre2}=\infty.
\end{equation}
Clearly condition \eqref{optimalre2}  implies that 
\[D_{\Sigma_{0}}^{w}[\psi]=\infty. \]
Then, no non-degenerate Morawetz estimate holds for $\psi$, that is 
\[\int_{\mathcal{A}}\left(Y\psi\right)^{2}\, dg_{\mathcal{A}}=\infty. \]
\label{theo3}
\end{theorem}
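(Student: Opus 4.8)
The plan is to turn the degeneracy of the redshift into a quantitative lower bound by studying the transport of the quantity $\Phi:=Y\psi+\tfrac1r\psi=\tfrac1r\,Y(r\psi)$, which is exactly the integrand of the weighted norm \eqref{optimalre2} and, at $r=M$, of the charge $H[\psi]$. First I would reduce the statement to proving $\int_{\mathcal{A}}\Phi^2\,dg_{\mathcal{A}}=\infty$: since $Y\psi=\Phi-\tfrac1r\psi$ one has $(Y\psi)^2\ge\tfrac12\Phi^2-\tfrac1{r^2}\psi^2$, and $\int_{\mathcal{A}}\psi^2\,dg_{\mathcal{A}}<\infty$ by the decay of $\psi$ (e.g.\ from \eqref{tfluxdecayold}), so a divergence of the $\Phi$-integral forces the claimed divergence of $\int_{\mathcal{A}}(Y\psi)^2$.

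Writing $\Box_g\psi=0$ in the $(v,r)$ chart as $2r^2\,TY\psi+2r\,T\psi+\partial_r\big((r-M)^2Y\psi\big)+\lapp\psi=0$ and solving for $TY\psi$ gives the transport equation
\[T\Phi=-\frac{r-M}{r^2}\,\Phi-\frac{(r-M)^2}{2r^2}\,Y^2\psi+\frac{r-M}{r^3}\,\psi-\frac{1}{2r^2}\,\lapp\psi.\]
The decisive feature is the coefficient $-\tfrac{r-M}{r^2}$ of the principal term: at fixed $r$ near $M$ the field $\Phi$ relaxes only on the advanced-time scale $\sim\tfrac{r^2}{r-M}$, the degenerate redshift timescale, so a profile $\Phi_0(r)$ prescribed on the ingoing null segment $\{v=v_{\Sigma}(R),\,M\le r\le R\}\subset\Sigma_0$ persists for a $v$-interval of length $\sim\tfrac1{r-M}$ and thereby contributes the singular weight $\tfrac1{r-M}$ to the spacetime integral. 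Making this precise, I would pair the transport equation with $\Phi$, integrate $\tfrac12 T(\Phi^2)$ in $v$ from $v_{\Sigma}(R)$ to $+\infty$ (the boundary term at $v=\infty$ vanishing at each fixed $r>M$ by decay), and integrate against $\tfrac{r^4}{r-M}\,dr\,d\omega$ over the truncated region $\{M+\delta\le r\le R\}$, obtaining the identity
\[\int_{\mathcal{A}_\delta}\Phi^2\,dg_{\mathcal{A}}=\int_{S^2}\int_{M+\delta}^{R}\frac{r^4}{2(r-M)}\,\Phi_0^2\,dr\,d\omega+\mathrm{Err}_\delta,\]
where the first term is bounded below by $\tfrac12 M^4\int_{\Sigma_0\cap\{M+\delta\le r\le R\}}\tfrac1{r-M}\Phi_0^2$ and diverges as $\delta\downarrow0$ by hypothesis \eqref{optimalre2}.

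It then remains to show that $\mathrm{Err}_\delta$ is controlled, uniformly in $\delta$, by $\tfrac12$ of the first term plus a finite $\delta$-independent constant, so that letting $\delta\downarrow0$ yields $\int_{\mathcal{A}}\Phi^2=\infty$. The lower-order contribution $\tfrac{r-M}{r^3}\psi\cdot\Phi$ and the $\Phi$-halves of the cross terms carry an extra factor $r-M$ that kills the singular weight; after Cauchy--Schwarz these are absorbed into $\int_{\mathcal{A}_\delta}\Phi^2$ and into $\int_{\mathcal{A}}\psi^2<\infty$. The second transversal term $\tfrac{(r-M)^2}{2r^2}Y^2\psi\cdot\Phi$ is where the Aretakis instability resides; here the weight $\tfrac{r^4}{r-M}\cdot(r-M)^2=r^2(r-M)$ is exactly the right one to tame $Y^2\psi$, and Cauchy--Schwarz reduces it to the finite quantity $\int_{\mathcal{A}}(r-M)(Y^2\psi)^2\,dg_{\mathcal{A}}$, which is the $Y$-commuted version of the $(r-M)(Y\psi)^2$ term in \eqref{dmora} and is finite under the natural assumption that the $Y$-commuted flux of the data is finite.

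The genuine obstacle is the angular term $\tfrac1{2r^2}\lapp\psi\cdot\Phi$, whose weight $\tfrac{r^4}{r-M}\cdot\tfrac1{2r^2}=\tfrac{r^2}{2(r-M)}$ retains the full singular factor, so that a naive Cauchy--Schwarz would produce the uncontrolled quantities $\int\tfrac1{r-M}\Phi^2$ and $\int\tfrac1{r-M}(\lapp\psi)^2$. I would dispose of it by projecting onto angular frequencies: for the spherical mean $\ell=0$ the term is simply absent, so whenever \eqref{optimalre2} diverges already on $\psi_0$ the argument above closes immediately using only the transversal and lower-order errors. For $\psi_{\ge1}$ one must instead exploit the special structure of the wave equation on the $\ell\ge1$ subspace, combining a Poincar\'e inequality (which makes $\lapp\psi$ comparable to $\psi_{\ge1}$) with the \emph{stable} decay of the angular derivatives of $\psi$ and a Hardy inequality in $r$, so as to extract the necessary cancellation against the principal term rather than estimating it crudely. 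This is precisely the $\psi_{\ge1}$ machinery summarised in Section \ref{sec:OverviewOfResultsAndTechniques}, and I expect the estimate for $\mathrm{Err}_\delta$ to reuse directly the computations behind Theorem \ref{theo2}, of which the present statement is the sharp quantitative counterpart (the hypothesis $H[\psi]=0$ guaranteeing only that the divergence is caused by the roughness of the data and not already by a non-vanishing charge as in Theorem \ref{theo2}).
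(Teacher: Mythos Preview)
Your strategy---linking the spacetime integral of $(Y\psi)^2$ (or of $\Phi^2$) to the divergent weighted initial data integral via an identity---is the same as the paper's, and your reduction to the spherical mean is exactly what the paper does. However, the execution diverges at a crucial point and leaves a real gap.

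The paper does \emph{not} work with a transport equation for $\Phi$. Instead it multiplies the wave equation for the spherical mean $\psi_0$ directly by the singular multiplier $S=-\frac{1}{r-M}\partial_r$ and integrates over $\mathcal{A}_{r_0}$. The term $\frac{(r-M)}{r^2}\partial_r^2\psi_0\cdot\partial_r\psi_0$ that arises is a total $\partial_r$-derivative and is integrated by parts in $r$; the cross term $\frac{2r^2}{r-M}\partial_r\psi_0\,\partial_v H[\psi_0]$ is handled by integrating by parts in $v$ and then using the wave equation once more to substitute for $\partial_v\partial_r\psi_0$. After all the dust settles one obtains the \emph{exact} identity \eqref{l0f2}, in which \emph{no second transversal derivative of $\psi_0$ appears}: the left side is a positive combination of $(\partial_r\psi_0)^2$ and $\psi_0^2$ over $\mathcal{A}_{r_0}$, and the right side is the singular data integral $\int_{\Sigma_0}\frac{r^2}{r-M}(\partial_r\psi_0+\frac{1}{r}\psi_0)^2$ plus two boundary fluxes on $H_{r_0}$ and $B_R$ that are controlled by $\int_{\mathcal{A}_{r_0}}(\partial_r\psi_0)^2$ and the $T$-flux via the Hardy inequalities and the averaging principle. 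The proof of the theorem is then immediate by contradiction: if $\int_{\mathcal{A}}(\partial_r\psi_0)^2<\infty$, the left side and the boundary fluxes are uniformly bounded as $r_0\downarrow M$, while the right side diverges by hypothesis.

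Your route, by contrast, keeps the term $\frac{(r-M)^2}{2r^2}Y^2\psi$ in the transport equation and tries to estimate it afterwards via Cauchy--Schwarz, landing on $\int_{\mathcal{A}}(r-M)(Y^2\psi)^2$. This is the gap: nothing in the hypotheses of the theorem bounds second transversal derivatives. Indeed, the assumption \eqref{optimalre2} is precisely that the data fail the regularity needed for Theorem~\ref{theo1}; such data need not even be $C^2$, and the claim that this integral is ``the $Y$-commuted version of \eqref{dmora}'' does not hold on extremal Reissner--Nordstr\"om, since commutation with $Y$ does not close without the redshift. The paper circumvents this entirely: the second derivative is eliminated by integration by parts before any estimate is made, so the argument uses only first-order quantities and the degenerate Morawetz estimate \eqref{dmora} for $\psi_0$ itself. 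If you redo your computation but integrate the $Y^2\psi_0$ contribution by parts in $r$ (and then feed the resulting $\partial_v\partial_r\psi_0$ back through the wave equation, as in the paper's treatment of $I_4$ and $I_5$), you will recover precisely the clean identity \eqref{l0f2} and the gap disappears.
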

The last theorem concerns higher order trapping estimates. Specifically, we show that \textit{higher order stable trapping takes place on degenerate horizons}. We have the following
\begin{theorem}[\textbf{Stable higher-order trapping}]
Generic solutions to the wave equation on extremal Reissner--Nordstr\"{o}m with generic smooth initial data which are supported in $\left\{M<R_{1}\leq r\leq R_{2}\right\}$ satisfy
\[ \int_{\mathcal{A}}\big(Y^{k}\psi\big)^{2}\, dg_{\mathcal{A}}=\infty, \text{ for }k\geq 2\]
and hence no non-degenerate higher order Morawetz estimate holds. 
\label{theo4}
\end{theorem}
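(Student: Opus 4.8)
The plan is to reduce to the spherically symmetric mode and combine the horizon instability recalled in part (4b) of Section~\ref{sec:SummaryOfPreviousResults} — upgraded to a \emph{sharp} quantitative rate — with a continuity-in-$r$ argument that converts blow-up of the trace on $\mathcal{H}$ into a divergent bulk integral. First I would pass to the spherical mean $\psi_0$. Since $Y=\partial_r$ commutes with the projection onto spherical harmonics and the volume form $dg_{\mathcal{A}}=r^2\sin\theta\,dv\,dr\,d\omega$ factorises, the $L^2(\mathbb{S}^2)$-orthogonality of the decomposition gives $\int_{\mathcal{A}}(Y^k\psi)^2\,dg_{\mathcal{A}}\geq \int_{\mathcal{A}}(Y^k\psi_0)^2\,dg_{\mathcal{A}}$. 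Thus it suffices to prove $\int_{\mathcal{A}}(Y^k\psi_0)^2\,dg_{\mathcal{A}}=\infty$, and I may work entirely with the radial wave equation for $\psi_0(v,r)$.

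Next I would set up the transport hierarchy on $\mathcal{H}$. Evaluating the wave equation and its successive $r$-derivatives at $r=M$, and using that $D$ has a double root there ($D(M)=D'(M)=0$, $D''(M)=2/M^2$), yields the conservation law $\partial_v\!\big(Y\psi_0+\tfrac1M\psi_0\big)|_{\mathcal{H}}=0$ of \eqref{eqdfsfsd} together with transport equations of the schematic form $\partial_v(Y^k\psi_0)|_{\mathcal{H}}=a_k\,Y^{k-1}\psi_0+(\text{lower order})$ at $r=M$, with $a_k\neq0$. Because the data are supported in $\{R_1\leq r\leq R_2\}$ we have $H[\psi]=0$ in \eqref{defh}, so the charge vanishes and $Y\psi_0|_{\mathcal{H}}=-\tfrac1M\psi_0|_{\mathcal{H}}\to0$; this is precisely the regime in which the first-order estimate of Theorem~\ref{theo1} is admissible, which is what makes the present higher-order obstruction \emph{stable}. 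Integrating the hierarchy in $v$ shows that $Y^k\psi_0|_{\mathcal{H}}$ is obtained from $\psi_0|_{\mathcal{H}}$ by $k$ successive $v$-integrations, so its growth is governed by the late-time decay of $\psi_0$ along $\mathcal{H}$. The genericity hypothesis is that a suitable asymptotic charge of the data $\mathcal{C}[\psi]$ (a single non-trivial linear functional, for instance a Newman--Penrose-type constant at $\mathcal{I}^+$) is non-zero; its non-vanishing forces $\psi_0|_{\mathcal{H}}\sim c\,v^{-1}$ with $c\neq0$, hence $\int^{v}\psi_0|_{\mathcal{H}}\sim c\log v$ and, inductively, the lower bound $|Y^k\psi_0|(M,v)\gtrsim v^{k-2}\log v$ for all large $v$ and every $k\geq2$.

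Then I would propagate off the horizon. The goal is a matching one-sided bound $|Y^{k+1}\psi_0|(r,v)\lesssim v^{k-1}\log v$, uniformly for $M\leq r\leq M+\delta_0$ and large $v$ (the next transversal derivative blows up exactly one power faster), obtained by commuting with $Y$ and feeding the transport hierarchy into the energy and integrated-decay bounds \eqref{nfluxold}, \eqref{dmora}, \eqref{tfluxdecayold}. The fundamental theorem of calculus $Y^k\psi_0(r,v)=Y^k\psi_0(M,v)+\int_M^r Y^{k+1}\psi_0\,dr'$ then shows $|Y^k\psi_0(r,v)|\geq\tfrac12|Y^k\psi_0(M,v)|$ on a \emph{good region} of radial width $\sim v^{-1}$. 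Integrating the square over this region contributes $\sim v^{-1}\big(v^{k-2}\log v\big)^2=v^{2k-5}(\log v)^2$ per unit $v$, and $\int^{\infty}v^{2k-5}(\log v)^2\,dv=\infty$ precisely when $k\geq2$; hence $\int_{\mathcal{A}}(Y^k\psi_0)^2\,dg_{\mathcal{A}}=\infty$, which finishes the proof.

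The main obstacle is the pair of \emph{sharp} quantitative estimates near the degenerate horizon — the lower bound on the blow-up of $Y^k\psi_0|_{\mathcal{H}}$ and the matching upper bound on $Y^{k+1}\psi_0$ in a fixed neighbourhood $\{M\leq r\leq M+\delta_0\}$, both uniform as $v\to\infty$. These must be tracked with the correct powers of $v$ and of $\log v$, because the decisive case $k=2$ is borderline: there the width $\sim v^{-1}$ and the amplitude$^2\sim(\log v)^2$ conspire to the non-integrable density $(\log v)^2/v$, and the divergence is saved only by the logarithm coming from the slow $v^{-1}$ decay of $\psi_0$ on $\mathcal{H}$. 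Verifying that $\mathcal{C}[\psi]\neq0$ on an open dense set of compactly supported data (so that the slow decay, and with it the growth, is genuinely triggered) is the second point requiring care. For $k\geq3$ one can alternatively bypass the upper bound by a softer trace argument: finiteness of $\int_{\mathcal{A}}(Y^k\psi_0)^2$ together with integrated decay away from $\mathcal{H}$ would force $Y^{k-1}\psi_0|_{\mathcal{H}}\in L^2(dv)$, contradicting its $v^{k-3}\log v$ growth; only the borderline $k=2$ genuinely requires the full two-sided control.
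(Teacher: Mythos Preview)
Your strategy is genuinely different from the paper's, but it rests on input that is not available from the results you cite. The key unproved assertion is the sharp lower bound $\psi_0|_{\mathcal{H}}\sim c\,v^{-1}$ with $c\neq 0$ for generic compactly supported data: the only decay statement recalled in the paper is the $\tau^{-2}$ energy decay \eqref{tfluxdecayold}, which yields neither a pointwise $v^{-1}$ tail nor a non-vanishing leading coefficient. Establishing precise late-time asymptotics with identified constants on extremal Reissner--Nordstr\"om is a substantial result proved only in later work; invoking a ``Newman--Penrose-type constant at $\mathcal{I}^+$'' here presupposes exactly the machinery that would need to be built. The companion upper bound $|Y^{k+1}\psi_0|(r,v)\lesssim v^{k-1}\log v$ uniformly on $\{M\leq r\leq M+\delta_0\}$ is likewise not a consequence of \eqref{nfluxold}--\eqref{tfluxdecayold}: the whole thrust of the horizon instability is that uniform control of higher transversal derivatives near $\mathcal{H}$ \emph{fails}, so this bound cannot be obtained by ``feeding the transport hierarchy into'' those estimates. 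You flag both points as ``the main obstacle'', but they are the entire substance of the argument; without them the borderline case $k=2$, which you correctly identify as logarithmically critical, does not close.

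The paper avoids all late-time asymptotics by a $T$-antiderivative construction. It builds $\psi:=\partial_v\phi$ where $\phi$ is a smooth spherically symmetric solution with $\phi\equiv 1$ near $\mathcal{H}$ on $\Sigma_0$, so that $H[\phi]=1/M\neq 0$ while $\psi$ has data compactly supported away from $\mathcal{H}$ (hence $H[\psi]=0$). Commuting the wave equation once with $\partial_r$ gives the exact identity $4r^2\,\partial_r^2\psi=\partial_r^2\bigl((r-M)^2\partial_r\phi\bigr)+\mathcal{E}_1[\phi]$, with $\mathcal{E}_1[\phi]=6r\,\partial_r\partial_v\phi+2\partial_v\phi$ controlled in $L^2(\mathcal{A})$ by the degenerate Morawetz estimate for $\phi$ together with Proposition~\ref{prop1l0} applied to $\psi$ (legitimate since $H[\psi]=0$). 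Assuming for contradiction that $\int_{\mathcal{A}}(\partial_r^2\psi)^2<\infty$, one deduces $\int_{\mathcal{A}}\bigl(\partial_r^2((r-M)^2\partial_r\phi)\bigr)^2<\infty$, and two applications of the Hardy inequality \eqref{3hardy} (with $p=-2$, then $p=-4$) collapse this to $\int_{\mathcal{A}}(\partial_r\phi)^2<\infty$, contradicting Theorem~\ref{theo2}. Higher $k$ reduce to $k=2$ by \eqref{1hardy}, and genericity follows by adding $\epsilon\psi$ to any solution with finite integral. The argument uses only Theorems~\ref{theo1}--\ref{theo2} of the same paper plus Hardy, with no quantitative decay input whatsoever.
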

 Hence, no non-degenerate higher order Morawetz estimate holds  even for initial data which are compactly supported and supported away from the horizon and as such any charge on the event horizon initially vanishes and any weighted higher order norm is finite. This implies that stable higher order trapping takes place on degenerate horizons. 

\begin{remark} As will become evident from the proof, the divergence of the above integrals arises from spherically symmetric solutions to the wave equation. Given that the spacetime metric is spherically symmetric, one can decompose a smooth solution $\psi$ to the wave equation to solutions $\psi_{\ell}$ supported on individual spherical harmonic parameters $\ell\geq 0$ (see Section \ref{sec:PoincarEInequality} below). Then, using the techniques developed in this paper, one expects to be able to show \textbf{blow-up for higher spherical harmonic parameters}:
If $\psi_{\geq \ell}$ is a smooth solution to the wave equation supported on angular frequencies greater or equal to $\ell$ such that the higher-order conserved charge (defined in \cite{aretakis2})  satisfies the generic condition $H_{\ell}[\psi_{\geq\ell}]\neq 0$ then
\[\int_{\mathcal{A}}\left(Y^{k}\psi_{\geq \ell}\right)^{2}=\infty, \]
for $k\geq \ell+1$. This extended result would be particularly useful for $\ell\geq 2$, since as is well known there are no dynamical solutions of the linearized Einstein equations with spherical harmonic parameter $\ell=0,1$ (see \cite{newdaf}).
\label{remarkl}
\end{remark}

\section{The non-degenerate Morawetz estimate}
\label{sec:NonDegenerateMorawetzEstimate}

\subsection{Elliptic theory and Poincar\'{e}'s inequality}
\label{sec:PoincarEInequality}
We recall briefly some basic facts of the spectral theory of a standard sphere $\mathbb{S}^2 (r)$ of radius $r > 0$. The space $L^2 (\mathbb{S}^2 (r) )$ admits the following orthogonal decomposition:
$$  L^2 (\mathbb{S}^2 (r) ) = \oplus_{l=0}^{\infty} E_l , $$
where the eigenspaces $E_l$ are of dimension $2l+1$, and their corresponding eigenfunctions are denoted by $Y^{m,\ell}$ for $m \in \mathbf{Z} \cap [-\ell , \ell ]$ (the functions $Y^{m,\ell}$ are usually referred to as \textit{spherical harmonics}). The eigenvalues of the spherical Laplacian $\slashed{\Delta}$ are equal to $-\frac{\ell (\ell+1)}{r^2}$.

 Hence, any function $f \in L^{2} \left( \mathbb{S}^2 (r)\right)$ can be written as:
$$ f = \sum_{l=0}^{\infty} \sum_{m=-\ell}^\ell f_{m,\ell} (r) \cdot Y^{m,\ell}  \doteq \sum_{\ell =0}^{\infty} f_\ell  \doteq \left( \sum_{\ell=0}^{K-1} f_\ell \right) + f_{ \meg K} , $$
for any $K \meg 1$, where we denote by $f_{\ell}$ the projection of $f$ on the eigenspace $E_{\ell}$.

In view of the spherical symmetry of the extremal Reissner--Nordstr\"{o}m spacetime $\mathcal{M}$, if decompose any solution $\psi$ of the linear wave equation on $\mathcal{M}$:
\begin{equation}
 \psi =  \sum_{\ell=0}^{\infty} \psi_\ell ,
\label{sphedeco}
\end{equation}
then every projection $\psi_{\ell}$ will also satisfy the wave equation.  For example, $\psi$ can be uniquelly written as 
\begin{equation}
\psi=\psi_{0}+\psi_{\geq 1},
\label{spdeco}
\end{equation}
where $\psi_{0}=\int_{\mathbb{S}^{2}}\psi\, d\omega$ is the spherical mean of $\psi$. 

From now on, we will say that $\psi$ is supported on angular frequencies $ \ell \meg K$ for some $K \meg 0$ if initially we have that $\psi_k = 0$ for $k \in \mathbf{N} \cap [0, K-1]$, and that $\psi$ is supported on angular frequency $K$ if $\psi \in E_K$.

Finally we record here Poincar\'{e}'s inequality

\begin{proposition}
Let $f \in  L^{2}\left(\mathbb{S}^2 (r)\right)$ for some $r > 0$, and let $f_{\ell} = 0$ for $\ell \in \mathbf{N} \cap [0 , K-1 ]$ for some $K \in \mathbf{N}$, $K \meg 1$. Then we have that:
\begin{equation}\label{poincare}
\int_{\mathbb{S}^2 (r)} f^2 d\omega \mik \frac{r^2}{K (K+1)} \int_{\mathbb{S}^2 (r)} | \slashed{\nabla} f |^2 d\omega .
\end{equation}
Additionally we note that equality in \eqref{poincare} holds if  $f_{\ell} = 0$ for all $\ell \neq K$.
\label{yanprop}
\end{proposition}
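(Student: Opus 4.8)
The plan is to exploit the spectral decomposition $f=\sum_{\ell\meg K} f_\ell$, which holds by hypothesis since $f_\ell=0$ for $\ell\in\mathbf{N}\cap[0,K-1]$. First I would expand both sides of \eqref{poincare} in this basis. For the left-hand side, orthogonality of the eigenspaces $E_\ell$ in $L^2(\mathbb{S}^2(r))$ gives immediately
\[
\int_{\mathbb{S}^2(r)} f^2\, d\omega = \sum_{\ell\meg K}\int_{\mathbb{S}^2(r)} f_\ell^2\, d\omega.
\]
For the right-hand side, I would integrate by parts (Stokes on the closed manifold $\mathbb{S}^2(r)$, so no boundary terms) to write $\int |\slashed{\nabla} f|^2\, d\omega = -\int f\, \slashed{\Delta} f\, d\omega$, and then invoke the eigenvalue relation $\slashed{\Delta} f_\ell = -\frac{\ell(\ell+1)}{r^2}\, f_\ell$ stated above.

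The key intermediate step is the orthogonality of the gradients across distinct frequencies, namely $\int \slashed{\nabla} f_\ell\cdot \slashed{\nabla} f_{\ell'}\, d\omega = 0$ for $\ell\neq\ell'$. I would establish this by integrating by parts once more and using the eigenvalue equation together with $L^2$-orthogonality:
\[
\int_{\mathbb{S}^2(r)}\slashed{\nabla} f_\ell\cdot \slashed{\nabla} f_{\ell'}\, d\omega = -\int_{\mathbb{S}^2(r)} f_\ell\, \slashed{\Delta} f_{\ell'}\, d\omega = \frac{\ell'(\ell'+1)}{r^2}\int_{\mathbb{S}^2(r)} f_\ell\, f_{\ell'}\, d\omega = 0.
\]
This yields the diagonalized identity
\[
\int_{\mathbb{S}^2(r)} |\slashed{\nabla} f|^2\, d\omega = \sum_{\ell\meg K}\frac{\ell(\ell+1)}{r^2}\int_{\mathbb{S}^2(r)} f_\ell^2\, d\omega.
\]

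With both sides diagonalized, the inequality follows from the elementary monotonicity $\ell(\ell+1)\meg K(K+1)$ whenever $\ell\meg K$: bounding each coefficient below by $K(K+1)/r^2$ gives
\[
\int_{\mathbb{S}^2(r)} |\slashed{\nabla} f|^2\, d\omega \meg \frac{K(K+1)}{r^2}\sum_{\ell\meg K}\int_{\mathbb{S}^2(r)} f_\ell^2\, d\omega = \frac{K(K+1)}{r^2}\int_{\mathbb{S}^2(r)} f^2\, d\omega,
\]
and rearranging produces \eqref{poincare}. For the equality characterization, the bound $\ell(\ell+1)\meg K(K+1)$ is strict for every $\ell>K$, so equality forces $\int f_\ell^2\, d\omega=0$, hence $f_\ell=0$, for all $\ell\neq K$; conversely if only the $\ell=K$ mode is present both sides equal $\frac{K(K+1)}{r^2}\int f_K^2$. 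I do not anticipate a serious obstacle here; the only point demanding care is justifying the termwise integration by parts and the interchange of summation and integration, which is routine given that the weighted initial norms are assumed finite (equivalently, $f\in H^1(\mathbb{S}^2(r))$), so the spectral series converge in the relevant norms.
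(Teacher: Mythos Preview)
Your argument is correct and is the standard spectral proof of the Poincar\'{e} inequality on the sphere. The paper itself does not supply a proof of this proposition, deferring instead to \cite{aretakis1}; your write-up fills that gap with exactly the expected method (diagonalize both sides via the eigenspace decomposition and use $\ell(\ell+1)\meg K(K+1)$ for $\ell\meg K$), and the equality case is handled correctly.
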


\subsection{Hardy inequalities}
\label{sec:HardyInequalities}
We here list a few Hardy-type inequalities for functions defined on the exterior $\mathcal{M}$ of the extremal Reissner--Nordstr\"{o}m spacetime. 

The regions $\mathcal{A},\mathcal{A}_{r_{0}}$ and the hypersurfaces $B_{R},H_{r_{0}}$ are as defined in Section \ref{background}.
\begin{proposition}[First Hardy inequality]
Let $f:\mathcal{M}\to \mathbb{R}$ be a $C^1$ function. Let $p\in \mathbb{R}\setminus \{-1\}$ and suppose that the pointwise limit $\lim_{r\to M}(r-M)^{p+1}f^2(v,r,\theta,\phi)=0$. Then
\begin{equation}
\begin{split}
\label{1hardy}
\int_{\mathcal{A}}(r-M)^{p}f^2\,d\omega dr dv\leq &\  \frac{4}{(p+1)^2}\int_{\mathcal{A}}(r-M)^{p+2}(\partial_rf)^2\,d\omega dr dv\\ &+\frac{2}{p+1}\int_{B_R}(r-M)^{p+1}f^2\,d\omega dv.
\end{split}
\end{equation}
In particular, if $p<-1$, then we have that
\begin{equation}
\label{3hardy}
\int_{\mathcal{A}}(r-M)^{p}f^2\,d\omega dr dv\leq \frac{4}{(p+1)^2}\int_{\mathcal{A}}(r-M)^{p+2}(\partial_rf)^2\,d\omega dr dv.
\end{equation}
\end{proposition}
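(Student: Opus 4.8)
The plan is to prove \eqref{1hardy} by a weighted radial integration by parts followed by a Young-inequality absorption, which is the standard mechanism behind Hardy inequalities; the particular case \eqref{3hardy} will then drop out by a sign observation. The starting point is that the weight is a total $r$-derivative,
\[
(r-M)^p=\frac{1}{p+1}\,\partial_r\big[(r-M)^{p+1}\big].
\]
Inserting this into the left-hand side of \eqref{1hardy} and integrating by parts in $r$ for fixed $(v,\omega)$ (by Fubini, since in the coordinates $(v,r,\omega)$ each radial slice of $\mathcal{A}$ is the interval $M\leq r\leq R$), I would arrive at the exact identity
\[
\int_{\mathcal{A}}(r-M)^p f^2\,d\omega\,dr\,dv=\frac{1}{p+1}\int_{B_R}(r-M)^{p+1}f^2\,d\omega\,dv-\frac{2}{p+1}\int_{\mathcal{A}}(r-M)^{p+1}f\,\partial_r f\,d\omega\,dr\,dv.
\]
The inner radial endpoint $r=M$ produces no boundary term precisely because of the hypothesis $\lim_{r\to M}(r-M)^{p+1}f^2=0$, the outer endpoint $r=R$ yields the integral over $B_R=\{r=R\}$, and the remaining portion of $\partial\mathcal{A}$ coming from $\Sigma_0$ lies in a level set $\{v=\mathrm{const}\}$ and hence is radially inert.

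For the cross term I would use Young's inequality. Setting $I:=\int_{\mathcal{A}}(r-M)^p f^2$ and $D:=\int_{\mathcal{A}}(r-M)^{p+2}(\partial_r f)^2$, and writing the integrand as a product $\big[(r-M)^{p/2}f\big]\big[(r-M)^{(p+2)/2}\partial_r f\big]$, one gets, for every $\epsilon>0$ and regardless of the sign of $p+1$,
\[
-\frac{2}{p+1}\int_{\mathcal{A}}(r-M)^{p+1}f\,\partial_r f\leq\frac{2}{|p+1|}\Big(\frac{\epsilon}{2}\,I+\frac{1}{2\epsilon}\,D\Big)=\frac{\epsilon}{|p+1|}\,I+\frac{1}{\epsilon|p+1|}\,D.
\]
Choosing $\epsilon=|p+1|/2$ turns the first term into $\tfrac12 I$, which I absorb into the left-hand side, while the second term acquires the constant $\tfrac{2}{(p+1)^2}$. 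Multiplying the resulting inequality by $2$ and keeping the boundary integral with its exact (signed) coefficient then yields precisely
\[
I\leq\frac{2}{p+1}\int_{B_R}(r-M)^{p+1}f^2\,d\omega\,dv+\frac{4}{(p+1)^2}\,D,
\]
which is \eqref{1hardy}.

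Finally, \eqref{3hardy} follows at once: when $p<-1$ the coefficient $\tfrac{2}{p+1}$ is negative while the boundary integral $\int_{B_R}(r-M)^{p+1}f^2\,d\omega\,dv$ is nonnegative, so discarding it only enlarges the right-hand side and leaves $I\leq\tfrac{4}{(p+1)^2}D$. I expect the only point requiring genuine care to be the geometric justification of the radial integration by parts on the region $\mathcal{A}$ — namely verifying that $\partial\mathcal{A}$ splits into the level sets $\{r=M\}$ and $B_R=\{r=R\}$ together with the radially inert null segment of $\Sigma_0$, and that the stated limit at $r=M$ is exactly what annihilates the inner boundary term (this is automatic for $p>-1$ once $f$ is bounded, but is a genuine constraint when $p<-1$). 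Everything downstream is routine, provided the sign of $p+1$ is carried uniformly through the absolute values; one may also first assume $D<\infty$, since otherwise the inequality is vacuous.
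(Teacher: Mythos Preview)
Your proposal is correct and follows essentially the same approach as the paper: integrate $\partial_r\big[(r-M)^{p+1}f^2\big]$ to obtain the identity, apply a weighted Cauchy--Schwarz/Young inequality to the cross term, absorb, and then discard the nonpositive boundary term when $p<-1$. The only cosmetic difference is that the paper parametrizes the Cauchy--Schwarz step by $\alpha\in(0,1)$ and optimizes at $\alpha=\tfrac12$, whereas you use Young with $\epsilon=|p+1|/2$; your explicit use of $|p+1|$ in fact handles the sign of $p+1$ slightly more cleanly than the paper's write-up.
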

\begin{proof}
Integrate first $\partial_r((r-M)^{p+1}f^2)$ along constant-$v,\theta,\phi$ curves in $\mathcal{A}$ with $p\neq -1$ and use that $\lim_{r\to M}(r-M)^{p+1}f^2(v,r,\theta,\phi)=0$ to obtain:
\begin{equation}
\label{eq:hardyeq1}
\int_{M\leq r \leq r_{0}}(p+1)(r-M)^{p}f^2+2(r-M)^{p+1} f\partial_r f\,dr=(p+1)^{-1}\int_{r=R}(r-M)^{p+1}f^2.
\end{equation}
Clearly the constants are independent of $v,\theta,\phi$ and hence we can integrate in $v,\theta,\phi$ to obtain:
\begin{equation}
\label{eq:hardyeq}
\int_{\mathcal{A}}(p+1)(r-M)^{p}f^2+2(r-M)^{p+1} f\partial_r f\,d\omega dr dv=(p+1)^{-1}\int_{B_R}(r-M)^{p+1}f^2\,d\omega dv.
\end{equation}
We rearrange the terms above and multiply both sides by $(p+1)^{-1}$:
\begin{equation}
\begin{split}
\label{1hardyrea}
\int_{\mathcal{A}}(r-M)^{p}f^2\,d\omega dr dv=&\ (p+1)^{-1}\int_{B_R}(r-M)^{p+1}f^2\,d\omega dv\\& -2(p+1)^{-1}\int_{\mathcal{A}}(r-M)^{p+1}f\partial_r f\,d\omega dr dv.
\end{split}
\end{equation}
We apply a weighted Cauchy--Schwarz inequality to estimate
\begin{equation*}
\begin{split}
2(p+1)^{-1}&\int_{\mathcal{A}}(r-M)^{p+1}|f||\partial_r f|\,d\omega dr dv\\ &\leq\: \alpha \int_{\mathcal{A}}(r-M)^p f^2\,d\omega dr dv+\alpha^{-1}(p+1)^{-2} \int_{\mathcal{A}}(r-M)^{p+2} (\partial_rf)^2\,d\omega dr dv,
\end{split}
\end{equation*}
where $0<\alpha<1$. We use the above inequality together with (\ref{eq:hardyeq}) to obtain:
\begin{equation*}
\begin{split}
\int_{\mathcal{A}}(r-M)^{p}f^2\,d\omega dr dv\leq&\: \alpha^{-1}(1-\alpha)^{-1}(p+1)^{-2}\int_{\mathcal{A}}(r-M)^{p+2}(\partial_rf)^2\,d\omega dr dv\\
&+(1-\alpha)^{-1}(p+1)^{-1}\int_{B_R}(r-M)^{p+1}f^2\,d\omega dv.
\end{split}
\end{equation*}
The function $\alpha^{-1}(1-\alpha)^{-1}$ attains its minimum at $\alpha=\frac{1}{2}$. By taking $\alpha=\frac{1}{2}$ in the above inequality, we arrive at (\ref{1hardy}).
\end{proof}

\begin{proposition}[Second Hardy inequality]
Let $f:\mathcal{M}\to \mathbb{R}$ be a $C^1$ function. Let $r_1>r_0$. Then, for any $\epsilon>0$ we can estimate
\begin{equation}
\label{2hardy}
\begin{split}
&\int_{H_{r_0}}f^2\,d\omega dv\leq \\& \epsilon \int_{\mathcal{A}_{r_0}}(\partial_rf)^2\,d\omega dr dv+(1+\epsilon^{-1})\int_{\mathcal{A}_{r_0}}f^2\,d\omega dr dv+\int_{B_R}f^2\,d\omega dv.
\end{split}
\end{equation}
\end{proposition}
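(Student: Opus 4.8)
The plan is to mirror the proof of the first Hardy inequality: integrate a total $r$-derivative of a weighted $f^2$ over $\mathcal{A}_{r_0}$ and read off the boundary terms. The geometric input I would pin down first is the shape of $\mathcal{A}_{r_0}$ in the coordinates $(v,r,\omega)$: at each fixed $v$ (and fixed angle) the slice of $\mathcal{A}_{r_0}$ is an $r$-interval whose \emph{outer} endpoint lies on $B_R$ (at $r=R$) and whose \emph{inner} endpoint lies on the outgoing cone $H_{r_0}$, with the value of $r$ there determined by $u=u_{\Sigma_0}(r_0)$ through the monotone function $r_*$. Crucially, the data surface $\Sigma_0$ does \emph{not} bound these $r$-segments, which is exactly why the final estimate contains no $\Sigma_0$-term. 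Since $\mathcal{A}_{r_0}$ stays in $\{r\ge r_0>M\}$, no limiting argument at the horizon is required, in contrast to the first Hardy inequality.

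Concretely, I would take the linear weight $w(r)=r_1-r$, so that $\partial_r\big(w\,f^2\big)=-f^2+2\,w\,f\,\partial_r f$, and integrate this identity over $\mathcal{A}_{r_0}$ — equivalently, apply the fundamental theorem of calculus in $r$ on each fixed-$(v,\omega)$ segment running from $H_{r_0}$ to $B_R$. This produces precisely three contributions: a boundary term on $B_R$ carrying the factor $w(R)=r_1-R$ (the origin of the coefficient $r_1-R$, and the reason the $B_R$-integral disappears when $r_1=R$); the boundary flux on $H_{r_0}$, weighted by $w$ along the cone, whose coefficient is governed by the value $w(r_0)=r_1-r_0$ at the inner edge $r=r_0$ where $H_{r_0}$ meets $\Sigma_0$; and the bulk term $\int_{\mathcal{A}_{r_0}}f^2$ coming from $w'=-1$, together with the cross term $\int_{\mathcal{A}_{r_0}}2\,w\,f\,\partial_r f$.

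After transposing the $H_{r_0}$-flux to the left, the only analytic step left is to absorb the cross term by a weighted Cauchy–Schwarz inequality, $2\,w\,|f|\,|\partial_r f|\le \epsilon\,(\partial_r f)^2+\epsilon^{-1}\,w^2 f^2$, which feeds $\epsilon\int_{\mathcal{A}_{r_0}}(\partial_r f)^2$ into the first right-hand term and, combined with the bulk $\int_{\mathcal{A}_{r_0}}f^2$ arising from $w'$, assembles the coefficient $1+\epsilon^{-1}$ in front of $\int_{\mathcal{A}_{r_0}}f^2$; the parameter $\epsilon>0$ is free throughout. The step I expect to require the most care is the boundary bookkeeping for the null hypersurface $H_{r_0}$: one must fix the orientation so that the $B_R$-contribution enters with the stated sign, verify that for every admissible $v$ the inner endpoint of the $r$-integration really lands on $H_{r_0}$ (this uses $r_1>r_0$ and the monotonicity of $r_*$), and match the $w$-weighted cone flux with the left-hand side $\int_{H_{r_0}}f^2$, the factor $r_1-r_0$ being read off from the weight at the inner edge. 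Once this geometry is settled, the remainder is the routine one-dimensional weighted Hardy computation.
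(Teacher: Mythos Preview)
Your approach is essentially identical to the paper's: integrate $\partial_r\big((r-r_1)f^2\big)$ over $\mathcal{A}_{r_0}$ (equivalently your weight $w=r_1-r$), read off the $B_R$ and $H_{r_0}$ boundary terms, and then apply a weighted Cauchy--Schwarz to the cross term. You even flag the one place that deserves care---that the weight $w$ is not constant along $H_{r_0}$ since $r$ varies from $r_0$ up to $R$ there---which the paper's short proof passes over silently; the remaining bookkeeping (and the $(1+\epsilon^{-1})$ constant, which tacitly uses a uniform bound on $w^2$ in $\mathcal{A}_{r_0}$) is exactly as you describe. One small slip: the condition $r_1>r_0$ plays no role in showing that the inner endpoint of each $r$-segment lies on $H_{r_0}$; that is purely a consequence of the geometry of $\mathcal{A}_{r_0}$ and the monotonicity of $r_*$.
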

\begin{proof}
In a similar fashion as above, we integrate $\partial_r(f^2)$ over $\mathcal{A}_{r_0}$ to obtain:
\begin{equation*}
\begin{split}
\int_{H_{r_0}}f^2\,d\omega dv=&\:-2\int_{\mathcal{A}_{r_0}}f\partial_r f\,d\omega dr dv+\int_{B_R}f^2\,d\omega dv.
\end{split}
\end{equation*}
The inequality (\ref{2hardy}) follows immediately after applying a weighted Cauchy--Schwarz inequality on the first term.
\end{proof}

We finally state yet another Hardy inequality, the proof of which can be found in \cite{SA10} (Proposition 6.0.4):
\begin{proposition}[Third Hardy inequality]
Let $f:\mathcal{M}\to \mathbb{R}$ be a $C^1$ function. Let $r_1>r_0$ such that $M<r_{0}<r_{1}<2M$. We define the regions
\[\mathcal{A}=\mathcal{R}(0,\tau)\cap \left\{ M\leq r\leq r_{0} \right\}, \]
\[\mathcal{B}=\mathcal{R}(0,\tau)\cap \left\{r_{0}\leq r\leq r_{1} \right\}. \]
 Then, 
\begin{equation}
\label{3hardy}
\begin{split}
\int_{\mathcal{A}}f^{2}\leq C\int_{\mathcal{B}}f^{2}+C\int_{\mathcal{A}\cup\mathcal{B}}D\cdot [(\partial_{v}f)^{2}+(\partial_{r}f)^{2}],
\end{split}
\end{equation}
where the constant $C$ depends only on $M, r_{0}, r_{1}$ and $\Sigma_{0}$.
\end{proposition}

\subsection{The estimate for the spherical mean}
\label{sec:TheEstimateForTheSphericalMean}

Let $\psi_{0}$ denote the spherical mean of $\psi$, that is
\[\psi_{0}(v,r)=\int_{\mathbb{S}^{2}}\psi (v,r,\omega) \,d\omega,\]
where $\omega =(\theta,\phi)$ and $d\omega=\sin\theta d\theta\, d\phi$. We will prove the following proposition for $\psi_{0}$.

\begin{proposition} There is a constant $C>0$ that depends only on the mass parameter $M$ such that for spherically symmetric solutions $\psi_{0}$ to the wave equation on extremal Reissner--Nordstr\"{o}m which arise from initial data with bounded norm
\begin{equation}
D_{\Sigma_{0}}[\psi_{0}]: = \ \int_{\Sigma_{0}\cap\left\{r\leq R\right\}}\left[\frac{1}{(r-M)}\cdot \Big(\partial_{r}(r\psi_{0})\Big)^{2}\right]dr\,d\omega+\int_{\Sigma_{0}}\Big(J^{T}[\psi_{0}]\cdot \textbf{n}_{\Sigma_{0}}\Big)\, dg_{\Sigma_{0}}
\label{d0norm}
\end{equation}
 the following estimate holds
\begin{equation}
\begin{split}
\int_{\mathcal{A}}\Big[\psi_{0}^{2}+(\partial_{r}\psi_{0})^{2}+(\partial_{v}\psi_{0})^{2}\Big]\, dg_{\mathcal{A}}\leq  & \ C \cdot D_{\Sigma_{0}}[\psi].  
\end{split}
\label{prop1l0e1}
\end{equation}
Here the $J^{T}$ flux is as defined in Section \ref{background}.
\label{prop1l0}
\end{proposition}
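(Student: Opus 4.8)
The plan is to reduce the problem to a $1+1$-dimensional estimate and then exploit the singular multiplier $S=-\frac{1}{r-M}Y$. First I would insert the spherical decomposition into $\Box_g\psi_0=0$; since $\psi_0$ carries no angular dependence, the equation collapses to a wave equation in the $(v,r)$ variables. Using the metric in ingoing Eddington--Finkelstein coordinates and the crucial algebraic simplification $r^2 D=(r-M)^2$, the reduced equation takes the clean form
\begin{equation*}
2\,\partial_v\partial_r\psi_0+\frac{2}{r}\,\partial_v\psi_0+\frac{1}{r^2}\,\partial_r\big((r-M)^2\partial_r\psi_0\big)=0 .
\end{equation*}
I would also record that the quantity $\partial_r(r\psi_0)=r\,\partial_r\psi_0+\psi_0$ appearing in $D_{\Sigma_0}[\psi_0]$ equals, at $r=M$, exactly $M$ times the conserved horizon charge density $Y\psi_0+\frac1M\psi_0$. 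This both explains why the weighted singular norm is the natural object and anticipates that its finiteness forces the charge to vanish. No Poincar\'e inequality is needed here, in contrast with the $\psi_{\geq 1}$ analysis.

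Next I would multiply the reduced equation by the volume-weighted multiplier $-\frac{r^2}{r-M}\,\partial_r\psi_0$ and integrate over the truncated region $\mathcal{A}_{r_0}$, which avoids the horizon so that $S$ is regular throughout and all manipulations are justified. Testing the spatial part $\partial_r((r-M)^2\partial_r\psi_0)$ against $-\frac{1}{r-M}\partial_r\psi_0$ produces, after one integration by parts in $r$, a coercive bulk term comparable to $\int_{\mathcal{A}_{r_0}}(\partial_r\psi_0)^2$ together with boundary contributions on $H_{r_0}$, $B_R$ and $\Sigma_0$; this is the source of the non-degenerate control of $Y\psi_0$ up to the horizon. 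Writing $\Phi=r\psi_0$, the temporal part equals $2\partial_v(r\,\partial_r\Phi)$ tested against the same factor, which splits into a pure flux term $\sim\frac{r^2}{r-M}(\partial_r\psi_0)^2$ on the spacelike boundaries (to be matched against the singular part of $D_{\Sigma_0}$) and the critical cross term $\int_{\mathcal{A}_{r_0}}\frac{1}{r-M}\,T\psi_0\,Y\psi_0$.

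The main obstacle is precisely this cross term. Its integrand is non-integrable over a fixed slice $\Sigma_\tau$ because of the $\frac{1}{r-M}$ weight, so the key step is to perform the $v$-integration before the $r$-integration. Integrating by parts in $v$ converts the worst contribution into boundary fluxes on $\Sigma_0$ and on the future boundary, and I would use the energy decay estimate \eqref{tfluxdecayold} together with the vanishing of the charge $H[\psi]$ to show that the remaining $r$-integral against $\frac{1}{r-M}$ has a \emph{finite} limit as $r_0\downarrow M$. This is the analytic heart of the argument, and the point at which the degeneracy of the horizon is genuinely used: in the sub-extremal case the corresponding limit diverges.

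Finally I would close the estimate and pass to the limit. The lower-order term $\int_{\mathcal{A}}\psi_0^2$ follows from the first Hardy inequality \eqref{1hardy} with $p=0$ and $f=\psi_0$, since $(r-M)^2$ is bounded on $\mathcal{A}$ and the $B_R$ boundary term is controlled by the $J^T$-energy. The time derivative $\int_{\mathcal{A}}(\partial_v\psi_0)^2$ is obtained from the previously established degenerate Morawetz estimate \eqref{dmora}, after observing that the singular term in $D_{\Sigma_0}[\psi_0]$ upgrades the degenerate $J^T$-flux to a full non-degenerate ($N$-type) energy flux of $\psi_0$ on $\Sigma_0$. Taking $r_0\downarrow M$, the boundedness of $D_{\Sigma_0}[\psi_0]$ guarantees that every singular boundary and bulk term converges, yielding \eqref{prop1l0e1}. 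I expect the precise sign bookkeeping in the multiplier identity, and the rigorous justification that the time-then-space integration of the cross term converges, to be the most delicate points; the Hardy and energy inputs are comparatively routine.
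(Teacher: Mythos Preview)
Your setup is correct and matches the paper: the singular multiplier $S=-\frac{1}{r-M}\partial_r$ applied in $\mathcal{A}_{r_0}$, the identification of the cross term $I_4=\int_{\mathcal{A}_{r_0}}\frac{2r}{r-M}\,\partial_v\psi_0\,\partial_r\psi_0$ as the critical object, and the first integration by parts in $v$. The gap is in what comes next.

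After the $v$-integration by parts you are left with the bulk term $I_5=\int_{\mathcal{A}_{r_0}}\frac{r}{r-M}\,\psi_0\,(-2\partial_v\partial_r\psi_0)$, which is still singular. Your plan is to invoke the energy decay estimate \eqref{tfluxdecayold} and the vanishing of $H[\psi]$ to control it. This does not work. First, the future boundary of $\mathcal{A}_{r_0}$ is the outgoing null hypersurface $H_{r_0}$, not a late-time slice $\Sigma_\tau$, so \eqref{tfluxdecayold} is not the relevant flux; and even on a late-time slice, $J^T$ degenerates at the horizon and gives no non-degenerate control of $Y\psi_0$. Second, the vanishing of the charge is already encoded in the finiteness of $D_{\Sigma_0}[\psi_0]$ and plays no separate role in bounding the bulk; the paper obtains an \emph{identity} valid whether or not the charge vanishes.

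The paper's mechanism is purely algebraic and uses no decay: one substitutes the wave equation a \emph{second} time into $I_5$, replacing $-2\partial_v\partial_r\psi_0$ by $D\,\partial_r^2\psi_0+\tfrac{2}{r}\partial_v\psi_0+R\,\partial_r\psi_0$. Since $D=(r-M)^2/r^2$ and $R=2(r-M)/r^2$, the $\tfrac{1}{r-M}$ weight is cancelled in the $\partial_r^2$ and $\partial_r$ pieces, while the remaining $\tfrac{2}{r(r-M)}\psi_0\partial_v\psi_0$ is a total $v$-derivative whose $H_{r_0}$ boundary term acquires an $(r-M)$ factor via $dr=\tfrac{(r-M)^2}{2r^2}\,dv$ on $H_{r_0}$. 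The outcome is the exact identity \eqref{l0f2}, in which the \emph{only} singular term is the $\Sigma_0$ integral $\int\frac{r^2}{r-M}(\partial_r\psi_0+\tfrac{1}{r}\psi_0)^2$, i.e.\ precisely the first piece of $D_{\Sigma_0}[\psi_0]$. The leftover $H_{r_0}$ term $-\int\frac{1}{2r^2}\psi_0^2$ is then absorbed by the second Hardy inequality \eqref{2hardy}, and the $B_R$ term by an averaging argument over $r\in[M+\epsilon/8,\,M+\epsilon/4]$ together with the degenerate Morawetz estimate, not directly by the $J^T$ flux. Your closing steps (Hardy for $\psi_0^2$, degenerate Morawetz for $(\partial_v\psi_0)^2$) are fine.
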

\begin{proof}
We  apply the \textit{singular} vector field 
\begin{equation}
S=-\frac{1}{(r-M)}\cdot\partial_{r}
\label{defs}
\end{equation}
as our \textit{multiplier} vector field in the spacetime region $\mathcal{A}_{r_{0}}$ bounded by the hypersurfaces $\Sigma_{0}=\left\{v=0\right\}\cap\left\{r\leq R\right\}$,  $B_{R}=\left\{r=R\right\}$,  for some large $R>M$, and $H_{r_{0}}=\left\{u=u_{0}\right\}$, where $v_{0}=v(p)$ and the point $p$ is on the hypersurface $\Sigma_{0}$ such that $R>r(p)=r_{0}>M$. Here $u$ is the retarded null coordinate and $v$ is the advanced null coordinate (see Section \ref{background}). Clearly, we have that $H_{r_{0}}\rightarrow \mathcal{H}$ as $r_{0}\rightarrow M$. We therefore use that 
\begin{equation}
\int_{\mathcal{A}_{r_{0}}}\left(\frac{1}{(r-M)}\cdot \partial_{r}\psi_{0}\cdot \Box_{g}\psi_{0} \right)\cdot r^{2} dr\, dv\, d\omega=0.
\label{1l0}
\end{equation}

Since $\psi_{0}$ is spherically symmetric we have
\begin{equation}
\Box_{g}\psi_{0}=D\cdot\partial_{r}\partial_{r}\psi_{0}+2\partial_{v}\Big(H[\psi_{0}]\Big)+R\cdot\partial_{r}\psi_{0}=0
\label{wel0}
\end{equation}
where the partial derivatives are taken with respect to the ingoing Eddington--Finkelstein coordinates $(v,r,\theta,\phi)$ and
\begin{equation}
D=\left(\frac{r-M}{r}\right)^{2}, \ \ \ \ R=\frac{2}{r^{2}}(r-M), \ \ \ \  H[\psi_{0}]=\partial_{r}\psi_{0}+\frac{1}{r}\psi_{0}.
\label{defsl0}
\end{equation}
We therefore obtain that 
\begin{equation}
I_{1}+I_{2}+I_{3}=0,
\label{i}
\end{equation}
where 
\begin{align}
\label{i1}
I_{1}=&\int_{\mathcal{A}_{r_{0}}}(r-M)\cdot \partial_{rr}\psi_{0}\cdot \partial_{r}\psi_{0}\  dr\,dv\,d\omega,\\
\label{i2}
I_{2}=&\int_{\mathcal{A}_{r_{0}}}\frac{2r^{2}}{(r-M)}\cdot \partial_{r}\psi_{0}\cdot \partial_{v}\Big(H[\psi_{0}]\Big) \ dr\,dv\,d\omega,\\
\label{i3}
I_{3}=&\int_{\mathcal{A}_{r_{0}}}2(\partial_{r}\psi_{0})^{2}\  dr\,dv\,d\omega.
\end{align}
By integrating by parts\footnote{Note that $\partial_{r}$ is tangential to $\Sigma_{0}$.} with respect to $\partial_{r}$ in  $\mathcal{A}_{r_{0}}$ we obtain
\begin{equation}
\begin{split}
I_{1}=& \int_{\mathcal{A}_{r_{0}}}\left(\partial_{r}\left(\frac{1}{2}(r-M)\cdot (\partial_{r}\psi_{0})^{2}\right)^{2}-\frac{1}{2}(\partial_{r}\psi_{0})^{2}\right)\ dr\,dv\,d\omega\\
=& -\frac{1}{2}\int_{\mathcal{A}_{r_{0}}}(\partial_{r}\psi_{0})^{2}\ dr\,dv\,d\omega +\int_{B_{R}}\frac{1}{2}(r-M)\cdot (\partial_{r}\psi_{0})^{2}\ dv\, d\omega \\&-\int_{H_{r_{0}}}\frac{1}{2}(r-M)\cdot (\partial_{r}\psi_{0})^{2}\ dv\,d\omega. 
\end{split}
\label{i12}
\end{equation}
Note that \textbf{the coefficient of the spacetime integral on the right hand side has the wrong sign} and hence its precise value plays a fundamental role in our analysis. Specifically, it is crucial that \textit{the coefficient of the spacetime integral is stictly less than $2$ and hence this integral can be absorbed by the integral} $I_{3}$ (see equation \eqref{i3}). 

By integrating by parts\footnote{Note that $\partial_{v}$ is tangential to $B_{R}$.} with respect to $\partial_{v}$ in  $\mathcal{A}_{r_{0}}$, and using equations \eqref{defsl0}, \eqref{i3} and that 
\begin{equation}
dr=\frac{(r-M)^{2}}{2r^{2}}dv \ \ : \ \  \text{ along } H_{r_{0}}  
\label{drdv}
\end{equation}
 we obtain
\begin{equation}
\begin{split}
I_{2}&= I_{4}+ \int_{\mathcal{A}_{r_{0}}}\partial_{v}\left(\frac{r^{2}}{(r-M)}\cdot (\partial_{r}\psi_{0})^{2}\right)\ dv\, dr\, d\omega\\
&= I_{4}+\int_{H_{r_{0}}}\frac{r^{2}}{(r-M)}\cdot (\partial_{r}\psi_{0})^{2}\, dr\,d\omega-\int_{\Sigma_{0}\cap\left\{r_{0}\leq r\leq R\right\}}\frac{r^{2}}{(r-M)}\cdot (\partial_{r}\psi_{0})^{2}\, dr\,d\omega\\
&\overset{\eqref{drdv}}{=} I_{4}+\int_{H_{r_{0}}}\frac{1}{2}\cdot (r-M) (\partial_{r}\psi_{0})^{2}\, dv\,d\omega-\int_{\Sigma_{0}\cap\left\{r_{0}\leq r\leq R\right\}}\frac{r^{2}}{(r-M)}\cdot (\partial_{r}\psi_{0})^{2}\, dr\,d\omega,
\end{split}
\label{i22}
\end{equation}
where 
\begin{equation}
I_{4}= \int_{\mathcal{A}_{r_{0}}}\frac{2r}{(r-M)}\cdot (\partial_{v}\psi_{0})\cdot(\partial_{r}\psi_{0}) \, dr\, dv\, d\omega. 
\label{i4}
\end{equation}
Clearly, all the terms are regular apart from the term $I_{4}$ which is singular when we take the limit $r_{0}\rightarrow M$. We will show that \textbf{in view of the special structure of the geometry of degenerate horizons we are able to bound this integral  in terms of a weighted norm of the initial data on $\Sigma_{0}$ only}  (see also the discussion in Section \ref{sec:Discussion}). Indeed, by integrating by parts with respect to $\partial_{v}$ we obtain
\begin{equation}
\begin{split}
I_{4}=&I_{5}+\int_{\mathcal{A}_{r_{0}}}\partial_{v}\left(\frac{2r}{(r-M)}\cdot \psi_{0}\cdot \partial_{r}\psi_{0}\right)\, dv\,dr\,d\omega\\
=& I_{5}+\int_{H_{r_{0}}}\frac{2r}{(r-M)}\cdot \psi_{0}\cdot (\partial_{r}\psi_{0})\, dr\, d\omega -\int_{\Sigma_{0}\cap\left\{r_{0}\leq r\leq R\right\}}\frac{2r}{(r-M)}\cdot\psi_{0}\cdot  \partial_{r}\psi_{0}\, dr\,d\omega \\
\overset{\eqref{drdv}}{=}&
I_{5}+\int_{H_{r_{0}}}\frac{(r-M)}{r}\cdot \psi_{0}\cdot (\partial_{r}\psi_{0})\, dv\, d\omega -\int_{\Sigma_{0}\cap\left\{r_{0}\leq r\leq R\right\}}\frac{2r}{(r-M)}\cdot\psi_{0}\cdot  \partial_{r}\psi_{0}\, dr\,d\omega 
\end{split}
\label{i41}
\end{equation}
where
\begin{equation}
I_{5}=\int_{\mathcal{A}_{r_{0}}}\frac{r}{(r-M)}\cdot\psi_{0}\cdot (-2\partial_{v}\partial_{r}\psi_{0})\, dv\,dr\,d\omega.
\label{i5}
\end{equation}
The wave equation \eqref{wel0} and the expression \eqref{defsl0}  for $H[\psi_{0}]$ yield
\begin{equation}
\begin{split}
I_{5}=&\int_{\mathcal{A}_{r_{0}}}\frac{r}{(r-M)}\cdot \psi_{0}\cdot \!\!\left[\frac{(r-M)^{2}}{r^{2}}\cdot \partial_{r}\partial_{r}\psi_{0}+\frac{2}{r}\cdot \partial_{v}\psi_{0}+\frac{2}{r^{2}}\cdot (r-M)\cdot \partial_{r}\psi_{0}\right]dr\,dv\,d\omega\\
=& I_{6}+I_{7}+I_{8},
\end{split}
\label{i51}
\end{equation}
where 
\begin{align}
\label{i6}
I_{6}=&\int_{\mathcal{A}_{r_{0}}}\frac{2}{(r-M)}\cdot\psi_{0}\cdot \partial_{v}\psi_{0} \, dr\,dv\,d\omega,\\
\label{i7}
I_{7}=&\int_{\mathcal{A}_{r_{0}}}\frac{2}{r}\cdot \psi_{0}\cdot\partial_{r}\psi_{0} \, dr\,dv\,d\omega,\\
\label{i8}
I_{8}=&\int_{\mathcal{A}_{r_{0}}}\frac{(r-M)}{r}\cdot\psi_{0}\cdot\partial_{r}\partial_{r}\psi_{0} \, dr\,dv\,d\omega.
\end{align}
Furthermore,
\begin{equation}
\begin{split}
I_{6}=&\int_{\mathcal{A}_{r_{0}}}\partial_{v}\left(\frac{1}{(r-M)}\cdot\psi_{0}^{2}\right)dr\,dv\,d\omega\\
=&\int_{H_{r_{0}}}\frac{1}{(r-M)}\cdot\psi_{0}^{2}\, dr\,d\omega -\int_{\Sigma_{0}\cap\left\{r_{0}\leq r\leq R\right\}}\frac{1}{(r-M)}\cdot\psi_{0}^{2}\, dr\,d\omega\\
\overset{\eqref{drdv}}{=}&\int_{H_{r_{0}}}\frac{(r-M)}{2r^{2}}\cdot\psi_{0}^{2}\, dv\,d\omega -\int_{\Sigma_{0}\cap\left\{r_{0}\leq r\leq R\right\}}\frac{1}{(r-M)}\cdot\psi_{0}^{2}\, dr\,d\omega.
\end{split}
\label{i61}
\end{equation}
Similarly, we obtain
\begin{equation}
\begin{split}
I_{7}=&\int_{\mathcal{A}_{r_{0}}}\frac{1}{r}\cdot\partial_{r}\psi_{0}^{2}\, dr\, dv\, d\omega=\int_{\mathcal{A}_{r_{0}}}\left[\partial_{r}\left(\frac{\psi_{0}^{2}}{r}\right)+\frac{1}{r^{2}}\cdot\psi_{0}^{2}\right] dr\, dv\, d\omega\\
=& \int_{B_{R}}\frac{\psi_{0}^{2}}{r}\, dv\, d\omega -\int_{H_{r_{0}}}\frac{\psi_{0}^{2}}{r}\, dv\,d\omega +\int_{\mathcal{A}_{r_{0}}}\frac{1}{r^{2}}\cdot \psi_{0}^{2}\, dr\, dv\, d\omega\\
=& \int_{B_{R}}\frac{\psi_{0}^{2}}{r}\, dv\, d\omega -I_{9}+\int_{\mathcal{A}_{r_{0}}}\frac{1}{r^{2}}\cdot \psi_{0}^{2}\, dr\, dv\, d\omega,
\end{split}
\label{i71}
\end{equation}
where 
\begin{equation}
I_{9}=\int_{H_{r_{0}}}\frac{\psi_{0}^{2}}{r}\, dv\,d\omega. 
\label{i9}
\end{equation}
Observe that the middle integral $I_{9}$ above has the wrong sign in the expression for $I_{7}$ in \eqref{i71}. This will be later remedied using an appropriate Hardy inequality.

Regarding the integral $I_{8}$ we obtain the following
\begin{equation}
\begin{split}
I_{8}=&\int_{\mathcal{A}_{r_{0}}}\left[\partial_{r}\left(\frac{(r-M)}{r}\cdot\psi_{0}\cdot\partial_{r}\psi_{0}\right)-\frac{(r-M)}{r}\cdot (\partial_{r}\psi_{0})^{2}-\frac{M}{r^{2}}\cdot \psi_{0}\cdot \partial_{r}\psi_{0}\right]dr\,dv\, d\omega\\
=& \int_{B_{R}}\frac{(r-M)}{r}\cdot\psi_{0}\cdot\partial_{r}\psi_{0}\, dv\,d\omega -\int_{H_{r_{0}}}\frac{(r-M)}{r}\cdot\psi_{0}\cdot\partial_{r}\psi_{0}\, dv\,d\omega\\&-\int_{\mathcal{A}_{r_{0}}}\frac{(r-M)}{r}\cdot (\partial_{r}\psi_{0})^{2}\, dr\,dv\,d\omega -I_{10},
\end{split}
\label{i81}
\end{equation}
where 
\begin{equation}
\begin{split}
I_{10}=&\int_{\mathcal{A}_{r_{0}}}\frac{M}{2r^{2}}\cdot\partial_{r}\psi_{0}^{2}\, dr\,dv\,d\omega= \int_{\mathcal{A}_{r_{0}}}
\left[\partial_{r}\left(\frac{M}{2r^{2}}\cdot\psi_{0}^{2}\right)+\frac{M}{r^{3}}\cdot\psi_{0}^{2}\right] dr\,dv\,d\omega\\
=&\int_{B_{R}}\frac{M}{2r^{2}}\cdot \psi_{0}^{2}\,dv\,d\omega -\int_{H_{r_{0}}}\frac{M}{2r^{2}}\cdot\psi_{0}^{2}\,dv\,d\omega+\int_{\mathcal{A}_{r_{0}}}\frac{M}{r^{3}}\cdot\psi_{0}^{2}\, dr\,dv\,d\omega.
\label{i10}
\end{split}
\end{equation}
Therefore, by   using equations \eqref{i}, \eqref{i3}, \eqref{i12}, \eqref{i22}, \eqref{i41},  \eqref{i51},  \eqref{i61},  \eqref{i71},  \eqref{i81}, \eqref{i10} and grouping all the integral terms in $I_{1},I_{2},I_{3}$ we obtain
\begin{equation}
\begin{split}
0=& \int_{\mathcal{A}_{r_{0}}}\left[-\frac{1}{2}\cdot(\partial_{r}\psi_{0})^{2}+\frac{1}{r^{2}}\cdot\psi_{0}^{2}-\frac{(r-M)}{r}\cdot(\partial_{r}\psi_{0})^{2}-\frac{M}{r^{3}}\cdot \psi^{2}+2(\partial_{r}\psi_{0})^{2}\right]dr\,dv\,d\omega \\
&+\int_{H_{r_{0}}}\left[\underbrace{-\frac{1}{2}(r-M)\cdot(\partial_{r}\psi_{0})^{2}}_{1}+\underbrace{\frac{(r-M)}{r}\cdot\psi_{0}\cdot\partial_{r}\psi_{0}}_{2}+\frac{(r-M)}{2r^{2}}\cdot\psi_{0}^{2}\right]dv\,d\omega \\ 
&+\int_{H_{r_{0}}}\left[-\frac{1}{r}\cdot\psi_{0}^{2}+\underbrace{\frac{1}{2}(r-M)\cdot(\partial_{r}\psi_{0})^{2}}_{1}+\frac{M}{2r^{2}}\cdot\psi_{0}^{2}\underbrace{-\frac{(r-M)}{r}\cdot\psi_{0}\cdot\partial_{r}\psi_{0}}_{2}\right]dv\,d\omega \\
&+ \int_{\Sigma_{0}\cap\left\{r_{0}\leq r\leq R\right\}}\left[-\frac{r^{2}}{(r-M)}\cdot(\partial_{r}\psi_{0})^{2}-\frac{2r}{(r-M)}\cdot\psi_{0}\cdot\partial_{r}\psi_{0}-\frac{1}{(r-M)}\cdot\psi_{0}^{2}\right]dr\,d\omega \\
&+\int_{B_{R}}\Big[K_{R}[\psi]\Big]dv\,d\omega,
\end{split}
\label{l0f1}
\end{equation}
where 
\begin{equation}
K_{R}[\psi_{0}]=\frac{1}{2}\cdot (r-M)\cdot(\partial_{r}\psi_{0})^{2}+\left(\frac{2r-M}{2r^{2}}\right)\cdot \psi_{0}^{2}+\frac{(r-M)}{r}
\cdot\psi_{0}\cdot\partial_{r}\psi_{0}.
\label{krpsi}
\end{equation}
Hence, by noting all terms that cancel out,  we have established that
\begin{equation}
\begin{split}
& \int_{\mathcal{A}_{r_{0}}}\left[\left(\frac{3}{2}-\frac{(r-M)}{r}\right)\cdot (\partial_{r}\psi_{0})^{2}+\left(\frac{r-M}{r^{3}}\right)\cdot \psi_{0}^{2}\right]dr\,dv\,d\omega \\
=& \int_{\Sigma_{0}\cap\left\{r_{0}\leq r\leq R\right\}}\left[\frac{r^{2}}{(r-M)}\cdot\left(\partial_{r}\psi_{0}+\frac{1}{r}\psi_{0}\right)^{2}\right]dr\,d\omega \\
&+\int_{H_{r_{0}}}\left[-\frac{1}{2r}\cdot\psi_{0}^{2}\right]dv\,d\omega -\int_{B_{R}}\Big[K_{R}[\psi_{0}]\Big]dv\,d\omega. 
\end{split}
\label{l0f2}
\end{equation}
By the second Hardy inequality of Section \ref{sec:HardyInequalities} we have that there is an $\epsilon_{1}(M)>0$ such that for all $0<\epsilon<\epsilon_{1}$ we have
\begin{equation}
\int_{H_{r_{0}}}\left[\frac{1}{2r}\cdot\psi_{0}^{2}\right]dv\,d\omega\leq \epsilon  \int_{\mathcal{A}_{r_{0}}}(\partial_{r}\psi_{0})^{2}\, dr\, dv\,d\omega +
\frac{1}{\epsilon} \int_{\mathcal{A}_{r_{0}}}\psi_{0}^{2}\, dr\, dv\, d\omega +\frac{1}{\epsilon}\int_{B_{R}}E_{1}[\psi_{0}]\, dv\, d\omega,
\label{hb1}
\end{equation}
where 
\begin{equation}
E_{1}[\psi_{0}]\sim \psi_{0}^{2}
\label{e1}
\end{equation}
where the constants in $\sim$ depend only on $M$ (recall that $M<R<2M$). 

By the first Hardy inequality of Section \ref{sec:HardyInequalities} we have that there is an $\epsilon_{2}(M)>0$ such that for all $0<\epsilon<\epsilon_{2}$ we have
\begin{equation}
\frac{1}{\epsilon} \int_{\mathcal{A}_{r_{0}}}\psi_{0}^{2}\, dr\, dv\, d\omega\leq 
\frac{1}{\epsilon^{2}} \int_{\mathcal{A}_{r_{0}}}(r-M)^{2}\cdot(\partial_{r}\psi_{0})^{2}\, dr\, dv\, d\omega+\frac{1}{\epsilon^{2}}\int_{B_{R}}E_{2}[\psi_{0}]\, dv\, d\omega,
\label{hb2}
\end{equation}
where 
\begin{equation}
E_{2}[\psi_{0}]\sim \psi_{0}^{2}
\label{e1}
\end{equation}
where the constants in $\sim$ depend only on $M$.
Noting that for all $r\geq M$ we have
\[ \frac{3}{2}-\frac{(r-M)}{r}>\frac{1}{2}\]
and using \eqref{l0f2} and  $\epsilon$ in \eqref{hb1}, \eqref{hb2} sufficiently small,  we obtain
\begin{equation}
\begin{split}
\int_{\mathcal{A}_{r_{0}}}\frac{1}{2}\cdot (\partial_{r}\psi_{0})^{2}\, dr\, dv\, d\omega \leq &\int_{\Sigma_{0}\cap\left\{r_{0}\leq r\leq R\right\}}\left[\frac{1}{(r-M)}\cdot \Big(\partial_{r}(r\psi_{0})\Big)^{2}\right]dr\,d\omega\\
&+  \int_{\mathcal{A}_{0}}\left[\epsilon+\frac{(r-M)^{2}}{\epsilon^{2}}\right](\partial_{r}\psi_{0})^{2}\, dr\, dv\,d\omega\\
&+\int_{B_{R}}E_{3}[\psi_{0}]\,dv\, d\omega,
\end{split}
\label{cf1}
\end{equation}
where
\begin{equation}
E_{3}[\psi_{0}]=E_{1}[\psi_{0}]+E_{2}[\psi_{0}]-K_{R}[\psi_{0}]\sim \psi_{0}^{2}+(\partial_{r}\psi_{0})^{2}.
\label{eq:}
\end{equation}
We now choose $\epsilon$ such that
\begin{equation}
\epsilon=\min\left\{\frac{1}{16}, \epsilon_{1},\epsilon_{2}\right\},
\label{e}
\end{equation}
where $\epsilon_{1},\epsilon_{2}$ are the constants of the Hardy inequalities \eqref{hb1}, \eqref{hb2}, respectively. 
Clearly, with this choice $\epsilon$ depends only on $M$. Recalling that in region $\mathcal{A}_{r_{0}}$ we have $r\leq R$, we impose on $R$ the condition 
\begin{equation*}
\frac{1}{\epsilon^{2}}\cdot \frac{(R-M)^{2}}{M^{2}}\leq\frac{1}{16}
\end{equation*}
which implies 
\begin{equation}
\frac{R}{M}\leq 1+\frac{\epsilon}{4},
\label{R}
\end{equation}
where $\epsilon$ is given by \eqref{e}. With these conditions for $\epsilon,R$, estimate \eqref{cf1} yields the following
\begin{equation}
\begin{split}
\int_{\mathcal{A}_{r_{0}}}\frac{1}{4}\cdot (\partial_{r}\psi_{0})^{2}\, dr\, dv\, d\omega \leq &\int_{\Sigma_{0}\cap\left\{r_{0}\leq r\leq R\right\}}\left[\frac{1}{(r-M)}\cdot \Big(\partial_{r}(r\psi_{0})\Big)^{2}\right]dr\,d\omega\\
&+\int_{B_{R}}E_{3}[\psi_{0}]\,dv\, d\omega,
\end{split}
\label{cf2}
\end{equation}
We finally need to bound the boundary integral over $B_{R}$. In view of the degenerate Morawetz estimate of \cite{aretakis1} we have that 
\begin{equation*}
\int_{\left\{M+\epsilon\frac{M}{8}\leq r\leq M+\epsilon\frac{M}{4}\right\}}\Big[(\partial_{r}\psi_{0})^{2}+\psi_{0}^{2}\Big] dr\,dv\,d\omega \leq C_{\tilde{\epsilon}}\int_{\Sigma_{0}}\Big(J^{T}[\psi_{0}]\cdot \textbf{n}_{\Sigma_{0}}\Big)\, dg_{\Sigma_{0}}.
\end{equation*}
Hence, by the averaging principle, there is a value 
\begin{equation}
\tilde{R}\in \left[M+\epsilon\frac{M}{8},M+\epsilon\frac{M}{4}\right]
\label{R1}
\end{equation}
such that 
\begin{equation}
\int_{\left\{r=\tilde{R}\right\}}\Big[(\partial_{r}\psi_{0})^{2}+\psi_{0}^{2}\Big] dv\,d\omega \leq C\int_{\Sigma_{0}}\Big(J^{T}[\psi_{0}]\cdot \textbf{n}_{\Sigma_{0}}\Big)\, dg_{\Sigma_{0}}.
\label{averaging}
\end{equation}
Note that $C$ depends only on $M$ since $\epsilon$ has already been chosen in \eqref{e}. Therefore, if we define \[R:=\tilde{R}\]
then 
\eqref{cf2} becomes
\begin{equation}
\begin{split}
\int_{\mathcal{A}_{r_{0}}}\frac{1}{4}\cdot (\partial_{r}\psi_{0})^{2}\, dr\, dv\, d\omega \leq &\int_{\Sigma_{0}\cap\left\{r_{0}\leq r\leq R\right\}}\left[\frac{1}{(r-M)}\cdot \Big(\partial_{r}(r\psi_{0})\Big)^{2}\right]dr\,d\omega\\
&+C\int_{\Sigma_{0}}\Big(J^{T}[\psi_{0}]\cdot \textbf{n}_{\Sigma_{0}}\Big)\, dg_{\Sigma_{0}}.
\end{split}
\label{cf3}
\end{equation}
Clearly, all the constants are independent of the constant $r_{0}$ in the definition of the spacetime region $\mathcal{A}_{r_{0}}$. Therefore, by taking $r_{0}\rightarrow M$ in \eqref{cf3} we obtain Proposition \eqref{prop1l0}. The bound on the zeroth order term follow from the first Hardy inequality \eqref{1hardy}.

\end{proof}

\subsection{The estimate for angular frequencies $\ell\geq 1$}
\label{sec:TheEstimateOnAngularFrequenciesEllGeq1}

For the projection on angular frequencies $\ell\geq 1$ we apply \textit{regular multiplier} \textbf{and} \textit{commutator} vector fields in the spacetime region $\mathcal{R}(0,\tau)$ bounded by the hypersurfaces $\Sigma_{0}$ and $\Sigma_{\tau}$.

\begin{proposition}\label{2penes2}
Let $\psi$ be a solution of the linear wave equation $\Box_g \psi = 0$. Then for any $\tau > 0$  we have that for the part of $\psi$ that is localized in angular frequencies $ \meg 1$ the following estimate holds true:
\begin{equation}\label{2penese2}
\int_{\si_{\tau} \cap \cala_{0}^{\tau}} \left(  J^P [ \partial_{r}\psi_{\meg 1}] \cdot \textbf{n}_{\si_{0}} \right) dg_{\si_{\tau}} + 
\end{equation}
$$ + \int_{\mathcal{A}_{0}^{\tau}} \Big( (\partial_{v}\partial_{r}\psi_{\meg 1} )^2 + (r-M)^{2} \cdot( \partial_{r}\partial_{r}\psi_{\meg 1} )^2 + |  \slashed{\nabla} \partial_r \psi_{\meg 1} |^2 \Big) dg_{\mathcal{A}_{0}^{\tau}} \lesssim  $$
$$ \lesssim \sum_{l=0,1} \int_{\si_{0}}  \left( J^P [ \partial_{v}^l \psi_{\meg 1} ] \cdot \textbf{n}_{\si_{0}} \right) dg_{\si_{0}} + \int_{\si_{0} \cap \cala_{0}^{\tau}}  \left( J^P [ \partial_{r}\psi_{\meg 1} ] \cdot \textbf{n}_{\si_{0}} \right) dg_{\si_{0}} , $$
where 
$$ \mathcal{A}_{0}^{\tau} = \rrr (0 , \tau )\cap \mathcal{A},$$ where $\mathcal{A}$ is as defined in Section \ref{background}. 
\end{proposition}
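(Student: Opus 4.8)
The plan is to estimate $\phi:=\partial_r\psi_{\geq1}$ by first commuting the wave equation with $Y=\partial_r$ and then using the \emph{degenerate} multiplier $\bar\partial:=-(r-M)\partial_r$. Writing the operator in $(v,r,\omega)$ coordinates as
\begin{equation*}
\Box_g f=\frac{(r-M)^2}{r^2}\partial_r^2 f+2\partial_v\partial_r f+\frac{2(r-M)}{r^2}\partial_r f+\frac{2}{r}\partial_v f+\frac{1}{r^2}\slashed{\Delta}f
\end{equation*}
(with $\slashed{\Delta}$ the unit-sphere Laplacian) and applying $\partial_r$ to $\Box_g\psi=0$, one finds that $\phi$ solves
\begin{equation*}
\Box_g\phi=-\frac{2M(r-M)}{r^3}\partial_r\phi-\frac{2(2M-r)}{r^3}\phi+\frac{2}{r^2}\partial_v\psi+\frac{2}{r^3}\slashed{\Delta}\psi.
\end{equation*}
The structural point I would emphasise first is that the leading source coefficient $2M(r-M)/r^3$ \emph{vanishes} at the horizon, matching the degeneracy of $\bar\partial$, whereas the next term is a genuine non-degenerate zeroth-order term in $\phi$; the last two terms are built from $\psi$ alone. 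Since we work on $\ell\geq1$, I keep in reserve Poincar\'e's inequality (Proposition \ref{yanprop}) to trade stray $\psi^2$ for $|\slashed{\nabla}\psi|^2$.

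Next I would combine two multiplier identities for $\phi$. Applying the energy multiplier $P$ (the mechanism behind the boundedness estimate \eqref{nfluxold}) produces the fluxes $J^P[\phi]=J^P[\partial_r\psi_{\geq1}]$ on $\Sigma_0$ and $\Sigma_\tau$ together with a controllable bulk. Applying the degenerate Morawetz multiplier $\bar\partial$ — that is, multiplying the commuted equation by $-(r-M)\partial_r\phi\cdot r^2$ and integrating over $\mathcal{A}_0^\tau$, then integrating by parts in $r$, in $v$, and on the spheres exactly as in \eqref{i}--\eqref{i12} — produces the coercive spacetime bulk $(\partial_v\phi)^2+(r-M)^2(\partial_r\phi)^2+|\slashed{\nabla}\phi|^2$ up to lower-order errors. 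Here one must again check, as in the passage around \eqref{i12}, that every spacetime term of borderline sign is \emph{strictly} dominated. Adding the two identities yields the left-hand side of \eqref{2penese2}, with a $\Sigma_0$ contribution bounded by the last term on the right of \eqref{2penese2} and a boundary term on $B_R$ controlled by the degenerate Morawetz estimate \eqref{dmora} near $r=R$, which lies strictly between the horizon and the photon sphere and is hence non-degenerate there.

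It then remains to dispose of the source integral $\int_{\mathcal{A}_0^\tau}(\bar\partial\phi)F$. The term with $\partial_r\phi$ is harmless thanks to its $(r-M)$ factor; the zeroth-order term in $\phi$ is absorbed into the good bulk by the first Hardy inequality \eqref{1hardy} with $p=0$, which bounds $\int\phi^2$ by $\int(r-M)^2(\partial_r\phi)^2$ plus a $B_R$ term. For the $\psi$-source I would integrate $-(r-M)\partial_r\phi\cdot\frac{2}{r^2}\partial_v\psi$ and $-(r-M)\partial_r\phi\cdot\frac{2}{r^3}\slashed{\Delta}\psi$ by parts (in $r$ for the first; on the sphere and in $r$ for the second, using $\partial_r\slashed{\Delta}\psi=\slashed{\Delta}\phi$), then apply Cauchy--Schwarz with the degenerate weight so that all $\phi$-factors reappear as $(r-M)^2(\partial_r\phi)^2$ or $|\slashed{\nabla}\phi|^2$ and are absorbed. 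The residual $\psi$-factors are spacetime integrals of $\partial_v\psi$ and $\slashed{\nabla}\psi$; Poincar\'e converts any surviving $\psi^2$ into $|\slashed{\nabla}\psi|^2$, and the energies of $\psi$ and of $T\psi=\partial_v\psi$ — which again solves $\Box_g=0$ since $T$ is Killing — are bounded by the initial fluxes $J^P[\psi]$ and $J^P[T\psi]$ via \eqref{nfluxold}. This commutation with $T$ is exactly what produces the sum $\sum_{l=0,1}J^P[\partial_v^l\psi]$ on the right of \eqref{2penese2}. I would finish by letting $r_0\downarrow M$ so that $\mathcal{A}_0^\tau$ includes the horizon, verifying that all constants depend only on $M$.

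The main obstacle is the simultaneous balancing in the middle step: commuting with $\partial_r$ generates source terms that are borderline singular as $r\to M$, and the weight $(r-M)$ in $\bar\partial$ must be tuned \emph{exactly} so that the coercive bulk neither over-degenerates nor becomes too weak to absorb every source contribution after integration by parts. Pinning down the constants — in particular ensuring that the borderline-sign spacetime term from the $\partial_r$-integration by parts is strictly dominated, and that the Poincar\'e constant for $\ell\geq1$ is large enough to close the angular terms — is the delicate heart of the argument. It is also the reason the degenerate multiplier succeeds for $\psi_{\geq1}$, whereas the spherical mean had to be treated separately with the singular multiplier $S$ of Proposition \ref{prop1l0}.
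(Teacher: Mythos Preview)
Your overall strategy---commute with $Y=\partial_r$, apply a multiplier degenerating like $(r-M)$ at the horizon, and close via Hardy/Poincar\'e---is exactly the paper's. But two specific mechanisms are missing, and each is load-bearing.

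First, the claim that $\bar\partial=-(r-M)\partial_r$ ``produces the coercive spacetime bulk $(\partial_v\phi)^2+\dots$'' is not correct. With a pure $f^r\partial_r$ multiplier the coefficient of $(\partial_v\partial_r\psi_{\ge1})^2$ is $H_1=\partial_r f^v=0$, so no $(\partial_v\phi)^2$ bulk is generated; adding the $P$ multiplier does not help, since near the horizon $P\sim T-(r-M)\partial_r$ and $K^T=0$. The paper therefore does \emph{not} use $\bar\partial$ (or $P$) but a new multiplier $L_P=f^v\partial_v+f^r\partial_r$ with $f^v\simeq1$ and a \emph{steep} gradient $\partial_r f^v\simeq1/\sigma$; this produces $H_1\simeq1/\sigma$, a large $(\partial_v\phi)^2$ bulk that is indispensable for absorbing the cross term $H_9\sim(r-M)(\partial_v\phi)(\partial_r\phi)$ and several $\mathcal{E}$-terms. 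Without this $f^v$-component your scheme cannot absorb $H_9$, nor can it deliver the non-degenerate $(\partial_v\partial_r\psi_{\ge1})^2$ appearing on the left of \eqref{2penese2}.

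Second, the ``zeroth-order term in $\phi$'' cannot be handled by the Hardy inequality \eqref{1hardy} with $p=0$ as you propose: the constant $4$ is too large to be absorbed by the available $(r-M)^2(\partial_r\phi)^2$ bulk (whose coefficient is $\sim\tfrac{5}{2M^2}$). The paper instead applies Poincar\'e to $\phi=\partial_r\psi_{\ge1}$ itself (not to $\psi$), obtaining $\int\phi^2\le\tfrac{r^2}{2}\int|\slashed{\nabla}\phi|^2$ and comparing the resulting coefficient to $H_3=\tfrac{M^2}{2r^2}$ with $\beta$ chosen close to $1$; this is the precise place where the restriction $\ell\ge1$ enters, and the constants only just close (the paper's $H_{10}$ discussion). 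Your use of Poincar\'e is reserved for $\psi$, which is not where it is needed.
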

\begin{proof}
We consider the equation for $\partial_{r}\psi_{\meg 1}$, we have that:
\begin{equation}\label{yp}
\Box_g (\partial_{r}\psi_{\meg 1} ) = D' \partial_{r}\partial_{r}\psi_{\meg 1} + \frac{2}{r^2} \partial_{v}\psi_{\meg 1} - R' \partial_{r}\psi_{\meg 1} + \frac{2}{r} \slashed{\Delta} \psi_{\meg 1} .
\end{equation}
Now consider the vector field:
$$ L_P = f^v (r) \partial_v + f^r (r) \partial_r , $$
where $f^v$ and $f^r$ are smooth functions satisfying
$$ f^v \simeq 1, \quad \partial_r f^v \simeq \frac{1}{\sigma}, \quad f^r = - M \sqrt{D}, \quad \partial_r f^r = -\frac{M^2}{r^2}, $$
close to the horizon (in the region $\mathcal{A}$ where $r_0$ is chosen to be very close to $M$), with $f^v \equiv 1$, $f^r \equiv 0$ in $r \meg r_1$ for some $r_0 < r_1 < 2M $, and where $\sigma > 0$ is chosen to be small. Note that $L_{P}\sim P$, where $P$ is the vector field defined in Section \ref{flux}.

Applying Stokes' Theorem for $J^{L_P} [\partial_{r}\psi_{\meg 1}]$ we have that the following spacetime terms:
$$ K^{L_P} [\partial_{r}\psi_{\meg 1}] + \mathcal{E}^{L_P} [\partial_{r}\psi_{\meg 1}] = H_1 (\partial_{v}\partial_{r}\psi_{\meg 1})^2 + H_2 (\partial_{r}\partial_{r}\psi_{\meg 1} )^2 + H_3 |\slashed{\nabla} \partial_{r}\psi_{\meg 1} |^2 + $$ $$ + H_4 (\partial_{v}\partial_{r}\psi_{\meg 1} ) \cdot (\partial_{v}\psi_{\meg 1} ) + H_5 (\partial_{v}\partial_{r}\psi_{\meg 1} ) \cdot (\partial_{r}\psi_{\meg 1} ) + H_6 (\partial_{r}\partial_{r}\psi_{\meg 1} ) \cdot (\partial_{v}\psi_{\meg 1} ) + H_7 (\partial_{v}\partial_{r}\psi_{\meg 1} ) \cdot (\slashed{\Delta} \psi_{\meg 1} ) + $$ $$ + H_8 (\partial_{r}\partial_{r}\psi_{\meg 1} ) \cdot (\slashed{\Delta} \psi_{\meg 1} ) + H_9 (\partial_{v}\partial_{r}\psi_{\meg 1} ) \cdot (\partial_{r}\partial_{r}\psi_{\meg 1} ) + H_{10} (\partial_{r}\partial_{r}\psi_{\meg 1} ) \cdot (\partial_{r}\psi_{\meg 1} ) , $$
where close to the horizon
$$ H_1 = (\partial_r f^v ) \simeq \frac{1}{\sigma}, \quad H_2 = \frac{D (\partial_r f^r )}{2} - \frac{D f^r}{r} - \frac{3D' f^r}{2} =  \frac{5M^2 D}{2r^2} + \frac{M D^{3/2}}{r} , $$ $$ H_3 = -\frac{1}{2} (\partial_r f^r ) = \frac{M^2}{2r^2}, \quad H_4 = \frac{2f^v}{r^2} \simeq \frac{2}{r^2}, \quad H_5 = - f^v R' \simeq -R' , $$ $$ H_6 = \frac{2f^r}{r^2} = - \frac{2M\sqrt{D}}{r^2} , \quad H_7 = \frac{2f^v}{r} \simeq \frac{2}{r} , \quad H_8 = \frac{2f^r}{r} = - \frac{2M\sqrt{D}}{r} , $$ $$ H_9 = D (\partial_r f^v ) - D' f^v - \frac{2f^r}{r} \simeq \frac{D}{\sigma} - D' + \frac{2M\sqrt{D}}{r} , \quad H_{10} = M \sqrt{D} R' . $$ 
Here the functions $D(r),R(r)$ are given by \eqref{defsl0}. 
We will not deal with the terms away from the horizon since they can be bounded by a \textit{degenerate} Morawetz estimate away from the photon sphere for $J^T [\partial_{v}\psi_{\meg 1}]$.

The boundary fluxes are given by equation \eqref{fluxestpn} (recall that $L_{P}\sim P$). 

We next bound the spacetime terms. Note that $H_{1},H_{2},H_{3}$ are positive, have desired asymptotics at the horizon and are thus the most relevant terms. We next consider the terms $H_4 - H_{10}$, which will be shown to be error terms relative to $H_{1},H_{2},H_{3}$. 

\textbf{$H_4$}: We have that
$$ \int_{\cala_{0}^{\tau}} H_4 (\partial_{v}\partial_{r}\psi_{\meg 1} ) \cdot (\partial_{v}\psi_{\meg 1} ) dg_{\cala} \simeq \int_{\cala_{0}^{\tau}} \frac{2}{r^2} (\partial_{v}\partial_{r}\psi_{\meg 1} ) \cdot (\partial_{v}\psi_{\meg 1} ) dg_{\cala} \mik $$ $$ \mik \beta \int_{\cala_{0}^{\tau}}  (\partial_{v}\partial_{r}\psi_{\meg 1} )^2  dg_{\cala} + \frac{1}{\beta} \int_{\cala_{0}^{\tau}}  (\partial_{v}\psi_{\meg 1} )^2 dg_{\cala} , $$
and now we absorb the first term in the right hand by a choice of a $\beta = \beta (M)$ that is small enough, and we bound the second term by the Morawetz estimate.

Note that the $H_4$ term introduced only a $\beta$ loss (for $\beta$ very small) from the $H_1$ term.

\textbf{$H_5$}: We have that
\begin{equation}\label{h5}
 \int_{\cala_{0}^{\tau}} H_5 (\partial_{v}\partial_{r}\psi_{\meg 1} ) \cdot (\partial_{r}\psi_{\meg 1} ) dg_{\cala} \simeq \int_{\cala_{0}^{\tau}} -R' (\partial_{v}\partial_{r}\psi_{\meg 1} ) \cdot (\partial_{r}\psi_{\meg 1} ) dg_{\cala} \mik 
\end{equation}
$$ \mik \frac{1}{\beta} \int_{\cala_{0}^{\tau}}  (\partial_{v}\partial_{r}\psi_{\meg 1} )^2 dg_{\cala} + \beta \int_{\cala_{0}^{\tau}} (\partial_{r}\psi_{\meg 1} )^2 dg_{\cala} , $$
where $\beta = \beta (M)$ since in $\cala$ we have that 
$$R' = D'' + \frac{2D'}{r} - \frac{2D}{r^2} = \dfrac{MD'}{\sqrt{D} r^2} - \dfrac{2M\sqrt{D}}{r^3} + \frac{2D'}{r} - \frac{2D}{r^2} \simeq \frac{2M^2}{r^4} , $$
 and it is chosen to be small enough but much bigger than $\sigma$, so that the first term of \eqref{h5} can be absorbed in the right hand side, while for the second one we apply the third Hardy inequality \eqref{3hardy}
 $$ \beta \int_{\cala_{0}^{\tau}} (\partial_{r}\psi_{\meg 1} )^2 dg_{\cala} \mik $$ $$ \mik C \beta \int_{\rrr(0 , \tau ) \cap \{ r_0 \mik r_1 < 2M\}} (\partial_{r}\psi_{\meg 1} )^2 dg_{\rrr} + C \beta \int_{\cala_{0}^{\tau}} D \left( (\partial_{v}\partial_{r}\psi_{\meg 1} )^2 + (\partial_{r}\partial_{r}\psi_{\meg 1} )^2 \right) dg_{\cala} + $$ $$ + C \beta \int_{\rrr(0 , \tau ) \cap \{ r_0 \mik r_1 < 2M\}} D \left( (\partial_{v}\partial_{r}\psi_{\meg 1} )^2 + (\partial_{r}\partial_{r}\psi_{\meg 1} )^2 \right) dg_{\rrr} , $$
 where the first and the third term of the above estimate can be bounded by the Morawetz estimate for $\psi_{\meg 1}$ and $\partial_{v}\psi_{\meg 1}$ respectively, while the second one can be absorbed in the right hand side as 
 $$ C\beta D \leq \frac{H_1}{10} \mbox{ and } C\beta D \leq \frac{H_2}{10}  $$
in $\mathcal{A}_{0}^{\tau}$, 
 due to the higher degeneracy of $D$ on the horizon compared to the other terms for the first inequality, and due to the smallness of $\beta$ for the second one.
 
 Note that the $H_5$ term introduced only a $\beta$ loss (for $\beta$ very small) from the $H_1$ and $H_2$ terms.
 
\textbf{$H_6$}: We have that
$$ \int_{\cala_{0}^{\tau}} H_6 (\partial_{r}\partial_{r}\psi_{\meg 1} ) \cdot (\partial_{v}\psi_{\meg 1} ) dg_{\cala} = -\int_{\cala_{0}^{\tau}} \frac{2M\sqrt{D}}{r^2} (\partial_{r}\partial_{r}\psi_{\meg 1} ) \cdot (\partial_{v}\psi_{\meg 1} ) dg_{\cala} \mik $$ $$ \mik \beta \int_{\cala_{0}^{\tau}} D(\partial_{r}\partial_{r}\psi_{\meg 1} )^2 dg_{\cala} + \frac{1}{\beta} \int_{\cala_{0}^{\tau}} (\partial_{v}\psi_{\meg 1} )^2 dg_{\cala} , $$
for $\beta = \beta (M)$ small enough, and the first term can be absorbed in the right hand side, while the second one is bounded by the Morawetz estimate.

Note that $H_6$ term introduced only a $\beta$ loss (for $\beta$ very small) from the $H_1$ term.

\textbf{$H_7$}: We have that
\begin{equation}\label{h7}
\int_{\cala_{0}^{\tau}} H_7 (\partial_{v}\partial_{r}\psi_{\meg 1} ) \cdot (\slashed{\Delta} \psi_{\meg 1} ) dg_{\cala} \simeq  \int_{\cala_{0}^{\tau}} \frac{2}{r} (\partial_{v}\partial_{r}\psi_{\meg 1} ) \cdot (\slashed{\Delta} \psi_{\meg 1} ) dg_{\cala} = 
\end{equation}
 $$ =\int_{\si_{0}} \frac{2}{r} (\partial_{r}\psi_{\meg 1} ) \cdot (\slashed{\Delta} \psi_{\meg 1}  ) \cdot (\partial_v \cdot\textbf{n}_{\si} )  dg_{\si} - \int_{\si_{\tau}} \frac{2}{r} (\partial_{r}\psi_{\meg 1} ) \cdot (\slashed{\Delta} \psi_{\meg 1}  ) \cdot (\partial_v\cdot \textbf{n}_{\si} ) dg_{\si} - $$ $$ - \int_{\cala_{0}^{\tau}} \frac{2}{r}  (\partial_{r}\psi_{\meg 1} ) \cdot (\partial_v \slashed{\Delta} \psi_{\meg 1} ) dg_{\cala} ,$$
by an application of Stokes' Theorem.
 
For the last term of \eqref{h7} we have after integrating by parts on the sphere that
$$ - \int_{\cala_{0}^{\tau}} \frac{2}{r}  (\partial_{r}\psi_{\meg 1} ) \cdot (\partial_{v}\slashed{\Delta} \psi_{\meg 1} ) dg_{\cala} = \int_{\cala_{0}^{\tau}} \frac{2}{r}  \langle (\slashed{\nabla} \partial_{r}\psi_{\meg 1} ) ,  (\slashed{\nabla} \partial_{v}\psi_{\meg 1} ) \rangle dg_{\cala} \mik $$ $$ \mik \beta_1 \int_{\cala_{0}^{\tau}} |\slashed{\nabla} \partial_{r}\psi_{\meg 1} |^2 dg_{\cala} + \frac{1}{\beta_1} \int_{\cala_{0}^{\tau}} |\slashed{\nabla} \partial_{v}\psi_{\meg 1} |^2 dg_{\cala} ,$$
for some $\beta_1 = \beta_1 (M)$ that is chosen to be small enough so that the first term can be absorbed from the right hand side (from $H_{3}$), while the second term can be bounded by the $T$-flux of $\partial_{v}\psi_{\meg 1}$.

For the second term of \eqref{h7} we have after integrating by parts on the sphere that
$$ - \int_{\si_{\tau}} \frac{2}{r} (\partial_{r}\psi_{\meg 1} ) \cdot (\slashed{\Delta} \psi_{\meg 1}  ) \cdot (\partial_v \cdot\textbf{n}_{\si} ) dg_{\si} = \int_{\si_{\tau}} \frac{2}{r} \langle \slashed{\nabla} \partial_{r}\psi_{\meg 1} , \slashed{\nabla} \psi_{\meg 1}  \rangle \cdot (\partial_v \cdot \textbf{n}_{\si} ) dg_{\si} \mik $$ $$ \mik \beta_2 \int_{\si_{\tau}} |\slashed{\nabla} \partial_{r}\psi_{\meg 1} |^2 dg_{\si} + \frac{1}{\beta_2} \int_{\si_{\tau}} |\slashed{\nabla} \psi_{\meg 1} |^2 dg_{\si} , $$
where $\beta_2 = \beta_2 (M)$ is chosen to be small enough so that the first term can be absorbed from the flux term of the energy identity for for $J^{L_{P}}[\partial_{r}\psi_{\geq 1}]$ (recall that $L_{p}\sim P$ and hence the fluxes are given by \eqref{fluxestpn}), while the second term can be bounded by the $T$-flux of $\psi_{\meg 1}$. The first term of \eqref{h7} can be treated in a similar manner.

Note that the $H_7$ term introduced only a $\beta$ loss (for $\beta$ very small) from the $H_3$ term.

\textbf{$H_8$}: We have that
$$ \int_{\cala_{0}^{\tau}} H_8 (\partial_{r}\partial_{r}\psi_{\meg 1} ) \cdot (\slashed{\Delta} \psi_{\meg 1} ) dg_{\cala} = -\int_{\cala_{0}^{\tau}} \dfrac{2M\sqrt{D}}{r} (\partial_{r}\partial_{r}\psi_{\meg 1} ) \cdot (\slashed{\Delta} \psi_{\meg 1} ) dg_{\cala} $$ and so 
\begin{equation}\label{h8}
  -\int_{\cala_{0}^{\tau}} \dfrac{2M\sqrt{D}}{r} (\partial_{r}\partial_{r}\psi_{\meg 1} ) \cdot (\slashed{\Delta} \psi_{\meg 1} ) dg_{\cala} = 
\end{equation} 
  $$ = - \int_{\si_{0}} \dfrac{2M\sqrt{D}}{r} (\partial_{r}\psi_{\meg 1} ) \cdot (\slashed{\Delta} \psi_{\meg 1} ) \cdot (\partial_r \cdot \textbf{n}_{\si} )  dg_{\si} + \int_{\si_{\tau}} \dfrac{2M\sqrt{D}}{r} (\partial_{r}\psi_{\meg 1} ) \cdot (\slashed{\Delta} \psi_{\meg 1} ) \cdot (\partial_r \cdot \textbf{n}_{\si} ) dg_{\si} + $$ $$ + \int_{\cala_{0}^{\tau}} \partial_r  \left( \dfrac{2M\sqrt{D}}{r} \slashed{\Delta} \psi_{\meg 1} \right) \cdot (\partial_{r}\psi_{\meg 1} ) dg_{\cala} + \int_{\cala_{0}^{\tau}} \dfrac{2M\sqrt{D}}{r} (\partial_{r}\psi_{\meg 1} ) \cdot (\slashed{\Delta} \psi_{\meg 1} ) \cdot \frac{2}{r} dg_{\cala} , $$
by an application of Stokes' Theorem.  

For the last two spacetime terms we have that (after also using Stokes' Theorem on the sphere)
\begin{equation}\label{h81}
 \int_{\cala_{0}^{\tau}} \partial_r \left( \dfrac{2M\sqrt{D}}{r} \slashed{\Delta} \psi_{\meg 1} \right) \cdot (\partial_{r}\psi_{\meg 1} ) dg_{\cala} + \int_{\cala_{0}^{\tau}} \dfrac{2M\sqrt{D}}{r} (\partial_{r}\psi_{\meg 1} ) \cdot (\slashed{\Delta} \psi_{\meg 1} ) \cdot \frac{2}{r}dg_{\cala} = 
\end{equation} 
$$ = \int_{\cala_{0}^{\tau}} \dfrac{2M\sqrt{D}}{r} (\slashed{\Delta} \partial_r \psi_{\meg 1} - [\slashed{\Delta} , \partial_r ] \psi_{\meg 1} ) \cdot (\partial_{r}\psi_{\meg 1} ) dg_{\cala} + $$ $$ +  \int_{\cala_{0}^{\tau}}\left( \partial_r \left(  \dfrac{2M\sqrt{D}}{r} \right) + \frac{2}{r} \cdot \dfrac{2M\sqrt{D}}{r} \right) (\partial_{r}\psi_{\meg 1} ) \cdot (\slashed{\Delta} \psi_{\meg 1} ) dg_{\cala} = $$ $$ = \int_{\cala_{0}^{\tau}} \dfrac{2M\sqrt{D}}{r} (\partial_{r}\psi_{\meg 1} ) \cdot (\slashed{\Delta} \partial_r \psi_{\meg 1} ) dg_{\cala}  + \int_{\cala_{0}^{\tau}} \partial_r \left( \dfrac{2M\sqrt{D}}{r} \right) (\partial_{r}\psi_{\meg 1} ) \cdot (\slashed{\Delta} \psi_{\meg 1} ) dg_{\cala} = $$
  $$ = - \int_{\cala_{0}^{\tau}} \dfrac{2M\sqrt{D}}{r} |\slashed{\nabla} \partial_r \psi_{\meg 1} |^2 dg_{\cala} - \int_{\cala_{0}^{\tau}} \partial_r \left( \dfrac{2M\sqrt{D}}{r} \right) \langle \slashed{\nabla} \psi_{\meg 1} , \slashed{\nabla} \partial_{r}\psi_{\meg 1} \rangle dg_{\cala} , $$
where we used that $[ \slashed{\Delta} , \partial_r ] \psi_{\meg 1} = \dfrac{2}{r} \slashed{\Delta} \psi_{\meg 1}.$  
  
Since 
$$ \partial_r\left( \dfrac{2M\sqrt{D}}{r} \right) = - \dfrac{2M\sqrt{D}}{r^2} + \frac{2M^2}{r^3} \simeq \frac{2}{M} \mbox{ in $\cala$,} $$
the second term on the right hand side of \eqref{h81} can be treated as follows:
$$ - \int_{\cala_{0}^{\tau}} \partial_r \left( \dfrac{2M\sqrt{D}}{r} \right) \langle \slashed{\nabla} \psi_{\meg 1} , \slashed{\nabla} \partial_{r}\psi_{\meg 1} \rangle dg_{\cala} \mik \beta_1 \int_{\cala_0^{\tau}} |\slashed{\nabla} \partial_{r}\psi_{\meg 1} |^2 dg_{\cala} + \frac{1}{\beta_1} \int_{\cala_0^{\tau}} |\slashed{\nabla} \psi_{\meg 1} |^2 dg_{\cala} , $$
where $\beta_1 = \beta_1 (M)$ is chosen to be small enough so that the first term can be absorbed from the right hand (from $H_{3}$), while the second term can be bounded by the Morawetz estimate.

The first term on the right hand side of \eqref{h81} can be absorbed as well from $H_3$ since $\dfrac{2M\sqrt{D}}{r} \ll H_3$ close to the horizon.

Finally for the second term on the right hand side of \eqref{h8} we have after integrating by parts on the sphere that
$$ \int_{\si_{\tau}} \dfrac{2M\sqrt{D}}{r} (\partial_{r}\psi_{\meg 1} ) \cdot (\slashed{\Delta} \psi_{\meg 1} ) \cdot (\partial_r \cdot \textbf{n}_{\si} ) dg_{\si} = $$ $$ = -\int_{\si_{\tau}} \dfrac{2M\sqrt{D}}{r} \langle \slashed{\nabla} \partial_{r}\psi_{\meg 1} ) ,  \slashed{\nabla} \psi_{\meg 1}  \rangle \cdot (\partial_r \cdot \textbf{n}_{\si} ) dg_{\si} \mik $$ $$ \mik \int_{\si_{\tau}} \dfrac{4M^2 D}{r^2} |\slashed{\nabla} \partial_{r}\psi_{\meg 1} |^2 dg_{\si} + \int_{\si_{\tau}} |\slashed{\nabla} \psi_{\meg 1} |^2 dg_{\si} , $$
where we can absorb the first term from the energy identity for $J^{L_{P}}[\partial_{r}\psi_{\geq 1}]$ (since $\frac{4M^{2}D}{r^{2}}\rightarrow 0$ as $r\rightarrow M$), and the second one is bounded by the $T$-flux of $\psi_{\meg 1}$. We can treat the first term of the right hand side of \eqref{h8} similarly.

Note that the $H_8$ term introduced only a $\beta$ loss (for $\beta$ very small) from the $H_3$ term.

\textbf{$H_9$}: We have that
$$ \int_{\cala_{0}^{\tau}} H_9 (\partial_{v}\partial_{r}\psi_{\meg 1} ) \cdot (\partial_{r}\partial_{r}\psi_{\meg 1} ) dg_{\cala} \mik $$ $$ \mik \int_{\cala_{0}^{\tau}}\sqrt{D} \left(\frac{\sqrt{D}}{\sigma} + \frac{2M}{r^2} + \frac{2}{r} \right) (\partial_{v}\partial_{r}\psi_{\meg 1} ) \cdot (\partial_{r}\partial_{r}\psi_{\meg 1} ) dg_{\cala} \mik $$ $$ \mik \beta \int_{\cala_{0}^{\tau}} D  (\partial_{r}\partial_{r}\psi_{\meg 1} )^2 dg_{\cala} + \frac{1}{\beta} \int_{\cala_{0}^{\tau}} (\partial_{v}\partial_{r}\psi_{\meg 1} )^2 dg_{\cala} , $$
where $\beta = \beta (M)$ is chosen to be very small so that the first term of the last inequality can be absorbed by $H_2$, but also to satisfy $\frac{1}{\beta} \ll \frac{1}{\sigma}$ (which is possible by choosing $\sigma$ to be extremely small from the beginning) so that the second term can be absorbed by $H_1$ as well.

Note that the $H_9$ term introduced only a $\beta$ loss (for $\beta$ very small) from the $H_1$ and $H_2$ terms.

\textbf{$H_{10}$}: We have that
\begin{equation}\label{h10}
 \int_{\cala_{0}^{\tau}} H_{10} (\partial_{r}\partial_{r}\psi_{\meg 1} ) \cdot (\partial_{r}\psi_{\meg 1} ) dg_{\cala} \mik 
 \end{equation}
 $$ \mik \frac{1}{2\beta} \int_{\cala_{0}^{\tau}} (H_{10})^2 \frac{M^2}{2} (\partial_{r}\partial_{r}\psi_{\meg 1} )^2 dg_{\cala} + \frac{\beta}{2} \int_{\cala_{0}^{\tau}} \frac{2}{M^2} (\partial_{r}\psi_{\meg 1} )^2 dg_{\cala} , $$
where we just applied the Cauchy--Schwarz inequality, for some $\beta$ that will be chosen later.

We deal first with the second term for which we apply Poincar\'{e}'s inequality:
$$\frac{\beta}{2} \int_{\cala_{0}^{\tau}} \frac{2}{M^2} (\partial_{r}\psi_{\meg 1} )^2 dg_{\cala} \mik \frac{\beta}{2} \frac{(M+\beta' )^2}{M^2} \int_{\cala_{0}^{\tau}}  |\slashed{\nabla} \partial_{r}\psi_{\meg 1} |^2 dg_{\cala} ,$$
where we denoted the $r_0$ given in the definition of $\cala$ by $r_0 = M + \beta'$ for some very small $\beta' > 0$. Now we note that for an appropriate choice of $\beta$ close to 1 we can have that:
$$ H_3 = \frac{M^2}{2r^2} > \frac{\beta}{2} \frac{(M+\beta' )^2}{M^2} \Rightarrow H_3 - \frac{\beta}{2} \frac{(M+\beta' )^2}{M^2} > c > 0 , $$
so in the end the second term of the right hand side of \eqref{h10} can be absorbed by the $H_3$ term. Note that this is possible also because of the fact that from all the previous terms $H$ terms that we examined, we only had $\beta''$ loss in $H_3$ for $\beta'' > 0$ being extremely small.

We now look at the first term of the right hand side of \eqref{h10}. We have that:
$$ (H_{10} )^2 = M^2 D (R' )^2 = M^2 D \left( \frac{6M^2}{r^4} - \frac{4M}{r^3} + \frac{2M \sqrt{D}}{r^3} - \frac{2M D}{r^2} \right)^2 . $$
In $\cala$, the first two terms of $R'$ are the important ones, since they are much bigger than the last two. We recall the following precise estimate:
$$ \frac{6M^2}{r^4} - \frac{4M}{r^3} + \frac{2M \sqrt{D}}{r^3} - \frac{2M D}{r^2} \simeq \frac{2M^2}{r^4} \mbox{ in $\cala$}, $$
that we also used in the estimate for $H_4$. 

We would like to show that:
$$ H_2 = \frac{5M^2 D}{2r^2} + \frac{MD^{3/2}}{r} > \frac{1}{2\beta}\cdot\frac{M^2}{2} \cdot \frac{4M^6 D}{r^8} = \frac{M^8 D}{\beta r^8} . $$
Indeed, by our choice of $\beta$ (which as we mentioned it is chosen to be close to 1), we can have that:
$$ \frac{5M^2}{2r^2} > \frac{M^8 D}{\beta r^8} , $$
and this proves the required estimate (note that the term $\frac{MD^{3/2}}{r}$ is of lower order in $(r-M)$ and hence cannot be used in the proof of the estimates).

\end{proof}

\begin{remark}It is important to emphasize that the exact value of the coefficients plays a crucial role in bounding the above terms. This reveals a new structure of degenerate horizons.
\end{remark}

\subsection{Finishing the proofs of the theorems}
\label{sec:FinishingTheProofsOfTheTheorems}

We now have all the tools necessary to complete the proofs of Theorems \ref{theo1}, \ref{theo2} and \ref{theo3}. 

\begin{proof}[Proof of Theorem \ref{theo1}]
Clearly, in view of the degenerate Morawetz estimate \eqref{dmora} and the Hardy inequality \eqref{1hardy} (that controls the zeroth order term) it suffices to show that 
\begin{equation}
\begin{split}
\int_{\mathcal{A}}(\partial_{r}\psi)^{2}\, dg_{\mathcal{A}}\leq  & \ C \cdot D^{w}_{\Sigma_{0}}[\psi], 
\end{split}
\label{e1theo2}
\end{equation}
where the norm $D^{w}_{\Sigma_{0}}[\psi]$ is defined by \eqref{norminitial1}. We decompose 
\[ \psi=\psi_{0}+\psi_{\geq 1}\]
as in Section \ref{sec:PoincarEInequality}. In view of the orthogonality of $\psi_{0}$ and $\psi_{\geq 1}$ in $L^{2}\left(\mathbb{S}^{2}\right)$ we have 
\begin{equation}
\int_{\mathcal{A}}(\partial_{r}\psi)^{2}\, dg_{\mathcal{A}}=\int_{\mathcal{A}}(\partial_{r}\psi_{0})^{2}\, dg_{\mathcal{A}}+\int_{\mathcal{A}}(\partial_{r}\psi_{\geq 1})^{2}\, dg_{\mathcal{A}}.
\label{decoproof}
\end{equation}
For the spherically symmetric term we apply Proposition \ref{prop1l0}. For the term $\psi_{\geq 1}$ we apply the Hardy inequality \eqref{1hardy} for $p=0$ combined with the degenerate Morawetz estimate to get
\begin{equation}
\int_{\mathcal{A}}(\partial_{r}\psi_{\geq 1})^{2}\, dg_{\mathcal{A}} \leq C \int_{\mathcal{A}}(r-M)^{2}\cdot \big(\partial_{r}\partial_{r}\psi_{\geq 1}\big)^{2}\, dg_{\mathcal{A}}+C\int_{\Sigma_{0}}\Big(J^{T}[\psi_{\geq 1}]\cdot \textbf{n}_{\Sigma_{0}}\Big)\, dg_{\Sigma_{0}}.
\label{l1proof1}
\end{equation}
The first integral on the right hand side can be bounded by Proposition \ref{2penes2}. By adding the two estimates and using \eqref{decoproof} we obtain the desired results.

\end{proof}

\begin{proof}[Proof of Theorem \ref{theo2}]
Let $\psi$ be a spherically symmetric solution to the wave equation with smooth compactly supported initial data such that the conserved charge 
\begin{equation}
H[\psi]=\int_{\Sigma_{0}\cap\mathcal{H}}\left(\partial_{r}\psi+\frac{1}{M}\psi\right)\, d\omega \ =\,1.
\label{proof1theo2}
\end{equation}
Then we clearly have that
\begin{equation}
\int_{\Sigma_{0}\cap\left\{r\leq R\right\}}\left[\frac{1}{(r-M)}\cdot \left(\partial_{r}\psi+\frac{1}{r}\psi\right)^{2}\right]dr\,d\omega= \infty. 
\label{p2theo22}
\end{equation}
On the other hand, we have established that for the spherically symmetric solution $\psi$ the exact \textbf{identity} \eqref{l0f2} holds, which can be re-written as 
\begin{equation}
\begin{split}
 \int_{\mathcal{A}_{r_{0}}}&\left[\left(\frac{3}{2}-\frac{(r-M)}{r}\right)\cdot (\partial_{r}\psi_{0})^{2}+\left(\frac{r-M}{r^{3}}\right)\cdot \psi_{0}^{2}\right]dr\,dv\,d\omega \\
&+\int_{H_{r_{0}}}\left[\frac{1}{2r^{2}}\cdot\psi_{0}^{2}\right]dv\,d\omega +\int_{B_{R}}\Big[K_{R}[\psi_{0}]\Big]dv\,d\omega\\
=& \int_{\Sigma_{0}\cap\left\{r_{0}\leq r\leq R\right\}}\left[\frac{r^{2}}{(r-M)}\cdot\left(\partial_{r}\psi_{0}+\frac{1}{r}\psi_{0}\right)^{2}\right]dr\,d\omega 
\end{split}
\label{l0f2theoproof}
\end{equation}
where $K_{R}[\psi]$ is given by \eqref{krpsi}.
In view of \eqref{p2theo22}, the right hand side of \eqref{l0f2theoproof}  tends to infinity as $r_{0}\rightarrow M$. Since the initial data of $\psi$ are assumed to be smooth and compacly supported,  we have that the $T$-flux of $\psi$ through $\Sigma_{0}$ is finite. Hence, the integrals in \eqref{l0f2theoproof} over the hypersurface $B_{R}$  is uniformly (in $r_{0}$) bounded using the averaging principle and the degenerate Morawetz estimate \eqref{dmora}. On the other hand, if we assume that
\begin{equation}
\int_{\mathcal{A}}(\partial_{r}\psi)^{2}\, dg_{\mathcal{A}} <\infty
\label{contra1proof2}
\end{equation}
then the integral in \eqref{l0f2theoproof} over the hypersurface $H_{r_{0}}$  is uniformly (in $r_{0}$) bounded using the Hardy inequality \eqref{2hardy} and  the zeroth order term in the integral over $\mathcal{A}_{r_{0}}$ is uniformly (in $r_{0}$) bounded using the Hardy inequality \eqref{1hardy}. 
Hence, if we assume \eqref{contra1proof2} then the limit as $r_{0}\rightarrow M$ of left hand side of \eqref{l0f2theoproof} is finite, which, since $\mathcal{A}_{r_{0}}$ tends to $\mathcal{A}$ as $r_{0}\rightarrow M$, contradicts \eqref{contra1proof2}!   This completes the proof of the theorem. 
\end{proof}

\begin{proof}[Proof of Theorem \ref{theo3}] We consider initial data for $\psi$ with vanishing  conserved charge $H[\psi]=0$ but which are singular in the sense that  \eqref{optimalre2} holds. Clearly such the data are not $C^{2}$.  

 The proof in this case mimics that of the proof of Theorem \ref{theo2}. Indeed we use the identity \eqref{l0f2theoproof} for the spherical mean of $\psi$ and argue by contradiction. Assuming that \eqref{contra1proof2} holds we obtain that left hand side of \eqref{l0f2theoproof} is uniformly bounded in $r_{0}$. This however contradicts the fact that the right hand side of \eqref{l0f2theoproof} blows up as $r_{0}\rightarrow M$.

\end{proof}

\subsection{Remarks about the singular multiplier $S$ }
\label{sec:Discussion}

The proof of Theorem \ref{theo1} heavily relies on the use of the singular vector field 
\[S=-\frac{1}{r-r_{hor}}\cdot \partial_{r}\]
as a multiplier vector field. Here $r_{hor}$ is the radius of the event horizon and hence this vector field is singular on the event horizon. 

As was noted in Section \ref{sec:TheEstimateForTheSphericalMean}, the most critical term in the analysis is the integral $I_{4}$ given by \eqref{i4}. Note that the boundeness of the limit as $r_{0}\rightarrow r_{hor}$ (where the event horizon is located at $r=r_{hor}$) of this integral in the extremal case is a new feature of the geometry of degenerate horizons. 
In other words, the product $\partial_{r}\psi\cdot\partial_{v}\psi$ oscillates in time faster and faster as we approach the event horizon forcing thus the singular integral $I_{4}$ to have a finite limit. Such an oscillation cease to hold in the sub-extremal case.  Indeed, we will next show that for generic solutions to the wave equation on sub-extremal black holes we have 
\begin{equation}
\lim_{r_{0}\rightarrow r_{hor}}\int_{\mathcal{A}_{r_{0}}}\frac{1}{(r-r_{hor})}\partial_{r}\psi \cdot \partial_{v}\psi\, dg_{\mathcal{A}_{r_{0}}} =\infty.
\label{dis1}
\end{equation} 
Let us consider initial data in the class $\mathcal{C}_{data}$ given by
\[ \mathcal{C}_{data}=\Big\{\text{smooth data supported on }\Sigma_{0}\cap\left\{r_{hor}<R_{1}\leq r\leq R_{2}\right\}    \Big\},\]
for some constants $R_{1},R_{2}>r_{hor}$. 
 Note that in the sub-extremal case we \textit{schematically} have
\begin{equation}
\Box_{g}\psi= (r-r_{hor})\cdot \partial_{r}\partial_{r}\psi+\partial_{r}\partial_{v}\psi+\partial_{v}\psi+\partial_{r}\psi+\lapp\psi=0
\label{schema1}
\end{equation}
and
\begin{equation}
dr =(r-r_{hor})\cdot dv \text{ : on }H_{r_{0}}. 
\label{schema2}
\end{equation}
We next restrict our attention to spherically symmetric solutions to the wave equation on subextremal backgrounds. Let us \textbf{assume that for \underline{all} such solutions with initial data in the class $\mathcal{C}_{data}$ the following integral is in fact bounded}:
\begin{equation}
\lim_{r_{0}\rightarrow r_{hor}}\int_{\mathcal{A}_{r_{0}}}\frac{1}{(r-r_{hor})}\partial_{r}\psi \cdot \partial_{v}\psi\, dg_{\mathcal{A}_{r_{0}}} <\infty,
\label{discontra}
\end{equation}
Then, we clearly have that if $\psi$ has data in the class $\mathcal{C}_{data}$  then \[\varphi=\partial_{v}\psi\] also has data in $\mathcal{C}_{data}$.  Hence, 
\begin{equation}
\lim_{r_{0}\rightarrow r_{hor}}\int_{\mathcal{A}_{r_{0}}}\frac{1}{(r-r_{hor})}\partial_{r}\varphi \cdot \partial_{v}\varphi\, dg_{\mathcal{A}_{r_{0}}} <\infty. 
\label{dis2}
\end{equation}
By integrating with respect to $\partial_{v}$ we obtain
\begin{equation}
\begin{split}
\lim_{r_{0}\rightarrow r_{hor}}&\int_{\mathcal{A}_{r_{0}}}\frac{1}{(r-r_{hor})}\partial_{r}\varphi \cdot \partial_{v}\varphi\, dr\,dv\,d\omega  \\=&\int_{H_{r_{0}}}\frac{1}{r-r_{hor}} \cdot\varphi \cdot \partial_{r}\varphi \, dr\,d\omega -\int_{\Sigma_{0}}\frac{1}{r-r_{hor}} \cdot\varphi \cdot \partial_{r}\varphi \, dr\,d\omega\\&-\int_{\mathcal{A}_{r_{0}}}\frac{1}{(r-r_{hor})}\partial_{r}\partial_{v}\varphi \cdot \varphi\, dr\,dv\,d\omega.
\label{dis3}
\end{split}
\end{equation}
In view of \eqref{schema2} we have
\begin{equation}
\begin{split}
\int_{H_{r_{0}}}\frac{1}{r-r_{hor}} \cdot\varphi \cdot \partial_{r}\varphi \, dr\,d\omega =&\int_{H_{r_{0}}} \varphi \cdot \partial_{r}\varphi \, dv\,d\omega \\&
\leq \int_{H_{r_{0}}} \Big[\varphi^{2} + (\partial_{r}\varphi)^{2}\Big] dv\,d\omega\\
&\leq C\int_{\Sigma_{0}}\Big(\sum_{k=1,2}J^{N}[N^{k}\varphi]\cdot \textbf{n}_{\Sigma_{0}}\Big)dg_{\Sigma_{0}}<\infty,
\label{dis4}
\end{split}
\end{equation}
where in the last step we used the redshift estimate of Dafermos and Rodnianski \cite{lecturesMD}. Note that the bound is uniform in $r_{0}$.

The integral over $\Sigma_{0}$ in \eqref{dis3} is finite since the integrand quantity depends only on the initial data of $\varphi$ and by assumption these data are supported away from the horizon. Furthermore, in view of \eqref{schema1}, we schematically obtain
\begin{equation}
\begin{split}
\int_{\mathcal{A}_{r_{0}}}&\frac{1}{(r-r_{hor})}\partial_{r}\partial_{v}\varphi \cdot \varphi\, dr\,dv\,d\omega \\=&\int_{\mathcal{A}_{r_{0}}}\frac{1}{(r-r_{hor})}\Big[(r-r_{hor})\cdot\partial_{r}\partial_{r}\varphi+\partial_{v}\varphi+\partial_{r}\varphi\Big]\cdot \varphi\, dr\,dv\,d\omega
\\=&\int_{\mathcal{A}_{r_{0}}}\left[\partial_{r}\partial_{r}\varphi+   \frac{1}{(r-r_{hor})}\partial_{v}\varphi+\frac{1}{(r-r_{hor})}\partial_{r}\varphi\right]\cdot \varphi\, dr\,dv\,d\omega
\label{dis5}
\end{split}
\end{equation}
Similarly as above we have
\begin{equation}
\begin{split}
\int_{\mathcal{A}_{r_{0}}}&\partial_{r}\partial_{r}\varphi \cdot \varphi\, dr\,dv\,d\omega\\ \leq &\int_{\mathcal{A}_{r_{0}}}  \Big[\varphi^{2} + (\partial_{r}\partial_{r}\varphi)^{2}\Big] dv\,d\omega\\
\leq &   \ C\int_{\Sigma_{0}}\Big(\sum_{k=1,2}J^{N}[N^{k}\varphi]\cdot \textbf{n}_{\Sigma_{0}}\Big)dg_{\Sigma_{0}}<\infty,
\label{dis6}
\end{split}
\end{equation}
where we used once again the redshift estimate. Furthermore, in view of \eqref{schema2}
\begin{equation}
\begin{split}
\int_{\mathcal{A}_{r_{0}}}&\frac{1}{(r-r_{hor})}\partial_{v}(\varphi^{2})\, dr\,dv\,d\omega\\ =&
\int_{H_{r_{0}}}\frac{1}{(r-r_{hor})}\varphi^{2}\, dr\,d\omega-\int_{\Sigma_{0}}\frac{1}{(r-r_{hor})}\varphi^{2}\, dr\,d\omega\\ =&
\int_{H_{r_{0}}}\varphi^{2}\, dv\,d\omega-\int_{\Sigma_{0}}\frac{1}{(r-r_{hor})}\varphi^{2}\, dr\,d\omega\\ \leq &\ 
C\int_{\Sigma_{0}}\Big(J^{N}[\varphi]\cdot \textbf{n}_{\Sigma_{0}}\Big)dg_{\Sigma_{0}}-\int_{\Sigma_{0}}\frac{1}{(r-r_{hor})}\varphi^{2}\, dr\,d\omega <\infty
\label{dis7}
\end{split}
\end{equation}
since $\varphi$ is initially supported away from the horizon, where to bound the first integral we have used the Hardy inequality \eqref{2hardy} and the non-degenerate Morawetz estimate for sub-extremal black holes. Regarding the remaining term we schematically have the following
\begin{equation}
\begin{split}
\int_{\mathcal{A}_{r_{0}}}&\frac{1}{(r-r_{hor})}\cdot\partial_{r}\varphi\cdot \varphi\, dr\,dv\,d\omega\\ =&
\int_{\mathcal{A}_{r_{0}}}\frac{1}{(r-r_{hor})}\cdot\partial_{r}\partial_{v}\psi\cdot \partial_{v}\psi\, dr\,dv\,d\omega\\ =&
\int_{\mathcal{A}_{r_{0}}}\frac{1}{(r-r_{hor})}\cdot\Big[(r-r_{hor})\cdot\partial_{r}\partial_{r}\psi+\partial_{v}\psi+\partial_{r}\psi\Big]\cdot \partial_{v}\psi\, dr\,dv\,d\omega\\ =&
\int_{\mathcal{A}_{r_{0}}}\left[\partial_{r}\partial_{r}\psi+\frac{1}{(r-r_{hor})}\cdot\partial_{v}\psi+\frac{1}{(r-r_{hor})}\cdot\partial_{r}\psi\right]\cdot \partial_{v}\psi\, dr\,dv\,d\omega.
\label{dis8}
\end{split}
\end{equation}
As before, by Cauchy--Schwarz and the redshift estimate we have 
\[\int_{\mathcal{A}_{r_{0}}}\partial_{r}\partial_{r}\psi\cdot \partial_{v}\psi\, dr\, dv\, d\omega \leq\  C\int_{\Sigma_{0}}\Big(\sum_{k=1,2}J^{N}[N^{k}\varphi]\cdot \textbf{n}_{\Sigma_{0}}\Big)dg_{\Sigma_{0}}<\infty.  \]
By our assumption \eqref{discontra} we also have
\begin{equation}
\lim_{r_{0}\rightarrow r_{hor}}\int_{\mathcal{A}_{r_{0}}}\frac{1}{(r-r_{hor})}\partial_{r}\psi \cdot \partial_{v}\psi\, dr\,dv\,d\omega\ < \ \infty.
\label{dis10}
\end{equation}
Hence, in view of \eqref{dis2}, \eqref{dis3}, \eqref{dis4}, \eqref{dis5}, \eqref{dis6}, \eqref{dis7}, \eqref{dis8} and \eqref{dis10},  finally obtain that 
\begin{equation}
\lim_{r_{0}\rightarrow r_{hor}}\int_{\mathcal{A}_{r_{0}}}\frac{1}{(r-r_{hor})}\cdot (\partial_{v}\psi)^{2}\, dr\,dv\,d\omega\ < \ \infty.
\label{disfcontra}
\end{equation}
However, the integrand quantity is non-negative definite and moreover for generic $\psi$ we have that there is a $v_{1}>0$ such that \begin{equation}
(\partial_{v}\psi)(v_{1},r=r_{hor})\neq 0.
\label{nongenonh}
\end{equation}
One can explicitly construct such a solution to the wave equation by imposing appropriate characteristic initial data. 
  Since the function $\frac{1}{(r-r_{hor})}$ is not in $L^{1}_{\text{loc}}$ in $\mathcal{A}$, the  condition \eqref{nongenonh}  immediately forces the limit on the left hand side of \eqref{disfcontra} to be infinite, which is a contradiction.

\section{Higher order estimates and stable trapping}
\label{sec:HigherOrderEstimatesAndStableTrapping}

In this section we prove Theorem \ref{theo4}. We show that for generic smooth initial data there is no non-degenerate higher order Morawetz estimate for solutions arising from smooth initial data on a Cauchy hypersurface $\Sigma_{0}$ supported in a compact region $\left\{M<\tilde{R_{1}}\leq r\leq r\leq \tilde{R_{2}}\right\}$ away from $\Sigma_{0}\cap \mathcal{H}$, where $\mathcal{H}=\left\{r=M\right\}$
denotes the event horizon.

\begin{proof}[Proof of Theorem \ref{theo4}]

 First divide $\Sigma_{0}$ in the following regions:
$$ \si_0 \doteq  \si_0^1 \cup \si_0^2\cup \si_{0}^3  $$
where
$$ \si_0^1 = \si_0 \cap \{ M \mik r \mik \tilde{R_{1}} \}, \quad  \si_0^2 = \si_0 \cap \{ \tilde{R_{1}} \mik r \mik \tilde{R_{2}} \}, \quad \si_0^3 = \si_0 \cap \{ r \meg \tilde{R_{2}} \}, $$
where $M < \tilde{R_{1}} < \tilde{R_{2}} < \infty$.
\begin{figure}[H]
\begin{center}
\includegraphics[width=3.5in]{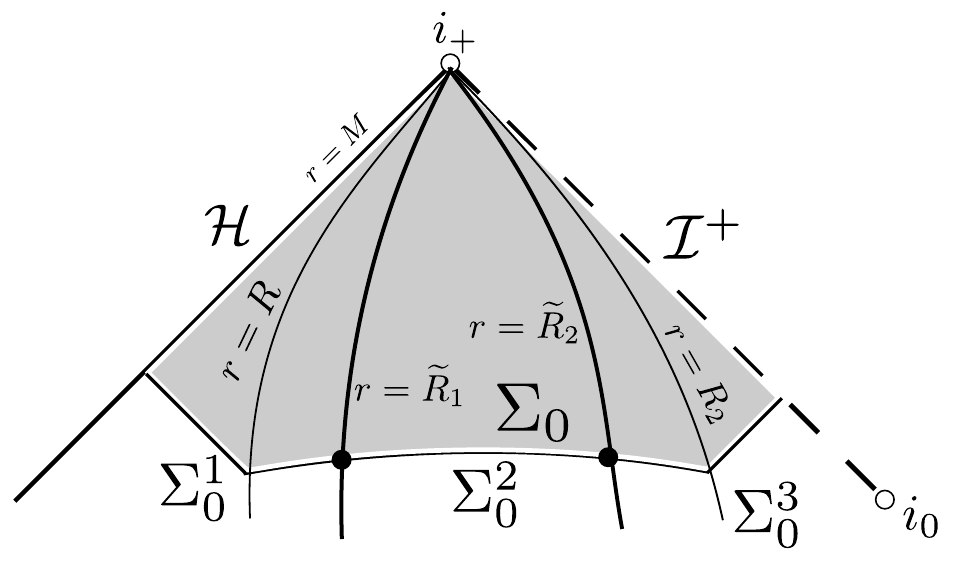}
\end{center}
\end{figure}
We consider the following data on $\si_0$:
\begin{equation}
\begin{split}
 \left. \phi \right|_{ \si_0^1} = 1 , & \quad \quad \left. \phi \right|_{\si_0^3} = 0 , \\
 \left. \partial_v \phi \right|_{ \si_0^1} = 0 ,& \quad \quad \left. \partial_v \phi \right|_{\si_0^3} = 0 ,
\end{split}
\label{phiconstruction}
\end{equation}
and $\left. \phi \right|_{\si_0^2}$ and $\left. \partial_v \phi \right|_{\si_0^2}$ are smooth and spherically symmetric. There data give rise to a spherically symmetric solution $\phi$ that is smooth in the domain of outer communications. Note also that 
\begin{equation}
H[\phi]=\frac{1}{M}\neq 0.
\label{consphi}
\end{equation}
We  now consider:
\begin{equation}
\psi:=\partial_{v}\phi
\label{psiconstruction}
\end{equation}
which is also a solution to the wave equation by the fact that $[ \partial_v , \Box_g ] = 0$. By Proposition 1 of \cite{aretakis2012} we have that the initial data of $\psi$ on $\Sigma_{0}$ are smooth and supported in a compact region $\left\{M<\tilde{R_{1}}\leq r\leq \tilde{R_{2}}\right\}$. Hence, it follows  that $H [\psi ] = 0$.  More specifically we have that:
$$ \left. \psi \right|_{\si_0^1} = 0 , \quad \quad \left. \psi \right|_{\si_0^3} = 0 , $$
$$ \left. \partial_v \psi \right|_{\si_0^1} = 0 , \quad \quad \left. \partial_v \psi \right|_{\si_0^3} = 0 ,$$
and $\left. \psi \right|_{\si_0^2}$ and $\left. \partial_v \psi \right|_{\si_0^2}$ are smooth and spherically symmetric.

We will next show that for  $\psi$ given by \eqref{psiconstruction} we have 
\begin{equation}
\int_{\mathcal{A}}\big(\partial_{r}\partial_{r}\psi\big)^{2}\, dg_{\mathcal{A}}=\infty.
\label{whattoshow}
\end{equation}
In view of the wave equation for $\phi$ we have
\begin{equation}
-2\partial_{r}\partial_{v}\phi=D\cdot\partial_{r}\partial_{r}\phi+\frac{2}{r}\partial_{v}\phi+R\cdot\partial_{r}\phi
\label{wephi}
\end{equation}
and hence commuting with the vector field $\partial_{r}$ we obtain
\begin{equation}
-2\partial_{r}\partial_{r}\partial_{v}\phi=D\cdot\partial_{r}\partial_{r}\partial_{r}\phi+\Big(\partial_{r}D+R\Big)\cdot \partial_{r}\partial_{r}\phi + \partial_{r}R\cdot \partial_{r}\phi+\partial_{r}\left(\frac{2}{r}\partial_{v}\phi\right).
\label{wephi2}
\end{equation}
By virtue of \eqref{defsl0} we have
\[r^{2}\cdot\big(\partial_{r}D+R\big)=2M\frac{(r-M)}{r}+2(r-M)= 4(r-M)-\frac{2}{r}\cdot(r-M)^{2}\]
and
\[r^{2}\cdot\partial_{r}R=2-\frac{4}{r}\cdot(r-M). \]
Hence, if we define
\begin{equation}
\mathcal{E}_{1}[\phi]:=
2r\cdot\partial_{r}\partial_{v}\phi-2\partial_{v}\phi-\frac{2}{r}(r-M)^{2}\cdot \partial_{r}\partial_{r}\phi-\frac{4}{r}(r-M)\cdot \partial_{r}\phi.
\label{errorhot1}
\end{equation}
then we obtain
\begin{equation}
4r^{2}\cdot |\partial_{r}\partial_{r}\partial_{v}\phi|=(r-M)^{2}\cdot\partial_{r}\partial_{r}\partial_{r}\phi+4(r-M)\cdot \partial_{r}\partial_{r}\phi + 2 \cdot \partial_{r}\phi+\mathcal{E}_{1}[\phi]
\label{wephi2}
\end{equation}
and, therefore,
\begin{equation}
4r^{2}\cdot |\partial_{r}\partial_{r}\partial_{v}\phi|=\partial_{r}\partial_{r}\Big((r-M)^{2}\cdot\partial_{r}\phi\Big)+\mathcal{E}_{1}[\phi].
\label{wephi3}
\end{equation}
By the Cauchy--Schwarz inequality we obtain
\begin{equation}
16r^{4}\cdot \Big(|\partial_{r}\partial_{r}\partial_{v}\phi|\Big)^{2}\geq (1-\epsilon)\cdot\Bigg(\partial_{r}\partial_{r}\Big((r-M)^{2}\cdot\partial_{r}\phi\Big)\Bigg)^{2}+\left(1-\frac{1}{\epsilon}\right)\cdot \Big(\mathcal{E}_{1}[\phi]\Big)^{2}
\label{wephi3}
\end{equation}
for sufficiently small $\frac{1}{2}>\epsilon>0$. Therefore, after choosing such $\epsilon$ and using that $r$ is bounded in $\mathcal{A}$, there exists a positive constant $C$ that depends only on $M$ such that 
\begin{equation}
\begin{split}
&\int_{\mathcal{A}}\Bigg(\partial_{r}\partial_{r}\Big((r-M)^{2}\cdot\partial_{r}\phi\Big)\Bigg)^{2}\, dg_{\mathcal{A}}\leq  \\
&\ \ \ \ \ \ \ \ \ \ \ \ \ C\int_{\mathcal{A}}r^{4}\cdot \Big(|\partial_{r}\partial_{r}\partial_{v}\phi|\Big)^{2}\, dg_{\mathcal{A}}\  +C\int_{\mathcal{A}}\Big(\mathcal{E}_{1}[\phi]\Big)^{2}\, dg_{\mathcal{A}}
\label{wephi4}
\end{split}
\end{equation}
We will next show that 
\begin{equation}
\int_{\mathcal{A}}\Big(\mathcal{E}_{1}[\phi]\Big)^{2}\, dg_{\mathcal{A}} <\infty.
\label{finite1}
\end{equation}
First we observe that in view of the wave equation \eqref{wel0} for $\phi$
we can write
\[-\frac{2}{r}(r-M)^{2}\cdot\partial_{r}\partial_{r}\phi=4r\cdot\partial_{r}\partial_{v}\phi+4\partial_{v}\phi+\frac{4}{r}(r-M)\cdot\partial_{r}\phi \]
and hence  $\mathcal{E}_{1}[\phi]$ becomes
\begin{equation}
\mathcal{E}_{1}[\phi]=6r\cdot\partial_{r}\partial_{v}\phi+2\partial_{v}\phi.
\label{enewexp}
\end{equation}
Therefore,
\begin{equation}
\begin{split}
\int_{\mathcal{A}}\Big(\mathcal{E}_{1}[\phi]\Big)^{2}\, dg_{\mathcal{A}}\leq &\ C\int_{\mathcal{A}}\Big((\partial_{v}\phi)^{2}+(\partial_{r}\partial_{v}\phi)\Big)^{2}\, dg_{\mathcal{A}}
\\ \leq & \ C\int_{\mathcal{A}}\Big((\partial_{v}\phi)^{2}+(\partial_{r}\psi)\Big)^{2}\, dg_{\mathcal{A}}.
\label{efinite1}
\end{split}
\end{equation}
In view of the degenerate Morawetz theorem established in \cite{aretakis2} we have 
\begin{equation}
\int_{\mathcal{A}}(\partial_{v}\phi)^{2}\, dg_{\mathcal{A}} \leq \ C\int_{\Sigma_{0}}\left(J^{N}[\phi]\cdot \textbf{n}_{\Sigma_{0}} \right)dg_{\Sigma_{0}}<\infty
\label{eee1}
\end{equation}
since $\phi$ is a smooth on $\Sigma_{0}$. Furthermore, in view of 
the Proposition \ref{prop1l0} we have 
\begin{equation}
\begin{split}
\int_{\mathcal{A}}(\partial_{r}\psi)^{2}\, dg_{\mathcal{A}}\leq  & \ C \cdot D_{\Sigma_{0}}[\psi] <\infty
\end{split}
\label{eee2}
\end{equation}
since \[D_{\Sigma_{0}}[\psi] = \ \int_{\Sigma_{0}\cap\left\{r\leq R\right\}}\left[\frac{1}{(r-M)}\cdot \Big(\partial_{r}(r\psi)\Big)^{2}\right]dr\,d\omega+\int_{\Sigma_{0}}\Big(J^{T}[\psi]\cdot \textbf{n}_{\Sigma_{0}}\Big)\, dg_{\Sigma_{0}}<\infty \]
since $\psi$ is a smooth function on $\Sigma_{0}$ supported on the set $\left\{M<\tilde{R_{1}}\leq r\leq \tilde{R_{2}}\right\}$. Clearly, \eqref{finite1} follows from \eqref{eee1} and \eqref{eee2}.

Arguing by contradiction, let us assume that we in fact have
\begin{equation}
\int_{\mathcal{A}}(\partial_{r}\partial_{r}\psi)^{2}\, dg_{\mathcal{A}} <\infty.
\label{contra1}
\end{equation}
Then, in view of \eqref{wephi4}, \eqref{finite1} and \eqref{contra1}, we have 
\begin{equation}
\begin{split}
\int_{\mathcal{A}}\Bigg(\partial_{r}\partial_{r}\Big((r-M)^{2}\cdot\partial_{r}\phi\Big)\Bigg)^{2}\, dg_{\mathcal{A}}<\infty.
\label{wephi5}
\end{split}
\end{equation}
Since
\[ \partial_{r}\Big((r-M)^{2}\cdot\partial_{r}\phi\Big)\sim (r-M) \]
in $\mathcal{A}$ , we can apply the Hardy inequality \eqref{3hardy} for $p=-2$ to get
\begin{equation}
\int_{\mathcal{A}}\frac{1}{(r-M)^{2}}\cdot\Bigg(\partial_{r}\Big((r-M)^{2}\cdot\partial_{r}\phi\Big)\Bigg)^{2}\, dg_{\mathcal{A}}\leq C \int_{\mathcal{A}}\Bigg(\partial_{r}\partial_{r}\Big((r-M)^{2}\cdot\partial_{r}\phi\Big)\Bigg)^{2}\, dg_{\mathcal{A}}
\label{hardycontra1}
\end{equation}
Since
\[ \Big((r-M)^{2}\cdot\partial_{r}\phi\Big)^{2}\sim (r-M)^{4} \]
in $\mathcal{A}$ , we can apply once again the Hardy inequality \eqref{3hardy} for $p=-4$ to get
\begin{equation}
\int_{\mathcal{A}}\frac{1}{(r-M)^{4}}\cdot \Big( (r-M)^{2}\cdot \partial_{r}\phi\Big)^{2} dg_{\mathcal{A}} \leq C\int_{\mathcal{A}}\frac{1}{(r-M)^{2}}\cdot\Bigg(\partial_{r}\Big((r-M)^{2}\cdot\partial_{r}\phi\Big)\Bigg)^{2}\, dg_{\mathcal{A}}
\label{hardycontra2}
\end{equation}
Hence, in view of the estimates \eqref{contra1}, \eqref{hardycontra1} and \eqref{hardycontra2} we get
\begin{equation}
\int_{\mathcal{A}}
 \big( \partial_{r}\phi\big)^{2} <\infty
\label{hardycontra22}
\end{equation}
On the other hand, by virtue of \eqref{consphi} we have that the conserved charge $H[\phi]\neq 0$ and hence by Theorem \ref{theo2} it follows that 
\begin{equation}
\int_{\mathcal{A}}
 \big( \partial_{r}\phi\big)^{2} =\infty
\label{hardycontra44}
\end{equation}
which of course contradicts \eqref{hardycontra22}. Hence, our assumption that the integral \eqref{contra1} is finite is wrong and this completes the proof for $k=2$.  

For $k\geq 3$ we simply argue by repeatedly using the  Hardy inequality \eqref{1hardy} to obtain
\[\int_{\mathcal{A}}\big(\partial_{r}\partial_{r}\psi\big)^{2}dg_{\mathcal{A}} \leq C_{k}\int_{\mathcal{A}}\big(\partial_{r}^{k}\psi\big)^{2}dg_{\mathcal{A}} \]
which implies that the integral on the left hand side must also be infinite. 

Clearly, \eqref{whattoshow} holds generically. Indeed, if we consider a solutions $\Psi$ to the wave equation such that 
\[\int_{\mathcal{A}}\big(\partial_{r}\partial_{r}\Psi\big)^{2}dg_{\mathcal{A}} <\infty \]
then \eqref{whattoshow} holds for $\Psi+\epsilon\psi$ for arbitrarily small $\epsilon$ and $\psi$ given by \eqref{psiconstruction} above. This completes the proof.

 \end{proof}

 \section{Relation with the stability theory of MOTS}
	 \label{sec:RelationWithTheoryOfMOTS}
	
	In this section we provide a connection of our findings to the stability theory of marginally outer trapped surfaces (MOTS).

	Our analysis shows that a necessary condition for the existence of a non-degenerate Morawetz estimate up to and including the event horizon for solutions $\psi$ to \eqref{we1} is the vanishing of the conserved charge
		\begin{equation}
		H[\psi]=\int_{S_{\tau}}\left(Y\psi+\frac{1}{M}\psi\right),
		\label{cha}
		\end{equation}
 where $S_{\tau}=\Sigma_{\tau}\cap\mathcal{H}$. If the conserved charge does not vanish then $\psi$ fails to satisfy a Morawetz estimate regardless of its degree of regularity. This is in stark contrast with the trapping effect on the photon sphere where Morawetz estimates hold as long as loss of regularity is allowed. 
The necessity of the vanishing of the charge \eqref{cha} on \textbf{each} section $S_{\tau}$ implies that a global trapping effect takes place on degenerate horizons. We shall interpret this in terms of the stability theory of the sections $S_{v}$, the latter seen as marginally outer trapped surfaces on extremal Reissner--Nordstr\"{o}m.

It was shown	in  \cite{aretakisglue}  that the conserved charge \eqref{cha} arises from the kernel of the elliptic operator
\[\mathcal{O}_{S_{\tau}}\psi=\lapp\psi+2\zeta^{\sharp}\cdot\nabb\psi+\left[2\divv\,\zeta^{\sharp}+\partial_{v}( tr\underline{\chi})+\frac{1}{2}( tr\underline{\chi} )( tr\chi)\right]\cdot\psi,\]
where $\zeta$ denotes the torsion of the section $S_{\tau}$ and $tr\chi,tr\underline{\chi}$ the outgoing and ingoing null mean expansions, respectively. This operator was introduced in \cite{aretakiselliptic} where it was shown that in the case of Killing horizons it reduces to 
\begin{equation}
\mathcal{O}_{S_{\tau}}\Psi=\lapp\Psi+\divv\big(2\Psi\cdot\zeta\big)+tr\underline{\chi}\cdot\kappa\cdot\Psi,
\label{operatorextremal}
\end{equation}
where  $\kappa$ is the surface gravity. 

In view of work of Mars\footnote{We acknowledge private communication with Marc Mars. } \cite{mars} the operator \eqref{operatorextremal} coincides with the \textit{stability operator} on the marginally outer trapped surface $S_{\tau}$. In the case of sub-extremal black holes we have that $tr\underline{\chi}<0$ and $\kappa>0$ and hence the principal eigenvalue of the operator  $\mathcal{O}_{S_{\tau}}$ must be strictly negative. Hence the whole spectrum of the operator $\mathcal{O}_{S_{\tau}}$ is strictly negative. This implies that all sections $S_{\tau}$ of sub-extremal event horizons are \textit{stable} as MOTS. 

On the other hand, in the extremal case we have $\kappa=0$ which implies that the principal eigenvalue of $\mathcal{O}_{S_{\tau}}$ is zero. Hence, there is a \textit{unique function} (up to a constant) that belongs in the kernel of $\mathcal{O}_{S_{\tau}}$ and all the other eigenfunctions correspond to strictly negative eigenvalues. Hence, the sections $S_{\tau}$ of are stable as MOTS in all but \textbf{exactly one} transversal perturbation with respect to which they are marginally stable. That is, there is a transversal perturbation $S_{f_{kernel}}$ of the degenerate horizon with respect to which the outgoing null mean expansion is stationary (i.e.~has a critical point).
\begin{figure}[H]
\begin{center}
\includegraphics[width=3.5in]{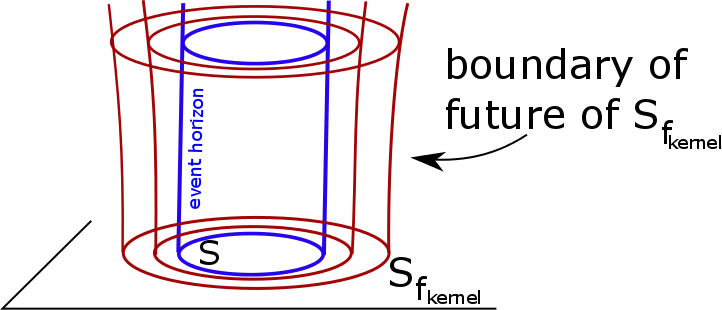}
\end{center}
\caption{The perturbation $S_{f_{kernel}}$ of the degenerate horizon with respect to which the outgoing null mean expansion is stationary.}
\end{figure}
 In other words, \textit{there is a unique perturbation of the degenerate horizon with respect to which the futures of the sections do \textbf{not} expand to \textbf{second} order}. Waves follow these non-expanding characteristic hypersurfaces leading to  the $L^{2}$-concentration of Theorem \ref{theo2}.

 Recalling that the conserved charge \eqref{cha} arises from the same perturbation (due to the kernel of $\mathcal{O}_{S_{\tau}}$), we conclude the failure of the expansion of the horizon in this direction induces the \textit{global} trapping effect.

\section{Acknowledgements}
\label{sec:Acknowledgements}

We would like to thank Mihalis Dafermos and Georgios Moschidis for several insightful discussions. The second author (S.A) acknowledges support through NSF grant DMS-1265538. The third author (D.G.) acknowledges support by the European Research Council grant no. ERC-2011-StG 279363-HiDGR.

Department of Mathematics, University of California, Los Angeles, CA 90095, United States, yannis@math.ucla.edu

\bigskip

Princeton University, Department of Mathematics, Fine Hall, Washington Road, Princeton, NJ 08544, United States, aretakis@math.princeton.edu

\bigskip

Department of Mathematics, University of Toronto Scarborough 1265 Military Trail, Toronto, ON, M1C 1A4, Canada, aretakis@math.toronto.edu

\bigskip

Department of Mathematics, University of Toronto, 40 St George Street, Toronto, ON, Canada, aretakis@math.toronto.edu

\bigskip

Department of Mathematics, Imperial College London, SW7 2AZ, London, United Kingdom, dejan.gajic@imperial.ac.uk

\bigskip

Department of Applied Mathematics and Theoretical Physics, University of Cambridge, Wilberforce Road, Cambridge CB3 0WA, United Kingdom, dg405@cam.ac.uk

\end{document}